\numberwithin{equation}{section}
\newtheorem{theorem}{Theorem}[section]
\newtheorem{proposition}[theorem]{Proposition}
\newtheorem{lemma}[theorem]{Lemma}
\newtheorem{remark}[theorem]{Remark}
\newcommand{\bdm}{\begin{displaymath}}
\newcommand{\edm}{\end{displaymath}}
\newcommand{\bdn}{\begin{eqnarray}}
\newcommand{\edn}{\end{eqnarray}}
\newcommand{\bay}{\begin{array}{c}}
\newcommand{\eay}{\end{array}}
\newcommand{\ben}{\begin{enumerate}}
\newcommand{\een}{\end{enumerate}}
\newcommand{\beq}{\begin{equation}}
\newcommand{\eeq}{\end{equation}}
\newcommand{\bml}[1]{\begin{multline} #1 \end{multline}}
\newcommand{\bmln}[1]{\begin{multline*} #1 \end{multline*}}
\newcommand{\one}{\mathds{1}}
\newcommand{\hilb}{\mathscr{H}}
\newcommand{\f}{\frac}
\newcommand{\F}{\mathcal{F}}
\newcommand{\pot}{\mathcal{G}}
\newcommand{\formo}{\F_{0,\betav}}
\newcommand{\formol}{\F_{0,\beta}^{(n)}}
\newcommand{\formfo}{\F_{0}}
\newcommand{\hamo}{H_{0,\betav}}
\newcommand{\hamol}{H_{0,\beta}^{(n)}}
\newcommand{\hamfree}{H_{\mathrm{free}}}
\newcommand{\uform}{F}
\newcommand{\qformla}{\Phi_{\la}}
\newcommand{\qformo}{\Phi_{0}}
\newcommand{\gammaol}{\Gamma_0}
\newcommand{\xiplus}{\tilde{\xi}^{+}}
\newcommand{\ximinus}{\tilde{\xi}^{-}}
\newcommand{\xximinus}{\Xi^{-}_{n}}
\newcommand{\dom}{\mathscr{D}}
\newcommand{\N}{\mathbb{N}}
\newcommand{\R}{\mathbb{R}}
\newcommand{\ldf}{L^{2}_{\mathrm{f}}}
\newcommand{\hf}{H^{1}_{\mathrm{f}}}
\newcommand{\hdf}{H^{2}_{\mathrm{f}}}
\newcommand{\lf}{\left}
\newcommand{\ri}{\right}
\newcommand{\xv}{\mathbf{x}}
\newcommand{\yv}{\mathbf{y}}
\newcommand{\qv}{\mathbf{q}}
\newcommand{\kv}{\mathbf{k}}
\newcommand{\kvp}{\mathbf{k}^{\prime}}
\newcommand{\pv}{\mathbf{p}}
\newcommand{\betav}{\bm{\beta}}
\newcommand{\diff}{\mathrm{d}}
\newcommand{\la}{\lambda}
\newcommand{\mstar}{m^{\star}}
\newcommand{\mstaroo}{m^{\star}}
\newcommand{\mss}{m^{\star\star}}
\newcommand{\mssol}{m^{\star\star}_{\ell}}
\newcommand{\mssoo}{m^{\star\star}}
\newcommand{\betal}{\beta_{n}}
\newcommand{\tx}{\textstyle}
\newcommand{\disp}{\displaystyle}
\newtheorem{teo}{Theorem}[section]
\newtheorem{lem}{Lemma}[section]
\newtheorem{pro}{Proposition}[section]
\newtheorem{defi}{Definition}[section]
\newcounter{remark}[section]
\newcommand{\be}{\begin{equation}}
\newcommand{\ee}{\end{equation}}
\newcommand{\bey}{\begin{eqnarray}}
\newcommand{\eey}{\end{eqnarray}}
\newcommand{\eps}{\varepsilon}
\newcommand{\bC}{{\mathbb C}}
\newcommand{\donothing}[1]{}
\newcommand{\n}{\noindent}
\newcommand{\vs}{\vspace{0.5 cm}}
\begin{document}





\title{A Class of Hamiltonians for a Three-Particle Fermionic System at Unitarity}

\author{M. Correggi$^{1}$, G. Dell'Antonio$^{2,3}$, D. Finco$^{4}$, A. Michelangeli$^{3,5}$, A. Teta$^2$   \\ \\
\normalsize{{\it 1. Dipartimento di Matematica e Fisica, Universit\`a degli Studi Roma Tre}}\\
\normalsize{{\it Largo San Leonardo Murialdo 1, 00146 Roma, Italy}}\\
\normalsize{{\it 2. Dipartimento di Matematica, ''Sapienza'' Universit\`a di Roma}}\\
\normalsize{{\it P.le A. Moro 5, 00185 Roma, Italy}}\\
\normalsize{{\it 3. Scuola Internazionale Superiore di Studi Avanzati}}\\
\normalsize{{\it Via Bonomea  265, 34136 Trieste, Italy}}\\
\normalsize{{\it 4. Facolt\`a di Ingegneria, Universit\`a Telematica Internazionale Uninettuno}}\\
\normalsize{{\it Corso V. Emanuele II 39,  00186 Roma, Italy}}\\
\normalsize{{\it 5. Institute of Mathematics, LMU Munich}}\\
\normalsize{{\it Theresienstr. 39, 80333 Munich, Germany}}}

\date{\today}

\maketitle

\begin{abstract}
We consider a quantum mechanical three-particle system made of two identical fermions of mass one and a different particle of mass $m$, where each fermion interacts via a zero-range force with the different particle. In particular we study the unitary regime, i.e., the case of infinite two-body scattering length. The   Hamiltonians describing the system are, by definition,  self-adjoint extensions  of the free Hamiltonian restricted on smooth functions vanishing at the two-body coincidence planes, i.e., where the positions of two interacting particles coincide.

\n
It is known that  for $m$ larger than a critical value  $m^* \simeq (13.607)^{-1}$  a self-adjoint and lower bounded  Hamiltonian $H_0$ can be constructed, whose domain is characterized in terms of the standard point-interaction boundary condition at each coincidence plane. 

\n
Here we prove that for $m\in(m^*,m^{**})$, where $m^{**}\simeq  (8.62)^{-1}$, there is a further family of self-adjoint and lower bounded Hamiltonians $H_{0,\beta}$, $\beta \in \R$, describing the system. Using a quadratic form method, we give a rigorous construction of such Hamiltonians and we show that the elements of their domains satisfy a further  boundary condition, characterizing the singular behavior when the positions of all the three particles coincide.
\end{abstract}

\section{Introduction}

The theoretical analysis of the quantum mechanical three-body problem with pairwise zero-range interactions is a subject to considerable interest in the physics of cold atoms. This is essentially due to the recently achieved  possibility to realize experimental conditions where the interaction is well described by a zero-range force, in particular in the unitary limit. Roughly speaking,  unitary limit means  that the two-body interaction is characterized by a zero-energy resonance or, equivalently, by an infinite value of the scattering length.    The correct definition of the model, the occurrence of the Efimov effect and the analysis of the stability problem, i.e., the existence of a finite lower bound for the Hamiltonian, have been widely studied both in the physical \cite{bh,cmp,ct,cw,km,tc,wc1,wc2} and in the mathematical \cite{CDFMT,DFT,fm,FT,ms,m1,m3,m4} literature. 

Here we consider the $2+1$ fermionic  problem, where two identical fermions of mass one interact with a particle of different nature and mass $m$ at unitarity. Setting $\hbar=1$, the formal Hamiltonian of the system reads
\beq
	\mathcal{H}=- \frac{1}{2m} \Delta_{\xv_0} - \frac{1}{2} \Delta_{\xv_1} - \frac{1}{2} \Delta_{\xv_2} + \mu \delta(\xv_1 - \xv_0) + \mu \delta(\xv_2 - \xv_0),
\eeq
where the  fermions are labelled by $ 1, 2$ and $\mu \in \R$. We denote vectors in $ \R^d $  by bold-face symbols $ \xv $, while we will set $ x = |\xv| $. Extracting the center of mass motion and introducing the relative coordinates $\yv_i=\xv_0 - \xv_i$, $i=1,2$,  we can reduce to study the formal operator
\beq\label{formal ham}
H= - \Delta_{\yv_1} - \Delta_{\yv_2} - \f{2}{m+1} \nabla_{\yv_1} \cdot \nabla_{\yv_2} +\mu \delta (\yv_1) + \mu \delta (\yv_2).
\eeq
Due to the symmetry constraint, the Hamiltonian acts on the Hilbert space given by square integrable functions which are antisymmetric under exchange $ \yv_1 \to \yv_2 $, i.e.,
\beq
	\label{Hilbert}
	\ldf(\R^6) = \lf\{ \lf. \Psi \in L^2(\R^6) \: \ri| \: \Psi(\yv_1,\yv_2) = - \Psi(\yv_2,\yv_1) \ri\}.
\eeq


There are several possible ways to give  the formal expression \eqref{formal ham} a mathematically rigorous meaning, where the formal coupling constant $\mu$ must be replaced by a new renormalized parameter. 
A typical approach which exploits the theory of self-adjoint extensions of symmetric operators goes through the analysis of the symmetric operator given by the free kinetic energy of the three particles acting on the domain of regular functions vanishing on the planes $ \{ \yv_i = 0 \} $, $ i = 1,2 $, i.e., where the point interaction is supported. Such operators are not self-adjoint but only symmetric, and they admit a huge number of self-adjoint extensions (both deficiency indices equal $ + \infty $).  Among all possible extensions a special role is played by a rather small class, the so-called Skornyakov Ter-Martirosyan (STM) operators, which are the natural generalization to the three-body case of the Schr\"odinger  operator with a point interaction in the two-body case, and come up with a domain of functions that, as we shall discuss in a moment, have the expected asymptotic behaviour prescribed by physical heuristics, whenever two particles come on top of each other.

This is a major point that we want to emphasize already at this qualitative stage, before proceeding with the details and remarking it further once the appropriate notation will be set up. Indeed, the result of our work here is two-fold. On the one hand we construct a class of self-adjoint operators for our three-body system (see Proposition \ref{pro: STM}), by means of the corresponding quadratic forms, which are all of the STM form, namely reproduce the ``physical'' boundary condition in the vicinity of the coincidence planes $ \{ \yv_i=0\} $, and in which the scattering length of the two-body, zero-range interaction is set to infinity (the so-called ``unitary regime''). On the other hand, we show that, precisely in the same regime of masses determined in the physical literature through formal arguments, in the domain of each such Hamiltonian certain (``most singular'') wave-functions display a further asymptotic behaviour in the vicinity of the \emph{triple coincidence point} $ \{ \yv_1 = \yv_2 = 0 \} $, a behaviour that we can cast in the form under which is usually known in the physical literature (see Proposition \ref{pro: charge asympt} and Remark \ref{rem: triple point}).

The existence, for a special regime of masses, of additional STM extensions besides the natural (``Friedrichs'') extension that we constructed and studied in a previous work of ours \cite{CDFMT} was already known in the case of finite scattering length. We now build these extra STM extensions also at unitarity, we reproduce for each of them the physical triple-point asymptotics, and we show that in the present case of infinite scattering length the regime of masses for the existence of such STM extensions is \emph{larger} than the corresponding regime found recently in the mathematical literature \cite{m3,m4} for finite scattering length, and it coincides precisely with the regime of masses predicted by physical heuristics.

In order to develop these arguments, let us quickly revisit first the construction of a point interaction between two particles in 3 dimensions and then the natural STM generalisation for three particles. For a two-particle system   it is known (see, e.g., \cite{al}) that,  extracting the center of mass motion  and denoting by $\xv$ the relative coordinate,    the self-adjoint operator describing the Hamiltonian $h_{\alpha}$ with zero-range interaction has a domain consisting of functions $ \Psi(\xv) $, which have the following asymptotics when $ \xv \to 0$:
\beq\label{bcde}
	 \Psi(\xv) =  \frac{q}{4 \pi |\xv|} + \alpha q + o(1),	\qquad	\mbox{as } |\xv| \to 0,
\eeq
where $ q \in \bC $ is a complex number uniquely associated with $ \Psi $ and $ \alpha \in \R $ labels the self-adjoint extension. Moreover, $h_{\alpha}$ acts as the free Laplacian outside the origin. More precisely, the Hamiltonian can be defined as follows
\begin{eqnarray}
	\label{one center delta}
	\dom(h_{\alpha}) & = & \lf\{ \lf. \psi \in L^2(\R^3) \: \ri| \: \psi = \phi + q G, \, \phi \in \dot{H}^2(\R^3), \, q \in \bC, \, \phi(0) =  \alpha q \ri\},	\nonumber	\\
	h_{\alpha} \psi & = & - \Delta \phi,
\end{eqnarray}
where
\beq\label{fgre}
	G(\xv) : = \frac{1}{4 \pi |\xv|}
\eeq
and $ \dot{H}^n(\R^d) $ denotes the homogeneous space of functions $ u $ such that $ |\kv|^n \hat{u}(\kv) \in L^2(\R^d) $. We recall that functions in  $ \dot{H}^n(\R^d) $ are in particular continuous, so that the value of the function at the origin $ \phi(0) $ is well defined. Moreover it can be easily seen that the  condition  $\phi(0)=\alpha q$ in (\ref{one center delta}) is equivalent to (\ref{bcde}). 
The physical meaning of the parameter $\alpha$ is related to the notion of scattering length, which for such model  equals $(-4\pi \alpha)^{-1}$. The free Hamiltonian is recovered for $\alpha \rightarrow \infty$, while particularly relevant for our purposes is the case $\alpha =0$, corresponding to infinite scattering length. Notice that the Hamiltonian $ h_0 $  admits a zero-energy resonance,  provided by the function (\ref{fgre}).  Indeed $ G $ locally belongs to $ \dom(h_0) $ or, more precisely,  it satisfies all the conditions in \eqref{one center delta} but the required decay at $ |\xv| \to \infty $ to ensure that the function belongs to $ L^2(\R^3) $ ($ G \notin L^2(\R^3) $ but $ G \in L^2_{\mathrm{loc}}(\R^3) $). Moreover, according to the action of the operator described in \eqref{one center delta}
\bdm
	h_0 G = 0.
\edm

\n
The above considerations lead to define a STM operator $ \tilde{H}_{\alpha} $ for our $2+1$ fermionic system on a domain given by functions $\Psi$ belonging to  
\bdm
	 \hdf \lf(\R^6 \setminus \cup_{i=1,2} \{ \yv_i =0\} \ri) \cap \ldf(\R^6),
\edm
such that
\beq
	\label{plane asympt}
	\Psi(\yv_1,\yv_2) =  \frac{(-1)^{i+1}\xi(\yv_j)}{4 \pi | \yv_i|} + \alpha (-1)^{i+1}\xi(\yv_j) + o(1),	\qquad	\mbox{as } | \yv_i| \to 0, \: i,j = 1,2, \, i \neq j,
\eeq
where $\xi$ is a smooth (i.e., $C_0^{\infty}(\R^3)$) complex function defined on $\R^3$, uniquely associated with $\Psi$. Moreover, $\tilde{H}_{\alpha}$ acts as the free Hamiltonian outside the planes $\{\yv_i=0\}$. 
The STM operator is symmetric but not self-adjoint and then one can investigate the existence of  possible self-adjoint extensions. 

In \cite{CDFMT} we have approached the problem using  the theory  of quadratic forms in Hilbert spaces (for a different approach see, e.g., \cite{m1,m3,m4}). In particular we have introduced the following quadratic form which can be easily seen (see, e.g., \cite{cft}) to be the most natural one associated to the STM operator $ \tilde{H}_{0} $
\begin{eqnarray}
	\label{formfo domain}
 	\dom[\formfo] &=& \lf\{ \lf. u \in \ldf(\R^6) \: \ri| \: u = w + \pot \xi, \, w \in \dot{\hf}(\R^6), \,\xi \in \dot{H}^{1/2}(\R^3) \ri\},	\\
 	\formfo[u] & = & \uform[w] + 2 \qformo[\xi],	\label{forac}		\\
 	\uform[w] & = & \int_{\R^6} \diff \kv_1 \diff \kv_2 \: \lf( k_1^2 + k_2^2 + \tx\frac{2}{m+1} \kv_1 \cdot \kv_2 \ri) \lf| \hat{w}(\kv_1,\kv_2) \ri|^2,	\label{uform} \\
 	\qformo[\xi] &= & 2\pi^2 \frac{\sqrt{m(m+2)}}{m+1} \int_{\R^3} \diff \pv \: p \: |\hat{\xi}(\pv)|^2 + \int_{\R^6} \diff \pv \diff \qv  \: \frac{\hat{\xi}^*(\pv) \hat{\xi}(\qv)}{p^2 + q^2 + \frac{2}{m+1} \pv \cdot \qv}, \label{qformo}
\end{eqnarray}
where $\hat{f}$ denotes the Fourier transform of $f$ (recall  that the label $ 0 $ does not stand for the free Hamiltonian but rather for the Hamiltonian with $\alpha=0$, i.e., with  infinite two-body scattering length). Moreover the ``potential''  generated by the ``charge'' $ \xi $ (we will often use this terminology borrowed from electrostatics) is  defined by
\beq
	\label{potential}
	\lf(\widehat{\pot \xi}\ri)\lf(\kv_1,\kv_2\ri) = \frac{\hat{\xi}(\kv_1) - \hat{\xi}(\kv_2)}{k_1^2 + k_2^2 + \tx\frac{2}{m+1} \kv_1 \cdot \kv_2}.
\eeq

\n
One of the  result proven in \cite{CDFMT} is that there exists a critical value of the mass  $ \mstaroo $, approximately given by 
\beq
	\mstaroo \simeq 0.0735 = (13.607)^{-1},
\eeq
such that, if $ m > \mstaroo $  the quadratic form $ \formfo $ is closed and bounded from below (in fact positive) on its domain, so defining a self-adjoint operator $ H_0 $ which turns out to be the Friedrichs extension of $\tilde{H}_0$. On the other hand, for $ m < \mstaroo $  it is shown that the form is unbounded from below, which implies that the operator can not be at the same time self-adjoint and bounded from below (see, e.g., \cite[Proposition 4.1]{FT}).


\n
In the stable case $ m > \mstaroo $, $ H_0 $ is the following operator
\begin{eqnarray}
	\label{H0 domain}
	\dom(H_0) & = & \lf\{ \lf. u \in \ldf(\R^6) \: \ri| \: u = w + \pot \xi, w \in \dot{\hdf}(\R^6), \xi \in \dom_0, 
		 \int_{\R^3} \diff \kvp \: \hat{w}(\kvp,\kv) =  \lf( \widehat{\Gamma_0 \xi} \ri)(\kv) \ri\}		\nonumber	\\
	H_0 u & = & \hamfree w,	
\end{eqnarray}
where $ \hamfree $ is the free Laplacian in the center of mass coordinates, i.e., the multiplication operator by $ k_1^2 + k_2^2 + \tx\frac{2}{m+1} \kv_1 \cdot \kv_2 $ in Fourier transform, $ \Gamma_0 $ is  the self-adjoint operator in $L^2(\R^3)$ associated with the closed and positive quadratic form $ \Phi_0 $, i.e.,
\beq
	\label{Gamma0 intro}
	\lf( \widehat{\Gamma_0 \xi} \ri)(\pv) : = 2\pi^2  \sqrt{\frac{m(m+2)}{(m+1)^2}} \, p \: \hat{\xi}(\pv) + \int_{\R^3} \diff \qv  \: \frac{\hat{\xi}(\qv)}{p^2 + q^2 + \frac{2}{m+1} \pv \cdot \qv},
\eeq
and $ \dom_0 $ is its natural domain of self-adjointness. 
It is easy to verify that if $ u \in \dom(H_0) $ then the condition \eqref{plane asympt}  is satisfied for $\xi \in \dom_0$ and $\alpha=0$. Indeed, applying the Fourier transform, \eqref{plane asympt} can be translated into the boundary condition
\beq
	\int_{|k'| \leq N} \diff \kvp \: \hat{u}(\kv,\kvp) = N \: \xi(\kv) + o(1),	\qquad		\mbox{as } N \to \infty,
\eeq
which is   satisfied by any function in $ \dom(H_0) $. Such results extend to the case of $ N $ fermions of one species interacting with a different test particle, although in that case the condition on the mass for stability is not optimal.


At least at a numerical or heuristic level, it is however known, as discussed, e.g., in \cite{ ct,wc1},  
that there are other  possible extensions of the STM operator for $ \alpha = 0 $ (for the characterization of the  extensions in the case $\alpha \neq 0$ see \cite{m3,m4}). More precisely, there exists $\mssoo > \mstar$, approximately given by 
\begin{center}
	{$ \mssoo \simeq 0.116 = (8.62)^{-1}$},
\end{center}
such that for
\beq
	\mstaroo < m < \mssoo 
\eeq
there exists a family of  self-adjoint extensions of $\tilde{H}_0$ whose domains  are given by functions decomposing as in $ \dom(H_0) $ but with singular charges $\xi$ not belonging to $H^{1/2}(\R^3)$.  More precisely, their asymptotic behavior is characterized as follows
\be\label{bibi}
\hat{\xi}(\kv) = \tilde{\xi}_{n}(k) Y_{1}^n(\vartheta_k,\varphi_k) \,, \;\;\;\;\;\; n=-1, 0,1
\ee
 with
\beq
	\label{charge asympt}
	 \tilde{\xi}_{n}(k) \propto \frac{q}{k^{2-s}} + \frac{\beta q}{k^{2+s}} + o(k^{-2-s}), 	\qquad 	\mbox{as } k \to \infty, 
\eeq
where $Y_l^n$ denotes the spherical harmonics of indices $(l,n)$,  $ q $ is a complex number, $ \beta \in \R $ is a parameter labeling such operators and $ 0<s = s(m) <1  $ is another parameter depending on the mass $ m $ (see next Section for further details). Notice that the 3 dimensional Fourier anti-transform of $ k^{-2+s} $ does not belong to $ \dot{H}^{1/2}(\R^3) $ for any $ s > 0 $, since the function does not decay sufficiently fast as $ k \to \infty $. In terms of self-adjoint extensions the one studied in \cite{CDFMT} belongs to the family and is (formally) recoverd for $ \beta = +\infty $. Such an extension has indeed the smallest possible domain (Friedrichs extension), whereas the one with $ \beta = 0 $ show the largest domain (Krein extension).

\n
It is worth noticing that the parameter $s(m)$  is determined by requiring that a charge of the form $ k^{-2+s} \,Y_{1}^n(\vartheta_k,\varphi_k)$ is formally in the kernel of the operator \eqref{Gamma0 intro} (see eq. \eqref{exponent s} below). As we shall see, the existence of charges of the form \eqref{bibi}, \eqref{charge asympt} implies a further boundary condition satisfied by the wave function at the triple coincidence point $\yv_1=\yv_2=0$. Therefore, following the analogy with the two-body case, one can say that in the special case $\beta=0$ the Hamiltonian exhibits a ``three-body resonance''.

The aim of this paper is to give a rigorous construction of  such self-adjoint extensions. Following the line of \cite{CDFMT}, the method of the proof is again based on the theory of quadratic forms in Hilbert spaces. 
In Section 2  we give the precise formulation of the problem and state the main results.  
The proofs are postponed in Section 3. 
The Appendices collect some technical results used in the rest of the paper.

\vspace{1cm}
\n		
{\bf Acknowledgments}. This work was partially supported by the MIUR-FIRB grant 2012 ``Dispersive dynamics: Fourier analysis and variational methods", code RBFR12MXPO  (D.F.), the MIUR-FIR  grant 2013 ``Condensed Matter in Mathematical Physics", code RBFR13WAET (A.M. and M.C.), a INdAM 2014-2015 Progetto Giovani GNFM grant (A.M.), and a 2013-2014 ``CAS-LMU Research in Residence" grant (A.M.). Part of this work has been carried out during a visit of A.M. and M.C. at CIRM (Fondazione Bruno Kessler), Trento, funded by a 2013 Research in Pairs CIRM grant, as well as during a visit of all five authors at the Center for Advanced Studies at LMU Munich, funded by a ``CAS-LMU Research in Residence" grant.

\section{Main Results}
\label{sec: main results}

In this section we formulate the main results contained in the paper. First we introduce a suitable  quadratic form and prove its closedness. Next we derive the self-adjoint operator  associated with such form which turns out to be a self-adjoint extension of the STM operator. Finally the end of the Section is devoted to some comments and remarks.

\subsection{Quadratic form}

We introduce here the main object under investigation, namely the quadratic form $\formo$ and study its properties. We start by defining the critical masses which will play a crucial role in the following analysis.  For any 
$ s \in [0,1] $, we consider 
the equation
\beq
	\label{exponent s}
	 \pi \sqrt{\frac{m(m+2)}{(m+1)^2}} +  \int_{-1}^1 \diff t \: t \int_0^{\infty} \diff p \: \frac{p^{s}}{p^2 + 1 + \frac{2}{m+1} t p} = 0,
\eeq	
As we mentioned in the introduction, such an equation follows by imposing that the function $ k^{-2+s} \,Y_{1}^n(\vartheta_k,\varphi_k)$ is formally in the kernel of the operator \eqref{Gamma0 intro} and therefore it corresponds to a  ``three-body resonance'' condition.

\n
One can show (see proposition \ref{pro: critical masses}) that \eqref{exponent s}  has a unique solution $m(s)$, monotonically increasing in $s$. 
We call $ s(m) $ the inverse function of $ m(s) $, namely the unique solution of \eqref{exponent s} w.r.t. $ s $, for given $ m > 0 $. The most relevant quantities are introduced in the next definition. 
	
\begin{defi}[Critical masses]
	\label{defi: critical masses}
	\mbox{}	\\
	We define the critical masses $ \mstar < \mss $ as 
	\beq
		\label{critical masses}
		\mstar : = m(0) \simeq 0.0735,	\qquad 	\qquad		\mss : = m(1) \simeq 0.116.
	\eeq
	\end{defi}
	
	\n
From now on we fix the value of the mass  $ m $ in such a way that 
\beq
	\mstaroo < m < \mssoo.
\eeq
We  also denote  for short
\beq
	\label{betav}
	\betav : = \lf\{ \betal \ri\}_{n \in \{-1, 0, +1\}},
\eeq
with $ \betal \in \R \cup \{+ \infty\} $.  The quadratic form we are going to study is the following 

\begin{eqnarray}
 	\dom[\formo] &=& \lf\{ \lf. u \in \ldf(\R^6) \: \ri| \: u = w + \pot \eta, w \in \dot{\hf}(\R^6), \eta \in H^{-1/2}(\R^3), \eta = \xi + \sum_{n = -1}^{+1} q_{n} \xximinus, \ri.	\nonumber	\\
 	&& \lf. \;\;  \xi \in \dot{H}^{1/2}(\R^3), q_{n} \in \bC \ri\},	\label{form domain} \\
 	\formo[u] & = & \uform[w] + 2\qformo[\xi] +  \sum_{n = -1}^{+1} \betal \lf| q_{n} \ri|^2\label{form action}
\end{eqnarray}
with 
\be
\widehat{\xximinus} = \ximinus Y_1^n \,, \qquad \ximinus(k)= \f{1}{k^{2-s(m)}}
\ee
Before stating the main result it is worth discussing further the above expression of the quadratic form. First of all  the quadratic form discussed in \cite{CDFMT} is  obtained for $ \betal = +\infty $, which in turn implies $ q_n = 0 $ for all $ n $.  Moreover the quadratic form $ \formo $ is a perturbation of the quadratic form associated with the free Hamiltonian $ \hamfree $, which is supported on the coincidence planes $ \{ \yv_i = 0 \} $, i.e., a zero-range perturbation. 
Concerning the regularity class of the charges, we stress that the assumption $\eta \in H^{-1/2}(\R^3)$ is a necessary condition implied by the requirement 
\be
u=w+ \pot \eta \in \ldf(\R^3)\,, \;\;\;\;\;\; \text{with}\;\;\;\; w\in \dot{\hf}(\R^6).
\ee
This fact will be discussed in details in Appendix \ref{charge}. 
Notice that $k^{-1/2} \xximinus (\kv)$ is $L^2$-summable for large $k$ if and only if 
\beq
	s(m) < 1, \quad \mbox{   or equivalently,   } \quad m < \mss.
\eeq

\n
We also note that the assumption $\eta \in H^{-1/2}(\R^3)$ imposes a further constraint on the behavior of the charge $\hat{\xi}(\kv)$ only for $k$ small. More precisely, for $0<s(m) \leq 1/2$ the charge $\hat{\xi}$ must compensate the singularity at the origin of $\ximinus$ in order to have  $\hat{\eta} \in L^2_{\mathrm{loc}}(\R^3)$ (recall that $\eta \in H^{-1/2}(\R^3)$ implies $\hat{\eta} \in L^2_{\mathrm{loc}}(\R^3)$). For $1/2 < s(m) <1$ one has $\ximinus \in L^2_{\mathrm{loc}}(\R^3)$ and then also $\hat{\xi} \in L^2_{\mathrm{loc}}(\R^3)$. Therefore, in this case we have $\xi \in H^{1/2}(\R^3)$.

\n
The main result about the quadratic form $ \formo $ is formulated in the next theorem, proved in Section 3.1. 

\begin{teo}[Closedness of $ \formo $]
	\label{teo: closure}
	\mbox{}\\
	For any $ \mstaroo < m < \mssoo $ and $ \betav $, the quadratic form $ \formo $ is closed and bounded from below on the  domain $ \mathscr{D}[\formo] $. 
\end{teo}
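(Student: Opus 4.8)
The plan is to exploit that $\formo$ is ``block--diagonal'' with respect to the splitting $u=w+\pot\xi+\sum_{n}q_{n}\pot\xximinus$, the only new feature with respect to \cite{CDFMT} being the finite--dimensional block $\{q_{n}\}$; I would therefore build on the analysis of $\formfo=\uform+2\qformo$ carried out there and control the new block by an abstract compactness argument. Throughout let $S:=\{n\in\{-1,0,1\}:\betal\in\R\}$ (for $n\notin S$ one has $\betal=+\infty$, hence $q_{n}=0$, and the corresponding singular charge is absent) and $b:=\max_{n\in S}|\betal|$; if $S=\emptyset$ the statement is \cite{CDFMT}, so I assume $S\ne\emptyset$. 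The three ingredients I would invoke are: (i) $\uform[w]\ge\tfrac{m}{m+1}\|w\|_{\dot{\hf}(\R^{6})}^{2}$, immediate from $k_{1}^{2}+k_{2}^{2}+\tfrac{2}{m+1}\kv_{1}\cdot\kv_{2}\ge\tfrac{m}{m+1}(k_{1}^{2}+k_{2}^{2})$; (ii) for $m>\mstaroo$ the form $\qformo$ is positive and coercive, $\qformo[\xi]\ge c_{m}\|\xi\|_{\dot H^{1/2}(\R^{3})}^{2}$ with $c_{m}>0$, which is exactly the content of \cite{CDFMT} and is what breaks down at $m=\mstaroo$; (iii) the decomposition $u\mapsto(w,\xi,\{q_{n}\}_{n\in S})$ is \emph{unique}, i.e.\ $w+\pot\xi+\sum_{n\in S}q_{n}\pot\xximinus=0$ in $\mathcal D'(\R^{6})$, with $w\in\dot{\hf}(\R^{6})$, $\xi\in\dot H^{1/2}(\R^{3})$, $q_{n}\in\bC$, forces all terms to vanish. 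Ingredient (iii) is the hard structural input and is the object of Appendix~\ref{charge}: on each coincidence plane $\{\yv_{i}=0\}$ the three pieces carry asymptotics of mutually incompatible order — dictated respectively by the $H^{1}$--regularity of $w$, the $H^{1/2}$--regularity of $\xi$, and the fixed exponent $s(m)$ together with the spherical harmonic $Y_{1}^{n}$ — so they cannot cancel. In particular $\formo$ is well defined on $\mathscr D[\formo]$, and $\mathrm{span}\{\pot\xximinus\}_{n\in S}$ is a finite--dimensional, hence $\mathcal D'$--closed, subspace of $\mathcal D'(\R^{6})$.

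\textbf{Lower bound.} Since $\uform[w]\ge0$, $\qformo[\xi]\ge0$ and $\sum_{n\in S}\betal|q_{n}|^{2}\ge-b\sum_{n\in S}|q_{n}|^{2}$, it suffices to prove the interpolation bound
\beq\label{eq:qest}
\sum_{n\in S}|q_{n}|^{2}\;\le\;\e\big(\uform[w]+\qformo[\xi]\big)+C_{\e}\,\|u\|_{\ldf(\R^{6})}^{2}\qquad\forall\,u\in\mathscr D[\formo],
\eeq
for every $\e>0$ and a suitable $C_{\e}>0$: then, choosing $\e<1/b$, $\formo[u]\ge(1-\e b)(\uform[w]+\qformo[\xi])-bC_{\e}\|u\|^{2}\ge-bC_{\e}\|u\|^{2}$. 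I would prove \eqref{eq:qest} by contradiction: if it fails there are $\e_{0}>0$ and $u_{k}=w_{k}+\pot\xi_{k}+\sum_{n\in S}q_{n,k}\pot\xximinus\in\mathscr D[\formo]$, which after rescaling satisfy $\sum_{n\in S}|q_{n,k}|^{2}=1$, $\uform[w_{k}]+\qformo[\xi_{k}]\le\e_{0}^{-1}$ and $\|u_{k}\|_{\ldf(\R^{6})}\to0$. By (i)--(ii) the sequences $w_{k}$, $\xi_{k}$, $\{q_{n,k}\}$ are bounded in $\dot{\hf}(\R^{6})$, $\dot H^{1/2}(\R^{3})$, $\bC^{|S|}$; passing to a subsequence, $w_{k}\rightharpoonup w$, $\xi_{k}\rightharpoonup\xi$ weakly and $q_{n,k}\to q_{n}$ with $\sum_{n\in S}|q_{n}|^{2}=1$. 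Weak convergence implies convergence in $\mathcal D'$, and $\pot$ — a Fourier multiplier applied after the lift \eqref{potential} — is continuous for the distributional topologies, so $u_{k}\rightharpoonup w+\pot\xi+\sum_{n\in S}q_{n}\pot\xximinus$ in $\mathcal D'(\R^{6})$; but $u_{k}\to0$ in $\ldf(\R^{6})\hookrightarrow\mathcal D'(\R^{6})$, whence $w+\pot\xi+\sum_{n\in S}q_{n}\pot\xximinus=0$. By (iii) all terms vanish, contradicting $\sum_{n\in S}|q_{n}|^{2}=1$.

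\textbf{Closedness.} Let $\{u_{k}\}\subset\mathscr D[\formo]$ with $u_{k}\to u$ in $\ldf(\R^{6})$ and $\formo[u_{k}-u_{j}]\to0$. Since $\formo[u_{k}-u_{j}]=\uform[w_{k}-w_{j}]+2\qformo[\xi_{k}-\xi_{j}]+\sum_{n\in S}\betal|q_{n,k}-q_{n,j}|^{2}\to0$, ingredients (i)--(ii) give that $w_{k}$ is Cauchy in $\dot{\hf}(\R^{6})$ and $\xi_{k}$ is Cauchy in $\dot H^{1/2}(\R^{3})$; let $w,\xi$ be their limits. Then $\sum_{n\in S}q_{n,k}\pot\xximinus=u_{k}-w_{k}-\pot\xi_{k}$ converges in $\mathcal D'(\R^{6})$, the right--hand side converging there ($u_{k}\to u$ in $\ldf$, $w_{k}\to w$ in $\dot{\hf}$, $\pot\xi_{k}\to\pot\xi$ in $\mathcal D'$); since $\mathrm{span}\{\pot\xximinus\}_{n\in S}$ is finite--dimensional and $\mathcal D'$--closed, the coefficients converge, say $q_{n,k}\to q_{n}$, and $u=w+\pot\xi+\sum_{n\in S}q_{n}\pot\xximinus$. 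Setting $\eta:=\xi+\sum_{n\in S}q_{n}\xximinus$, membership $\eta\in H^{-1/2}(\R^{3})$ is automatic from $u=w+\pot\eta\in\ldf(\R^{6})$ with $w\in\dot{\hf}(\R^{6})$ (Appendix~\ref{charge}), so $u\in\mathscr D[\formo]$; and $\formo[u_{k}-u]=\uform[w_{k}-w]+2\qformo[\xi_{k}-\xi]+\sum_{n\in S}\betal|q_{n,k}-q_{n}|^{2}\to0$. Hence $\formo$ is closed.

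\textbf{Main obstacle.} The functional--analytic skeleton above is routine once (ii) and especially (iii) are secured, so I expect (iii) — the uniqueness of the three--fold decomposition, i.e.\ a sharp control of how the potential $\pot$ propagates the distinct regularity/angular classes of charges near the coincidence planes — to be the real work; it is also precisely where $m<\mssoo$ enters, since only for $s(m)<1$ are the singular charges $\xximinus$ compatible with $u\in\ldf(\R^{6})$ (and only then is $\mathscr D[\formo]$ a genuine enlargement of $\mathscr D[\formfo]$). Ingredient (ii), the coercivity of $\qformo$ on $\dot H^{1/2}(\R^{3})$ imported from \cite{CDFMT}, fixes the lower endpoint $m>\mstaroo$ and is what lets one upgrade form--Cauchyness to $\dot H^{1/2}$--Cauchyness of the charges.
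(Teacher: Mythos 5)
Your proposal is correct in substance but takes a genuinely different route from the paper for the lower bound, and it contains one repairable gap in the closedness step. The paper does not argue by compactness: after reducing to the sectors $\hilb_{1,n}$, it proves the lower bound \emph{quantitatively}. Writing $\formol[u]+\la\|u\|^2$ and peeling off small fractions $\delta_1,\delta_2$ of the coercive terms, it invokes the two-sided bound $\|\pot\eta\|^2_{\la,1}\geq c_1\|\eta\|^2_{\la,2}$ of Lemma \ref{lem: pot norm 1} together with the explicit computation $\|\Xi^-\|^2_{\la,2}=\la^{-1+s(m)}/(2(1-s(m)))$, so that the coefficient of $|q|^2$ grows like $\la^{s(m)}$ and eventually beats $|\beta|$. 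This buys two things your soft argument does not: the explicit constant $E_0(m)$ of Remark \ref{rem: boundedness}, and a coercivity estimate $\formol[u]+\la\|u\|^2\gtrsim\|w\|^2_{\dot{\hf}}+\|\xi\|^2_{\dot H^{1/2}}+|q|^2$ which hands over closedness for $\betal<0$ for free. Your contradiction argument is legitimate and shorter, but it leans entirely on ingredient (iii); be aware that what the paper actually supplies for (iii) is not an asymptotic analysis ``near the coincidence planes'' but three Fourier-side separation facts: $\pot_\la$ is bounded below on $H^{-1/2}$ charges (Lemma \ref{lem: pot norm 2}, i.e.\ injectivity of $\eta\mapsto\pot\eta$), $\pot\eta\notin\dot{\hf}(\R^6)$ for $\eta\neq0$ (separating $w$ from the potential part, cf.\ the proof of Proposition \ref{pro: alternative form}), and $\xximinus\notin\dot H^{1/2}(\R^3)$ (separating $\xi$ from the $q_n$'s). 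These do combine to give (iii), so your lower-bound argument closes.

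The gap: in the closedness step you assert that $\formo[u_k-u_j]\to0$ together with (i)--(ii) forces $w_k$ to be Cauchy in $\dot{\hf}(\R^6)$ and $\xi_k$ in $\dot H^{1/2}(\R^3)$. That inference is valid only when all $\betal\geq0$; if some $\betal<0$, the term $\sum_{n\in S}\betal|q_{n,k}-q_{n,j}|^2$ is negative and could in principle offset a non-vanishing $\uform[w_k-w_j]+2\qformo[\xi_k-\xi_j]$, so nothing is yet Cauchy. The repair is already in your toolbox: apply \eqref{eq:qest} to $u_k-u_j$ and use that $u_k$ is $L^2$-convergent, hence $L^2$-Cauchy, to get for any $\e<1/b$
\[
(1-\e b)\lf(\uform[w_k-w_j]+\qformo[\xi_k-\xi_j]\ri)\;\leq\;\formo[u_k-u_j]+b\,C_{\e}\,\lf\|u_k-u_j\ri\|_{\ldf(\R^6)}^{2}\;\underset{k,j\to\infty}{\longrightarrow}\;0,
\]
after which (i)--(ii) do give the claimed Cauchy properties and the rest of your argument (finite-dimensionality of $\mathrm{span}\{\pot\xximinus\}$, recovery of the $q_n$'s, membership $\eta\in H^{-1/2}(\R^3)$ via Appendix \ref{charge}) goes through as written. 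The paper performs exactly this absorption through its quantitative $Q_\la$ estimate; with that one line added, your proof is complete.
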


	\begin{remark}[Boundedness from below of $ \formo $]
		\label{rem: boundedness}	
		\mbox{}	\\
		From the proof of Proposition \ref{pro: formol closure} it is clear that the quadratic form $ \formo$ is positive for $\beta_n  \geq 0$ for any $ n = -1,0,+1 $,  while if $\beta_n <0$ for some $ n $ a lower bound is explicitly given by 
		\be
			\label{eq: boundedness}
			E_0 (m) =- \left[2 (1-s(m) )\, \f{D_1 c_1 + D_2 (D_1 +1)}{D_1 D_2 c_1} \, \max_{n \in \{-1,0,+1 \}} |\beta_n|  \right]^{1/s(m)},
		\ee
		where $ c_1, D_1, D_2 $ are finite constants inherited from previous inequalities in the proof (see Section \ref{sec: closedness} and specifically \eqref{eq: d1 d2} and \eqref{pot norm 1}).
		Notice that $E_0 \to -\infty$ as $m\to m^\star$ and  $E_0 \to 0$ as $m\to m^{\star\star}$.
	\end{remark}

	\begin{remark}[Parameter $ s(m) $]
		\label{rem: s dependence}
		\mbox{}	\\
		By direct inspection of the proof one can realize that the parameter $s(m) $ can be replaced with any positive real number $ 0< s <1 $, i.e., not solving the algebraic equation \eqref{exponent s}, and the closedness of $ \formo $ would not be affected. On the other hand, the corresponding self-adjoint operator is an extension of the STM operator if and only if  $s=s(m)$ (see next Section \ref{22}).  
	\end{remark}

\subsection{Self-adjoint extensions of the STM operator}\label{22}

\noindent
We are now able to introduce the operator associated with $ \formo $. We set
\begin{eqnarray}
	\label{operator domain}
	\mathscr{D}\lf(\hamo\ri) & = & \lf\{ \lf. u \in \ldf(\R^6) \: \ri| \: u = w + \pot \eta, w \in \dot{\hdf}(\R^6), \eta \in H^{-1/2}(\R^3), \eta = \xi + \sum_{n = -1}^{+1} q_{n} \xximinus, \ri.		\nonumber \\
	&&	\lf. \Gamma_0 \xi \in L^2(\R^3), q_{n} \in \bC,	\int_{\R^3} \diff \kvp \: \hat{w}(\kvp,\kv) =  \lf( \widehat{\Gamma_0 \xi} \ri)(\kv), \ri.	\nonumber	\\
	&& \lf. \beta_{n} q_{n} = 2 \lim_{\eps \to 0} \lf( \xximinus, {\Gamma_0 \xi} \ri)_{L^2(\R_{\eps}^3)} \ri\}	\nonumber\\
	\hamo u & = & \hamfree w,	
\end{eqnarray}
where $ \Gamma_0 \xi$ is given by  the following expression
\beq
	\label{Gamma0}
	\lf( \widehat{\Gamma_0 \xi} \ri)(\pv) : = 2\pi^2  \sqrt{\frac{m(m+2)}{(m+1)^2}} p \: \hat{\xi}(\pv) + \int_{\R^3} \diff \qv  \: \frac{\hat{\xi}(\qv)}{p^2 + q^2 + \frac{2}{m+1} \pv \cdot \qv},
\eeq
and we have denoted for short
\beq
	\label{rdeps}
	\R^d_{\eps} : = \lf\{ \kv \in \R^d \: \big| \: k \geq \eps \ri\}.
\eeq
In the previous definitions $ \Gamma_0 $ has to be understood as the formal action of the integral operator \eqref{Gamma0}, without any reference to its counterpart as operator on a Hilbert space.
Note however that if $ \Gamma_0 $ given in \eqref{Gamma0} is properly restricted to its maximal domain within the Hilbert space $ L^2(\R^3)$, then it coincides with positive, self-adjoint operator associated with the quadratic form $\qformo$ with maximal domain (Friedrichs extension). 
 A closer inspection on the condition $ \Gamma_0 \xi \in L^2(\R^3) $ reveals that it is certainly satisfied by $ H^1-$functions since both the diagonal and off-diagonal term in \eqref{Gamma0} can be easily bounded by the $H^1-$norm of $ \xi $. However,  there exist $ H^{1/2}-$functions $\xi $ so that both the diagonal and off-diagonal terms are not in $ L^2(\R^3) $ but their sum, i.e., $ \Gamma_0 \xi $, is finite almost everywhere and defines a function in $ L^2(\R^3) $. 
Concerning the operator $\hamo$, in Section 3.2  we prove the following theorem. 

\begin{teo}[Self-adjointness of $ \hamo $]
	\label{teo: hamo}
	\mbox{}	\\
For any  $ \mstaroo < m < \mssoo $ and $ \betav $,	the operator $\hamo$ with domain $ \mathscr{D}(\hamo) $ is self-adjoint and bounded from below, and its quadratic form is $\formo$.  
\end{teo}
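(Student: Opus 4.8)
\textbf{Proof strategy for Theorem \ref{teo: hamo}.} The plan is to exploit the standard correspondence between closed, lower bounded quadratic forms and self-adjoint, lower bounded operators: by Theorem \ref{teo: closure} the form $\formo$ is closed and bounded from below, hence it is the form of a unique self-adjoint operator $S$ bounded from below, and the entire content of the theorem is the identification $S=\hamo$, i.e. the explicit characterization of the domain $\mathscr{D}(\hamo)$ and of the action $\hamo u = \hamfree w$. I would proceed by first computing the sesquilinear form $\formo[u,v]$ for $u\in\mathscr{D}[\formo]$ and a suitable dense set of test functions $v$, and then reading off the operator domain as the set of those $u$ for which $v\mapsto\formo[u,v]$ extends to a bounded functional on $\ldf(\R^6)$, with $Su$ the corresponding Riesz representative.

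\textbf{Key steps.} First, write $u=w+\pot\eta$, $v=z+\pot\sigma$ with $\eta=\xi+\sum_n q_n\xximinus$, $\sigma=\zeta+\sum_n r_n\xximinus$, and expand $\formo[u,v]=\uform[w,z]+2\qformo[\xi,\zeta]+\sum_n\betal q_n\overline{r_n}$. Second, the crucial algebraic identity to establish — borrowing the computation from \cite{CDFMT} but now keeping track of the singular pieces $\xximinus$ — is a Green-type formula rewriting $\formo[u,v]$ as $(\hamfree w, z+\pot\sigma)_{\ldf}$ plus boundary terms: using the definition \eqref{potential} of $\pot$ one has $\uform[w,\pot\sigma] = \int \diff\kv\,\overline{\hat w(\kvp,\kv)}\,\hat\sigma(\kv)\,\diff\kvp$-type contractions, and $2\qformo[\xi,\zeta]$ must be paired against the Fourier-side contraction $\int\diff\kvp\,\hat w(\kvp,\kv)$ of $w$. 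One finds $\formo[u,v] = (\hamfree w, v)_{\ldf} + \big\langle \textrm{something}(\xi,q),\ \zeta\big\rangle + \sum_n\big(\betal q_n - 2(\xximinus,\Gamma_0\xi)_{\textrm{reg}}\big)\overline{r_n}$, where "something" is $\int\diff\kvp\,\hat w(\kvp,\cdot) - \widehat{\Gamma_0\xi}$; here the regularized pairing $\lim_{\eps\to0}(\xximinus,\Gamma_0\xi)_{L^2(\R^3_\eps)}$ arises precisely because $\xximinus\notin H^{1/2}$ so the naive pairing diverges and must be renormalized against the near-origin behaviour of $\Gamma_0\xi$. Third, demanding that this expression be $\ldf$-bounded in $v$ for all $v\in\mathscr{D}[\formo]$ forces, by varying $\zeta$ and the $r_n$ independently, exactly the two constraints appearing in \eqref{operator domain}: the STM boundary condition $\int\diff\kvp\,\hat w(\kvp,\kv)=(\widehat{\Gamma_0\xi})(\kv)$ (which also upgrades $w$ to $\dot{\hdf}$ and forces $\Gamma_0\xi\in L^2$, since the left side is then $\ldf$-traced), and the triple-point condition $\betal q_n = 2\lim_{\eps\to0}(\xximinus,\Gamma_0\xi)_{L^2(\R^3_\eps)}$. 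Once both hold, the remaining term is $(\hamfree w, v)_{\ldf}$, identifying $\hamo u = \hamfree w$. Fourth, one checks the reverse inclusion (every $u$ satisfying \eqref{operator domain} lies in the form domain and in $\mathscr{D}(S)$) and that the quadratic form of the operator so constructed is indeed $\formo$, which is immediate from the same Green's identity with $v=u$. Self-adjointness and lower boundedness are then automatic from the representation theorem. Finally, to justify Remark \ref{rem: s dependence}'s claim that $\hamo$ extends the STM operator $\tilde H_0$ precisely when $s=s(m)$, one verifies that for $\xi\in\dot H^{1/2}$ smooth the condition \eqref{operator domain} reduces to \eqref{plane asympt} with $\alpha=0$, and that the presence of $\xximinus$ with $\eqref{exponent s}$ satisfied does not spoil this because $\xximinus$ lies in the formal kernel of $\Gamma_0$, so $\Gamma_0\eta=\Gamma_0\xi$.

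\textbf{Main obstacle.} The delicate point is the rigorous treatment of the singular charge contributions. Because $\xximinus\notin\dot H^{1/2}(\R^3)$, the quantity $\qformo[\eta]$ is not literally finite on $\mathscr{D}[\formo]$; what is finite is the combination appearing in $\formo$, and the pairing $(\xximinus,\Gamma_0\xi)$ exists only as a regularized (principal-value-type) limit over $\R^3_\eps$. Making the Green's identity precise therefore requires a careful $\eps$-truncation argument: expand with $\eta$ replaced by $\xi + \sum_n q_n \xximinus\mathbf{1}_{\{k\geq\eps\}}$, integrate by parts, and control the $\eps\to0$ limit of each boundary term using the asymptotics \eqref{charge asympt}-type decay of $\Gamma_0\xi$ near $k=0$ and near $k=\infty$, together with the cancellations built into the resonance equation \eqref{exponent s}. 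This is exactly where the mass restriction $\mstaroo<m<\mssoo$ (equivalently $0<s(m)<1$) is used: $s<1$ guarantees $k^{-1/2}\xximinus\in L^2$ near infinity so that $u\in\ldf$, and $s>0$ guarantees the regularized pairing converges and the boundary term is nontrivial. I expect the bookkeeping of these truncated pairings, and verifying uniform-in-$\eps$ bounds via the estimates already developed for Theorem \ref{teo: closure} (notably the constants $c_1, D_1, D_2$ and the decomposition into diagonal/off-diagonal parts of $\Gamma_0$), to be the technically heaviest part of the argument; the rest follows the blueprint of \cite{CDFMT}.
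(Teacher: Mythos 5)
Your strategy coincides with the paper's own proof: both invoke the representation theorem for the closed, lower-bounded form $\formo$ (after reduction to a single angular-momentum sector $\hilb_{1,n}$) and then identify the operator domain by testing $\formol[v,u]$ against successively larger classes of $v$ — first purely regular $v$, then $v = w_v + \pot\xi_v$ with a regular charge, then $v$ containing the singular charge $\xximinus$ — extracting exactly the two boundary conditions of \eqref{operator domain}, with the divergent pairing $(\xximinus,\Gamma_0\xi_u)$ handled by the same $\eps$-truncation splitting $\Xi_<$, $\Xi_>$ and case distinction on $s(m)$ that you anticipate. The only ingredient you leave implicit is the trace regularity $w_u(\cdot,0)\in H^{1/2}(\R^3)$ (the paper's Lemma \ref{lem: w regularity}), which is what makes the pairings against $\Gamma_0\xi_u$ and $\Xi_>$ well defined in the limit.
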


\begin{remark}[Comparison with \cite{m2}]
	\mbox{}	\\
	As already pointed out, in \cite{m2} it was studied the case $ \alpha \in \R $ by an operator theoretical approach: by studying the deficiency indeces of the STM operator, it is proven that the STM operator admits a one parameter family of self-adjoint extensions whenever
	\beq
		m < \tilde{m}^{\star\star},		\qquad		\mbox{with } \tilde{m}^{\star\star} \simeq 0.0812
	\eeq
	A direct comparison with our result immediately shows that the region above $ \mstar $ where such self-adjoint extensions exist is narrower, since trivially 
	\beq
		\tilde{m}^{\star\star} < \mssoo. 
	\eeq
	It is important to remark however that the threshold $ \mssoo $ is the one obtained by heuristic arguments in the physics literature (see \cite[p. 45]{wth} or \cite{E}) for the unitary regime $ \alpha = 0 $.
\end{remark}

\begin{remark}[Operator theoretical approach]
	\mbox{}	\\
	The possible discrepancy with the result obtained in \cite{m2} motivates a comment about the differences in the approach to the problem. Indeed the work \cite{m2} is based on a standard operator theoretical analysis of the deficiency indeces of the STM operator for $ \alpha \in \R $. More precisely the deficiency spaces turn out to be subspaces of the space of charges $ \eta $, living on the coincidence planes $ \{ \xv_i = \xv_j \} $. Such methods apply as well to the case $ \alpha = 0 $. However what makes the extension non trivial is that, while for $ \alpha \neq 0 $ the condition $ \eta \in L^2(\R^3) $ is required and the usual theory of operators in Hilbert spaces can be applied, on the opposite for $ \alpha = 0 $ the space of charges is much larger, i.e., $ \eta \in H^{-1/2}(\R^3) $ (see Appendix \ref{charge}). At the level of quadratic forms this can be easily understood by observing that if $ \alpha \neq 0 $ an additional term proportional to $ \alpha \lf\| \eta \ri\|^2_{L^2(\R^3)} $ arises and therefore the assumption $ \eta \in L^2(\R^3) $ can not be avoided to obtain a meaningful expression. Obviously for $ \alpha = 0 $ such a term is absent and the only constraint to $ \eta $ is $ \pot \eta \in L^2_{\mathrm{loc}}(\R^6) $, which yields the condition $ \eta \in H^{-1/2}(\R^3) $ as proven in Appendix \ref{charge}.
\end{remark}

\n
In the next proposition  we establish the relation between   $\hamo$ and  the STM operator. 

\begin{pro}[STM extensions]
	\label{pro: STM}
	\mbox{}	\\
	The operator $ \hamo $ with domain $ \dom(\hamo) $ is a self-adjoint extension of the STM operator $ \tilde{H}_0 $. 
	\end{pro}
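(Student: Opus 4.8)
The plan is to establish the operator inclusion $\tilde{H}_0 \subseteq \hamo$: since $\hamo$ is self-adjoint by Theorem \ref{teo: hamo} and $\tilde{H}_0$ is symmetric, this yields at once that $\hamo$ is a self-adjoint extension of $\tilde{H}_0$. So let $\Psi \in \dom(\tilde{H}_0)$, with its uniquely associated charge $\xi \in C_0^\infty(\R^3)$. Since $H_0$, i.e.\ the operator \eqref{H0 domain}, is by \cite{CDFMT} the Friedrichs extension of $\tilde{H}_0$, one has $\dom(\tilde{H}_0)\subseteq\dom(H_0)$ with $\tilde{H}_0\Psi = H_0\Psi$; reading off \eqref{H0 domain} this means $\Psi = w + \pot\xi$ with $w \in \dot{\hdf}(\R^6)$, $\int_{\R^3}\diff\kvp\,\hat{w}(\kvp,\kv) = (\widehat{\Gamma_0\xi})(\kv)$, and $\tilde{H}_0\Psi = \hamfree w$ (consistently with the original definition of $\tilde{H}_0$ as acting freely off the planes, because $\hamfree\pot\xi$, which in Fourier variables equals $\hat{\xi}(\kv_1)-\hat{\xi}(\kv_2)$, is supported on the coincidence planes, so $\hamfree\Psi = \hamfree w$ off them). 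Moreover $\xi \in C_0^\infty(\R^3) \subset \dot{H}^{1/2}(\R^3)\cap H^{-1/2}(\R^3)$ and $\Gamma_0\xi \in L^2(\R^3)$, since in \eqref{Gamma0} the diagonal term $p\,\hat{\xi}(\pv)$ is bounded and rapidly decaying and the off-diagonal integral is bounded near the origin and $O(p^{-2})$ at infinity.

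Comparing with \eqref{operator domain}, it then remains to check that $\Psi$ fits that description with $\eta = \xi$ and triple-point coefficients $q_n = 0$, $n=-1,0,+1$: all requirements there hold by the above except the last one, which for $q_n = 0$ reduces to
\beq
	\lim_{\eps\to 0}\big(\xximinus,\Gamma_0\xi\big)_{L^2(\R^3_\eps)} = 0, \qquad n=-1,0,+1 .
\eeq
This is exactly where the defining equation \eqref{exponent s} of $s(m)$ is used. Carrying out the angular integration in \eqref{Gamma0} applied to $\widehat{\xximinus}(\kv) = k^{-2+s(m)}Y_1^n(\vartheta_k,\varphi_k)$ — only the $\ell=1$ Legendre component of the kernel survives the integration against $Y_1^n$ — and then rescaling the radial variable, one finds that $\widehat{\Gamma_0\xximinus}(\pv)$ equals a nonzero constant times $p^{\,s(m)-1}\,Y_1^n(\vartheta_p,\varphi_p)$ times precisely the left-hand side of \eqref{exponent s}; hence $\Gamma_0\xximinus = 0$. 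On the other hand, for $\xi \in C_0^\infty(\R^3)$ the function $k^{-2+s(m)}\,|(\widehat{\Gamma_0\xi})(\kv)|$ is integrable over $\R^3$ — being $O(k^{s(m)})$ near the origin and $O(k^{-2+s(m)})$ at infinity, with $0<s(m)<1$ — so by dominated convergence the limit above equals the absolutely convergent pairing $\langle\widehat{\xximinus},\widehat{\Gamma_0\xi}\rangle_{L^2(\R^3)}$. Since the integral kernel $(p^2+q^2+\tfrac{2}{m+1}\pv\cdot\qv)^{-1}$ in \eqref{Gamma0} is real and symmetric, and the same integrability licenses Fubini, this pairing equals $\langle\widehat{\Gamma_0\xximinus},\widehat{\xi}\rangle_{L^2(\R^3)} = 0$. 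Hence $\Psi\in\dom(\hamo)$, and $\hamo\Psi = \hamfree w = \tilde{H}_0\Psi$, so $\tilde{H}_0\subseteq\hamo$.

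The genuinely new computation is light: it is the observation that \eqref{exponent s} is precisely the statement $\Gamma_0\xximinus = 0$, combined with the symmetry of the kernel in \eqref{Gamma0}. The hard part will be, rather, to import cleanly from \cite{CDFMT} the structural facts underlying the first paragraph — that subtracting the explicit potential $\pot\xi$ from a function carrying the prescribed singularity on the planes leaves a genuine $\dot{\hdf}(\R^6)$ remainder, and that the $o(1)$-clause in \eqref{plane asympt} with $\alpha=0$ is the configuration-space form of the Fourier identity $\int\hat{w}(\kvp,\kv)\diff\kvp = (\widehat{\Gamma_0\xi})(\kv)$ — which is a trace / elliptic-regularity statement for the STM operator already proved there; here it should suffice to quote it, the only delicate bookkeeping being the normalisation constants linking the regular trace of $\pot\xi$ on $\{\yv_i=0\}$ to $\Gamma_0\xi$. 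Finally, this argument makes transparent the claim of Remark \ref{rem: s dependence}: with an exponent $s \neq s(m)$ one would have $\Gamma_0\xximinus \neq 0$, the last condition in \eqref{operator domain} would fail for a generic $\Psi \in \dom(\tilde{H}_0)$, and $\hamo$ would not be an extension of the STM operator.
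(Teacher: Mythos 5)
Your proof is correct and follows essentially the same route as the paper's: the heart of both arguments is the identity $\Gamma_0 \xximinus = 0$, which is exactly the defining equation \eqref{exponent s} of $s(m)$, combined with an exchange of the order of integration (licensed by the absolute convergence of the pairing for a smooth charge, which the paper checks via Cauchy--Schwarz) to conclude that $\lim_{\eps\to 0}\lf(\xximinus,\Gamma_0\xi\ri)_{L^2(\R^3_{\eps})}=0$, so that the second boundary condition in \eqref{operator domain} holds trivially with $q_n=0$. The only cosmetic difference is that you import the decomposition and the first boundary condition for $\Psi\in\dom(\tilde{H}_0)$ by quoting the Friedrichs-extension characterization of $H_0$ from \cite{CDFMT}, whereas the paper verifies the conditions of \eqref{operator domain} directly for an element with smooth regular charge.
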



\n
The domain of $ \hamo $ contains also charges with a singular behavior for $k\rightarrow \infty$.  More precisely, denoting

	\beq
		\nu(m) :=  \frac{8 \pi}{m+1} \disp\int_{1}^{\infty} \diff p \: \frac{1}{p}\disp\int_{0}^1 \diff t \: t^2 \int^{1/p}_{0} \diff q \frac{q^{1-s(m)}}{(q^2 +1)^2 - \frac{4}{(m+1)^2} q^2},	
			\eeq
  we prove the following result.
	
	\begin{pro}[Charge asymptotics]
		\label{pro: charge asympt}
		\mbox{}	\\
		The domain $ \dom(\hamo) $ (recall \eqref{betav} and \eqref{operator domain}) contains charges $ \eta(\kv) $ such that
		\bdn
			\hat{\eta}(\kv) &=& \sum_{n = - 1}^{1}  \tilde{\eta}_{n}(k) Y_{1}^n(\vartheta_k, \varphi_k),	\nonumber	\\
			 \tilde{\eta}_{n}(k) &\underset{k \to \infty}{\sim}& \frac{q_{n}}{k^{2 - s(m)}} + 	\frac{r_n}{ k^{2 + s(m)}} + o(k^{-2-s(m)})\,, \\
			r_n&= & \frac{\beta_{n} }{\nu(m)} \, q_n.
			\label{fbc}	
		\edn	
	\end{pro}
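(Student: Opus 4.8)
\textbf{Proof plan for Proposition \ref{pro: charge asympt}.}

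The plan is to explicitly construct, for each $n \in \{-1,0,+1\}$, a charge of the claimed form and verify that it lies in $\dom(\hamo)$. Fix $n$ and $q_n \in \bC$. Following the decomposition $\eta = \xi + \sum_m q_m \xximinus$ prescribed in \eqref{operator domain}, I would take $q_m = 0$ for $m \neq n$, keep $q_n$ as given, and look for $\xi$ with $\widehat{\xi}(\kv) = \widetilde{\xi}(k)\, Y_1^n(\vartheta_k,\varphi_k)$, so that the only non-trivial condition to be satisfied is $\Gamma_0 \xi \in L^2(\R^3)$ together with the boundary condition $\beta_n q_n = 2\lim_{\eps\to 0}(\xximinus, \Gamma_0\xi)_{L^2(\R^3_\eps)}$; the $w$-component is then recovered by solving $\int_{\R^3}\diff\kvp\,\widehat w(\kvp,\kv) = (\widehat{\Gamma_0\xi})(\kv)$ exactly as in the construction of $\dom(H_0)$ in \eqref{H0 domain}, which is already available from \cite{CDFMT} and the machinery of Section 2. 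The key analytic input is the behavior of $\Gamma_0$ on the ``resonant'' function $\xximinus$: since $k^{-2+s(m)}Y_1^n$ is, by the very definition \eqref{exponent s} of $s(m)$, formally annihilated by $\Gamma_0$, applying $\Gamma_0$ to $\xximinus$ (suitably cut off near the origin to make it a genuine $H^{-1/2}$ charge, the small-$k$ modification being harmless for the large-$k$ asymptotics) produces a function that decays \emph{faster} than the naive power count — one gains a factor $k^{-2s(m)}$ relative to $\xximinus$, i.e. $\Gamma_0\xximinus \sim c(m)\, k^{-2-s(m)}Y_1^n$ for some explicit constant $c(m)$, which I expect to evaluate to a multiple of $\nu(m)$. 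This is the computation underlying both the membership $\Gamma_0\xximinus \in L^2$ (it is, since $s(m) > 0$ makes $k^{-2-s(m)}$ square-integrable at infinity) and the value of $r_n$.

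Concretely, I would (i) insert the ansatz $\widehat\xi = \widetilde\xi\,Y_1^n$ into \eqref{Gamma0}, carry out the angular integration in the off-diagonal term using the addition theorem for $Y_1^n$ (this reduces the $\R^3$ integral to the one-dimensional kernel $\int_{-1}^1 \diff t\, t\,(\cdot)$ appearing in \eqref{exponent s}), so that $\Gamma_0$ acts on the radial profile as a one-dimensional homogeneous-of-degree-one integral operator $L$; (ii) observe that $L[k^{-2+s(m)}] = 0$ is precisely \eqref{exponent s}, while $L[k^{-2-s(m)}] = -\kappa(m)k^{-1-s(m)}$ and more generally $L$ raises the decay exponent, so that seeking $\widetilde\xi(k) = a k^{-2+s(m)} + b k^{-2-s(m)} + (\text{faster})$ forces, at the level of $\widehat{\Gamma_0\xi}$, the leading surviving term to come from the $b$-piece and from the error in the $a$-piece due to the small-$k$ cutoff; (iii) match this against the requirement that $\int \widehat w(\kvp,\cdot)$ be exactly $\widehat{\Gamma_0\xi}$ with $w \in \dot H^2$, which is automatic once $\Gamma_0\xi \in L^2$; and (iv) compute the pairing $2\lim_{\eps\to0}(\xximinus,\Gamma_0\xi)_{L^2(\R^3_\eps)}$ — the $\eps$-regularization is essential because $\xximinus \cdot \Gamma_0\xi$ behaves like $k^{-3-\text{(something)}}$ and the integral converges at infinity, while near the origin the cutoff renders it finite — obtaining a linear relation of the form $\beta_n q_n = (\text{const}(m))\, r_n \cdot q_n^{-1}\cdot(\dots)$; solving it yields $r_n = \beta_n q_n/\nu(m)$, which pins down the constant $\nu(m)$ as written in the statement.

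The main obstacle, I expect, is the fourth step: rigorously evaluating the regularized pairing $2\lim_{\eps\to 0}(\xximinus,\Gamma_0\xi)_{L^2(\R^3_\eps)}$ and identifying it with the coefficient $\beta_n q_n$. This requires keeping careful track of the boundary terms that the $\eps$-cutoff generates — both from the cutoff on $\xximinus$ at small $k$ and from the non-absolute convergence of the off-diagonal double integral in $\Gamma_0$ — and showing that all $\eps$-dependent divergences cancel, leaving exactly the finite quantity $\nu(m)\, r_n$. The homogeneity of all the kernels under $k \mapsto \lambda k$ is the tool that makes this tractable: it converts every integral into a Mellin-type one-dimensional integral whose convergence domain in the complex exponent can be read off directly, and the residues at the relevant poles produce the constants $c_1, D_1, D_2$ of Remark \ref{rem: boundedness} and ultimately $\nu(m)$. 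A secondary technical point is to confirm that the small-$k$ modification of $\xximinus$ needed to land in $H^{-1/2}(\R^3)$ (necessary when $0 < s(m) \leq 1/2$, as discussed after \eqref{betav}) does not contaminate the large-$k$ asymptotics \eqref{fbc}; this follows because that modification changes $\widehat\eta$ only on a compact set, hence changes $\Gamma_0\eta$ by an $H^1$-function, which decays faster than $k^{-2-s(m)}$ and is therefore absorbed into the $o(k^{-2-s(m)})$ remainder.
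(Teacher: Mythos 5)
Your overall architecture is the paper's: take a trial regular charge whose profile carries an $A\,k^{-2-s(m)}$ tail (cut off at a finite radius $R$) plus a faster-decaying remainder $\chi$, kill the contribution of $\chi$ to the boundary condition in \eqref{operator domain} by Fubini together with $\Gamma_0\xximinus=0$, and evaluate the surviving regularized pairing to identify $A=\beta_n q_n/\nu(m)$. That is exactly how the paper proceeds. However, two of your analytic claims are wrong and would derail the execution. First, $\Gamma_0\xximinus$ is \emph{not} $\sim c(m)k^{-2-s(m)}$: by the very resonance condition \eqref{exponent s} (which you invoke correctly in your step (ii)) one has $\Gamma_0\xximinus=0$ identically, and the truncation near the origin produces a correction decaying like $k^{-3}$ in the $\ell=1$ sector; in particular the $r_n k^{-2-s(m)}$ tail of $\hat\eta$ is not generated by applying $\Gamma_0$ to $\xximinus$ — it is put into the regular part $\xi$ by hand, and the boundary condition only fixes its coefficient.

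Second, and more seriously, your step (ii) asserts $L[k^{-2-s(m)}]=-\kappa(m)k^{-1-s(m)}$ with $\kappa\neq 0$, and your step (iv) assumes the pairing $(\xximinus,\Gamma_0\xi)$ then "converges at infinity". Neither is true: the substitution $p\mapsto 1/p$ shows that the off-diagonal radial kernel in \eqref{exponent s} is invariant under $s\mapsto -s$, so $k^{-2-s(m)}Y_1^n$ is \emph{also} a formal zero mode of $\Gamma_0$; and had $\kappa$ been nonzero, the pairing would be $\propto\int^\infty \diff p\, p^{-1}$, logarithmically divergent at infinity. The finite constant $\nu(m)$ does not come from a residue of a homogeneous Mellin integral, nor from the small-$k$ $\eps$-cutoff (which contributes nothing in the limit): it comes entirely from the truncation $\one_{\{k\geq R\}}$ of the $k^{-2-s(m)}$ piece at finite $R$, the $s\to -s$ cancellation being what makes the resulting $\int_R^\infty \diff p\,p^{-1}[\,\cdots]$ convergent at infinity, and scale invariance in $R$ being what makes the answer $R$-independent. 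This is precisely the computation the paper carries out to obtain
\[
\lim_{\eps\to 0}\bigl(\ximinus,\Gamma_0(\one_{\{k\geq R\}}\xiplus)\bigr)_{L^2(\R^3_\eps)}
=-2\pi\int_1^\infty\frac{\diff p}{p}\int_{-1}^1\diff t\,t\int_0^{1/p}\diff q\,\frac{q^{-s(m)}}{q^2+1+\frac{2}{m+1}qt}=\nu(m),
\]
so you have correctly identified \emph{which} quantity must be computed, but mislocated where the delicate cancellation happens (it is at large $k$, not at the origin) and would need to correct the two claims above before the plan closes.
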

	
\n
The above propositions will be proved in Section 3.2. Here we notice that  \eqref{fbc} represents a further boundary condition (see the analogy with \eqref{bcde}) satisfied by the elements of $\dom(\hamo)$, besides the standard boundary condition characterizing the STM operator	\eqref{plane asympt}. Condition \eqref{fbc} is equivalent to the boundary condition known in the physical literature characterizing the behavior of the wave function at short distances (see, e.g., \cite{wc1}). 

Let us conclude this Section on our main results with a few additional remarks that clarify the relevance of our work.
	
	\begin{remark}[Boundedness from below of $ \hamo $]
		\mbox{}	\\
		The lower bound discussed in Remark \ref{rem: boundedness} clearly applies to $ \hamo $ too, and therefore $ \hamo $ is a positive operator if $ \beta_n \geq 0 $, whereas it is semi-bounded if some $ \beta_n < 0 $, with the r.h.s. of \eqref{eq: boundedness} providing a bound from below.
	\end{remark}
			
	
	\begin{remark}[Parameter $ s(m) $ and STM extensions]
      	\label{rem: s and STM}
      	 \mbox{} \\
        	As anticipated in the Remark \ref{rem: s dependence} the specific choice $ s = s(m) $ is crucial only to capture STM-type operators. The above Proposition indeed guarantees that, if such a choice is made, $ \hamo $ extends the STM operator. However one can also realize that the converse is true, namely if $ s \neq s(m) $ then there is still a self-adjoint operator $ \hamo $ which however {\it is not} an extension of $ \tilde{H}_0 $. A simple way to prove this is by looking at the first boundary condition in \eqref{operator domain}: whenever $ s = s(m) $, the formal action of $ \Gamma_0 $ on $ \xximinus $ identically vanishes (see \eqref{gamma0 vanishes}), i.e., $ \Gamma_0 \xximinus = 0 $, so that the condition can be rewritten
        	\beq
                \label{eq: bc STM alternative}
                \int_{\R^3} \diff \kvp \: \hat{w}(\kvp,\kv) =  \lf( \widehat{\Gamma_0 \eta} \ri)(\kv),
        	\eeq
        	i.e., on the r.h.s. $ \xi $ can be replaced with $ \eta $, the full charge. This qualifies the STM extensions, since if it applies, any wave function in $ \dom(\hamo) $ has the STM behavior \eqref{plane asympt} on the planes where the interaction is supported, i.e.,
        	\beq
        		\label{eq: bc spatial}
                \Psi(\yv_1,\yv_2) =  \frac{(-1)^{i+1}\eta(\yv_j)}{4 \pi | \yv_i|} + \alpha (-1)^{i+1}\eta(\yv_j) + o(1),        \qquad  \mbox{as } | \yv_i| \to 0, \: i,j = 1,2, \, i \neq j.
        	\eeq
        	Obviously the same {\it does not} hold for a generic $ s \neq s(m) $, since the boundary condition  {\it can not} be cast in the \eqref{eq: bc STM alternative} form.
	\end{remark}
	
	\begin{remark}[Self-adjoint extension above $ \mssoo $]
		\mbox{}	\\
		As discussed in the previous Remark the self-adjoint extensions considered in this paper exist only for $ m < \mssoo $. For $ m > \mssoo $ no such extension exists and the STM operator is in fact essentially self-adjoint. The  unique self-adjoint extension is for $ m > \mssoo $ the one studied in \cite{CDFMT}, or the one corresponding formally to $ \beta = + \infty $: by taking $ \beta = + \infty $ we mean that $ q = 0 $ for any charge $ \eta $ and therefore functions in the domain of the operator have no singularity at the triple point $ \lf\{ \xv_1 = \xv_2 = 0 \ri\} $ (see next Remark \ref{rem: triple point}).
	\end{remark}
	
	\begin{remark}[Asymptotics at the triple coincidence point]
		\label{rem: triple point}
		\mbox{}	\\
		The second boundary condition in \eqref{operator domain} can be interpreted also as a prescription about the asymptotic behavior of singularities in position space at $\yv_1= \yv_2=0$, i.e., when the positions of the three particles coincide.  
Indeed, if we look at the behavior of the potentials generated by $\ximinus $ and $\xiplus$,  we have by scaling
		\[
			\pot (\ximinus Y^n_1) (\mu \yv_1 , \mu \yv_2)= \f{1}{\mu^{2+s} } \pot (\ximinus Y^n_1) ( \yv_1 , \yv_2), \qquad  
			\pot (\xiplus Y^n_1) (\mu \yv_1 , \mu \yv_2)=  \f{1}{\mu^{2-s} }  \pot (\xiplus Y^n_1) ( \yv_1 , \yv_2).
		\]
		Therefore we have
		\beq
			\label{eq: triple point}
			\pot \eta   (\mu \hat{\yv}_1 , \mu \hat{\yv}_2) \underset{\mu\to 0}{\sim} \f{1}{\mu^2} \lf( g_1(m)\f{q}{\mu^s} + g_2(m) \frac{\beta q}{\nu(m)}  {\mu^s}  + o(\mu^s) \ri),
		\eeq
		where we have denoted by $ \hat{\yv}_{i} $ any unit vector in $ \R^3 $, i.e., such that $ |\hat{\yv}_i| = 1 $, and\footnote{Note that by rotational invariance $ g_{1,2} $ are both independent of $ \hat\yv_{1,2} $.}
		\beq
			g_{1,2}(m) = \frac{1}{(2\pi)^3} \int_{\R^3} \diff \kv_1 \diff \kv_2 \: \frac{e^{i \kv_1 \cdot \hat\yv_1 + i \kv_2 \cdot \hat\yv_2}}{k_1^2 + k_2^2 + \frac{2}{m+1} \kv_1 \cdot \kv_2} \lf\{ \frac{Y_{1,n}(\vartheta_1,\varphi_1)}{k_1^{2 \pm s(m)}}  - \frac{Y_{1,n}(\vartheta_2,\varphi_2)}{k_2^{2 \pm s(m)}}  \ri\}.
		\eeq
		We see that the coefficients of the two more singular terms must be proportional as is customary stated in the physical literature \cite[note 43]{wc1}.
	\end{remark}
	
	\begin{remark}[Further extensions for $ \ell > 1 $]
		\mbox{}	\\
		A priori one could imagine of reproducing the analysis performed here within any sector with angular momentum $ \ell > 1 $ odd, and therefore costruct other families of self-adjoint extensions of the STM operator with a larger domain in the subspace of charge with angular momentum $  \ell  > 1 $. Obviously this would introduce further threshlolds $ \mssol $ depending on $ \ell $ for the existence of such extensions, which are allowed only for $ m < \mssol $. However the analysis contained in Appendix \ref{sec: critical masses} shows that $ \mssol $ is a decreasing function of $ \ell $ (see Lemma \ref{lemmafunzione}), i.e., $ \mssol \leq m_3^{\star\star} $ for any $ \ell \geq 3 $ odd, and already for $ \ell = 3 $ one has \eqref{m3star}
		\beq
			m^{\star\star}_3 < \mstar.
		\eeq
		Hence all those possible extensions are certainly unbounded from below. This is why we do not investigate this question further.
	\end{remark}
	
		\n
		It is worth to comment further on the peculiar structure of the 3-body Hamiltonians constructed and discussed in this work, which is typical for self-adjoint extensions of bounded-below symmetric operators (in our case, the operator $ \tilde{H}_0 $ we started with). Our findings first of all recover the natural self-adjoint Hamiltonian of STM type $ H_{0,\infty} $ (that is, $ \beta=\infty $ in the present notation): this is precisely the Hamiltonian associated with the quadratic form $ \F_0 $ defined in \eqref{formfo domain}-\eqref{qformo}. Next to it, for each fixed $ m $ in the considered regime, all other Hamiltonians $ \hamo $ have been obtained by enlarging the domain of the quadratic form $ \F_0 $, as done in \eqref{form domain}-\eqref{form action}, and this enlargement consisted of adding each time a one-dimensional subspace of charges spanned by a function that has, radially in the momentum coordinate, the singularity $ \xi^-(k)=k^{-2+s(m)} $. This is a general fact for the quadratic form of each extension of a semi-bounded symmetric operator, that is, such a form is defined also on a suitable additional one-dimensional subspace of the defect subspace of the initial operator (see, e.g., \cite{simon,bi}).
		
In our case, we see that the singular behaviour of the extra charge qualifies the domain of the extended quadratic form. We then demonstrated that for each such enlarged forms the domain of the corresponding self-adjoint Hamiltonian not only reproduces the standard STM condition at the coincidence hyperplane, but it is further characterised by an additional relation between the extra (singular) charge and the regular charge -- this is the second condition in \eqref{operator domain}. Last, and physically most relevant, we cast this constraint on the singular charge in a form \eqref{fbc} that, when applicable, has a natural interpretation as boundary condition of the three-body wave function at the triple coincidence point \eqref{eq: triple point}.

In short we may say that, fixed $ m $, each Hamiltonian of the class considered in this work is eventually characterised by the worse (most singular) behaviour that the wave-functions of its domain undergo in the vicinity of the triple coincidence point.

\section{Proofs}

\subsection{Closedness of the quadratic form}
\label{sec: closedness}

This Section is devoted to the proof of the closedness of the quadratic form $ \formo $ on the domain $ \dom[\formo] $ given by \eqref{form domain}. 

\n
The first trivial but useful observation it that the quadratic form $ \qformo $ acting on the charge is invariant under rotations, i.e., it is block diagonal in the subspace decomposition of the Hilbert space induced by eigenvectors of the angular momentum. More precisely, let us introduce the subspaces

\be
\hilb_{\ell} = \bigg\{ \eta \in L^2(\R^3) \: | \: \hat{\eta}(\kv) = \sum_{n=-\ell}^{\ell} \tilde{\eta}_{\ell,n}(k) Y^{n}_{\ell}(\vartheta_k, \varphi_k) \bigg\}
\ee
and
\beq
	\hilb_{\ell,n} = \left\{ \eta \in L^2(\R^3) \: | \: \hat{\eta}(\kv) = \tilde{\eta}_{\ell,n}(k) Y^{n}_{\ell}(\vartheta_k, \varphi_k) \right\}.
\eeq
We can thus consider separately the closure of the form restricted to any given subspace $ \hilb_{\ell, n} $ of the charge space. 
Indeed, for any admissible charge $ \eta $ in $ \dom\lf[\formo\ri] $, the decomposition
\beq
	\label{charge spherical}
	 \hat\eta(\kv) = \sum_{\ell \in \N} \sum_{n = -\ell}^{\ell} \tilde\eta_{\ell, n}(k) Y_{\ell}^n(\vartheta_k,\varphi_k),
\eeq
implies the splitting \cite[Lemma 4.2]{CDFMT}
\beq
	\qformo[\xi] = \sum_{\ell \in \N} \sum_{n = -\ell}^{\ell} \qformo[\tilde\xi_{\ell, n}],
\eeq
where (recall that $ P_{\ell} $ is the $\ell-$th Legendre polynomial)
\bml{
	\qformo[\tilde\xi_{\ell, n}] : = 2\pi^2 \frac{\sqrt{m(m+2)}}{m+1} \int_{0}^{\infty} \diff p \: p^3 \: |\tilde\xi_{\ell, n}(p)|^2 	\\
					+ 2 \pi \int_{-1}^1 \diff t \: P_{\ell}(t) \int_{0}^{\infty} \diff p \: p^2 \int_0^{\infty} \diff  q \: q^2 \frac{\tilde\xi_{\ell, n}^*(p) \tilde\xi_{\ell, n}(q)}{p^2 + q^2 + \frac{2}{m+1} p q t}. \label{qformol}
}
Note that the angular momentum decomposition \eqref{charge spherical} is not the one associated with the whole wave function $ u $ but the one inherited from the function $ \eta $ of the singular term $ \pot \eta $. 

On the other hand it is obvious from the definition that for any $ u \in \dom[\formo] $ such that $ u = w + \pot \eta $, with 
\beq
	\eta \in \lf( \bigoplus_{\ell \neq 1} \hilb_{\ell} \ri) \bigcap H^{-1/2}(\R^3),
\eeq
then
\beq
	\formo[u] = \F_0[u],
\eeq
i.e., the only subspace on which the quadratic form $ \formo $ differs from $ \F_0 $ is the one identified by the condition $ \eta \in \hilb_1 $. We thus restrict our analysis to the quadratic forms
\begin{eqnarray}
 	\dom\lf[\formol \ri] &=& \lf\{ \lf. u \in \dom[\formo] \: \ri| \: \eta \in \hilb_{1,n} \cap H^{-1/2}(\R^3), \eta = \xi + q \xximinus, \xi \in \dot{H}^{1/2}(\R^3), q \in \bC \ri\},	\label{formol domain} \\
 	\formol[u] & = & \uform[w] + 2\qformo[\xi] + \beta \lf| q \ri|^2. \label{formol action}
\end{eqnarray} 
where $ n = - 1, 0, +1 $ and $ \beta \in \R $.
Note that the $ n-$dependence of $ \formol $ is in fact trivial, namely it appears only in the domain decomposition whereas the action of both forms $ \formol $ and $ \qformo^{(1)} $ does not depend on $ n $. Hence if one proves closedness of each form $ \formol $ separately, Theorem \ref{teo: closure} is proven as well. From now on we will simplify the notation by redefining 
	\beq
		\label{notation abuse 1}
		\qformo : = \qformo^{(1)},		\qquad		\Xi^- : = \xximinus.
	\eeq 


	\begin{pro}[Closedness of $ \formol $]
		\label{pro: formol closure}
		\mbox{}	\\
		For any $ n = - 1, 0, +1$, $ \beta \in \R $ and $ \mstar < m < \mss $, the quadratic form $ \formol $ is closed and bounded from below on $ \dom[\formol] $.
	\end{pro}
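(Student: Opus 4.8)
The goal is to show that the quadratic form $\formol$ of \eqref{formol domain}--\eqref{formol action} is closed and bounded from below. Writing a generic element as $u = w + \pot\xi + q\,\Xi^-$ with $w \in \dot{\hf}(\R^6)$, $\xi \in \dot H^{1/2}(\R^3)$, $q \in \bC$ (and the constraint that $\xi + q\,\Xi^-$ lies in $H^{-1/2}$, which only affects small momenta), I would proceed by establishing: (i) the quadratic form $\qformo$ restricted to $\hilb_{1,n}$ is closed and bounded below on $\dot H^{1/2}$ — this is essentially the content of \cite{CDFMT} for the mass regime $m > \mstar$; (ii) the ``cross term'' produced by the splitting $u = w + \pot\xi + q\,\Xi^-$ can be controlled; and (iii) the extra summand $\beta|q|^2$ is a form-bounded perturbation with relative bound zero (or handled directly when $\beta \geq 0$). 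The positivity statement of Remark \ref{rem: boundedness} for $\beta_n \geq 0$ will drop out immediately once $\qformo \geq 0$ on $\hilb_{1,n}$ is known.

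\textbf{Main steps.} First I would expand $\formol[u]$ explicitly. The kinetic term $\uform[w]$ gives $\|w\|_{\dot{\hf}}^2$ up to constants. The subtle point is that $w$ is \emph{not} the full ``regular part'': the potential $\pot\eta$ with $\eta = \xi + q\Xi^-$ is split, and one must check that $\pot(q\Xi^-) \in \ldf(\R^6)$ and estimate how its $\dot{\hf}$-seminorm (it is not in $\dot{\hf}$, only in $L^2$) interacts with the rest. Concretely, using the explicit scaling $\pot(\Xi^- Y_1^n)$ is homogeneous of degree $s-2$ in momentum, so its contribution to $\uform$ would be $\int |k|^2 |\widehat{\pot\Xi^-}|^2$, which diverges at large $k$ precisely when $s < 1$ fails to be enough — this is where $m < \mss$, i.e. $s(m) < 1$, is used, and matches the remark that $k^{-1/2}\Xi^-$ is $L^2$ at infinity iff $s < 1$. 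So the first obstacle is to correctly account for the fact that $\Xi^-$ only contributes through the quadratic form of $\hamfree$ up to a genuinely singular, but still finite, piece. I would isolate the diagonal part $2\pi^2\sqrt{m(m+2)}/(m+1) \int p^3 |\tilde\xi_{1,n}(p)|^2$ and the off-diagonal integral operator in \eqref{qformol} with $P_1(t) = t$, and use the analysis of \cite{CDFMT} showing this combination is $\geq 0$ (in fact $\geq c \|\xi\|_{\dot H^{1/2}}^2$ up to lower-order corrections) for $m > \mstar$.

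\textbf{The hard part.} The genuine difficulty is controlling the mixed terms $\langle \pot\xi, \pot(q\Xi^-)\rangle$-type contributions coming from $\qformo$ evaluated on $\xi + q\Xi^-$ versus on $\xi$ alone — i.e., understanding why the form action is $\uform[w] + 2\qformo[\xi] + \beta|q|^2$ with $\xi$ and $q$ \emph{decoupled}, rather than carrying cross terms. This decoupling must come from the fact that $\Xi^- Y_1^n$ is formally annihilated by $\Gamma_0$ (the ``three-body resonance'' condition \eqref{exponent s}, which is exactly what $s = s(m)$ encodes), so that the bilinear form $\Phi_0$ pairs $\Xi^-$ trivially against $\dot H^{1/2}$ charges modulo convergent boundary terms; I would make this rigorous by a cutoff argument at momentum $|k| \leq \Lambda$, computing the bilinear form $\qformo(\xi, \Xi^-)$ on the truncated domain and showing the limit is finite and produces precisely the coefficient structure. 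The constants $c_1, D_1, D_2$ of \eqref{eq: boundedness} then enter through the estimate $\|\pot\xi\|_{L^2}^2 \lesssim \|\xi\|_{\dot H^{1/2}}^2$ and the lower bound on $\qformo$, giving a bound $\formol[u] \geq -C(\beta)|q|^2 \geq -C'(\beta)^{1/s}\|u\|^2$ by optimizing over the scale, which yields semi-boundedness. Closedness then follows because: modulo the explicit finite-rank modification by $q$ and $\beta|q|^2$, the form differs from the closed form $\F_0$ of \cite{CDFMT} by a bounded perturbation on a space that is complete in the graph norm — one checks that a Cauchy sequence $(u_k)$ in the form norm forces $w_k \to w$ in $\dot{\hf}$, $\xi_k \to \xi$ in $\dot H^{1/2}$, and $q_k \to q$ in $\bC$ separately, hence $u_k \to u$ with $u$ in the domain. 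I expect the bulk of the real work, and the place where the mass restriction $m < \mss$ is essential, to be in verifying that $\pot\Xi^- \in \ldf$ and that all the singular integrals in the cross terms converge.
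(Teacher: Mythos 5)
Your overall architecture (closedness by forcing each piece $w_k,\xi_k,q_k$ of a form-Cauchy sequence to converge separately, plus a separate semiboundedness argument for $\beta<0$) matches the paper's. However, the step you single out as ``the hard part'' is not part of this proof at all: the decoupled action $\uform[w]+2\qformo[\xi]+\beta|q|^2$ in \eqref{formol action} is a \emph{definition}, not something derived by evaluating $\F_0$ on $u=w+\pot(\xi+q\Xi^-)$ and cancelling cross terms, and the resonance condition $\Gamma_0\Xi^-=0$ plays no role in closedness. Indeed Remark \ref{rem: s dependence} states that closedness holds for \emph{any} $0<s<1$; the condition $s=s(m)$ only enters later (Proposition \ref{pro: STM}) to ensure the operator extends $\tilde H_0$. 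What genuinely needs $m<\mss$ here is only that $\Xi^-\in H^{-1/2}$ at infinity, hence $\pot\Xi^-\in L^2$ away from the origin and the decomposition is meaningful; your phrasing about $\Xi^-$ ``contributing to $\uform$'' is off, since $\uform$ is applied to $w$ only.

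There are then two concrete gaps in what you do attempt. (i) For $\beta<0$, the chain $\formol[u]\geq -C(\beta)|q|^2\geq -C'\|u\|_{L^2}^2$ is false as written: $|q|$ is \emph{not} controlled by $\|u\|_{L^2}$ (nor by $\|\xi\|_{\dot H^{1/2}}$). The actual mechanism is to add $\la\|u\|_{L^2}^2$, sacrifice a fraction of the kinetic and $\qformo$ terms restricted to frequencies above $\sqrt\la$, complete the square in $w$ against $\pot\eta$, and invoke the \emph{lower} bound $\|\pot\eta\|^2_{L^2(k_i\geq\eps)}\gtrsim \int_{p\geq\eps}\diff\pv\,p^{-1}|\hat\eta(\pv)|^2$ of Lemma \ref{lem: pot norm 1} (you quote only the upper bound for $\pot\xi$, which goes the wrong way). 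This produces a term $\la\,|q|^2\int_{p\geq\sqrt\la}p^{-1}|\widehat{\Xi^-}|^2\propto \la^{s}|q|^2$, which beats $|\beta||q|^2$ for $\la$ large; that is the origin of the $|\beta|^{1/s}$ in \eqref{eq: boundedness}. (ii) In the closedness step you assert $q_k\to q$ with no mechanism; since $\beta$ may be $\leq 0$, the form gives no coercivity in $q$, and the convergence must be extracted from the $L^2$ convergence of $u_k$ on $\{k_1,k_2\geq\eps\}$ after subtracting the already-convergent $w_k$ and $\pot\xi_k$, leaving $\pot(q_k\Xi^-)$ convergent and hence $q_k$ Cauchy. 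Relatedly, $\formol$ is not a ``bounded perturbation of $\F_0$ on a complete space'': its domain strictly enlarges that of $\F_0$ by the singular charges, so closedness of $\F_0$ cannot simply be transported.
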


	\begin{proof}	
		The proof is slightly simpler in the case $ \beta \geq 0 $, that we will consider first, since boundedness from below is much easier to show. All the terms of the quadratic form are indeed positive in this case and the only one whose positivity is non-trivial is $ \qformo $, but the assumption $ m > \mstar $ directly implies $ \qformo \geq 0 $ (see \cite[Proposition 3.1]{CDFMT}).
		
		Let us then consider a sequence $u_i \in \mathscr{D}[\formol] $ such that $\formol [u_i - u_j] \to 0$ for $ i,j \rightarrow \infty$ and $u_i \rightarrow u$ in $L^2_{\mathrm{f}}(\R^6)$ for $  i \rightarrow \infty$.  Since $ u_i \in \mathscr{D}[\formol] $, there exist $ w_i \in \dot{\hf}(\R^6) $, $ \xi_i \in \dot{H}^{1/2}(\R^3) $ and $ q_i \in \mathbb{C} $ such that
		\beq
			u_i = w_i + \pot \lf(\xi_i + q_i \Xi^- \ri).
		\eeq
		Moreover using the lower bound in \cite[Eq. (3.13)]{CDFMT} for $ \qformo $, we have
		\bml{ \label{312}
			\formol [u_i -u_j] \geq \F_0[w_i - w_j] +2 \qformo[\xi_i - \xi_j] \\
			\geq D_1  \int_{\R^6} \!\diff \kv_1 \diff \kv_2 \lf( k_1^2 + k_2^2  \ri) \lf|						(\hat{w}_i-\hat{w}_j)(\kv_1,\kv_2) \ri|^2  
			+ D_2 \, \disp\int_{\R^3} \!\!\diff \pv 				\, p \, \lf|(\hat{\xi}_i - \hat{\xi}_j)(\pv)\ri|^2  }
		\be
			\label{eq: d1 d2}
			D_1=\f{m}{m+1} \, ,	\qquad		 D_2= 4\pi^2  \tx\frac{\sqrt{m(m+2)}}{m+1} \,(1 - \Lambda_1(m)) 
		\ee
		where $ \Lambda_{1}(m) < 1 $  as long as $ m  > \mstar $. Hence
		\be\label{cauwn}
			\int_{\R^6} \!\diff \kv_1 \diff \kv_2 \lf( k_1^2 + k_2^2  \ri) |(\hat{w}_i-\hat{w}_j)(\kv_1,\kv_2)|^2 \underset{i,j \to \infty}{\longrightarrow} 0
		\ee
		and
		\be\label{cauxin}
			\int_{\R^3} \!\!\diff \pv \, p \, |(\hat{\xi}_i - \hat{\xi}_j)(\pv)|^2 \underset{i,j \to \infty}{\longrightarrow} 0.
		\ee 
		Due to (\ref{cauwn}), there exists $\hat{g} \in   L^2_{\mathrm{f}}(\R^6)$ such that
		\be
			\int_{\R^6} \!\diff \kv_1 \diff \kv_2 \left| \sqrt{k_1^2 + k_2^2} \,\hat{w}_i (\kv_1,\kv_2) - \hat{g} (\kv_1,\kv_2) \right|^2 \underset{i \to \infty}{\longrightarrow} 0
		\ee
		and for any $\eps >0$
		\be\label{conwn}
			\int_{\R^6_{\eps}} \!\diff \kv_1 \diff \kv_2 \left| \hat{w}_i (\kv_1,\kv_2) - \hat{w}(\kv_1,\kv_2) \right|^2 \underset{i \to \infty}{\longrightarrow} 0,
		\ee
		where $\R^d_{\eps} $ is defined in \eqref{rdeps} and $ w \in \dot{H}^1(\R^6) $ with
		\be
			\hat{w}(\kv_1,\kv_2)= \f{\hat{g} (\kv_1,\kv_2)}{ \sqrt{k_1^2 + k_2^2}}. 
		\ee
 		Analogously, from (\ref{cauxin}) there exists $\hat{\nu} \in L^2(\R^3)$ such that 
		\be\label{conxi}
			\int_{\R^3} \!\!\diff \pv \,  \left| \sqrt{p}\, \hat{\xi}_i (\pv) - \hat{\nu}(\pv)\right|^2 					\underset{i \to \infty}{\longrightarrow} 0,	\qquad
			\int_{\R^3_{\eps}} \!\!\diff \pv \, \left| \hat{\xi}_i (\pv) -\hat{\xi}(\pv) \right|^2 \underset{i \to \infty}{\longrightarrow} 0,
		\ee
		where $ \xi \in \dot{H}^{1/2}(\R^3) $ and
		\be
			\hat{\xi}(\pv)= \f{\hat{\nu}(\pv)}{\sqrt{p}}.
		\ee
		Notice that (\ref{conxi}) also implies
		\be
			\int_{\R^6_{\eps}} \!\diff \kv_1 \diff \kv_2 \left| \widehat{\pot \xi}_i(\kv_1,\kv_2) - \widehat{\pot \xi}(\kv_1,\kv_2) \right|^2 \underset{i \to \infty}{\longrightarrow} 0.
		\ee
		From (\ref{conwn}), (\ref{conxi}) and the $L^2-$convergence of $u_i$ to $u$, we have convergence of $\pot (q_i \Xi^-)$ in $L^2(\R^6_{\eps})$. Hence $q_i$ is a Cauchy sequence in $\bC$ and there exists $q \in \bC$ such that $q_i \rightarrow q$ and $\pot (q_i \Xi^- ) \rightarrow \pot (q\Xi^- )$ in $L^2(\R^6_{\eps})$ for $i \rightarrow \infty$. Thus we obtain 
		\be
			u=w+\pot \left( \xi +q \Xi^-\right),
		\ee 
		which means $u \in  \mathscr{D}[\formol] $. It is now straightforward to verify that $\formol [u_i ] \rightarrow \formol[u]$ as $i \rightarrow \infty$, concluding this first part of the proof.
		
		In other case $ \beta < 0 $ we have to be more careful in bounding from below the form. For any $\la >0$, from \eqref{312} we have
\bml{ \label{st1}
			\formol [u] +\la \|u\|^2_{L^2(\R^6)} 		\\	\geq (D_1 -\delta_1) \!\!\int_{\R^6} \!\diff \kv_1 \diff \kv_2 \lf( k_1^2 + k_2^2  \ri) \lf|						\hat{w}(\kv_1,\kv_2) \ri|^2  
			+ (D_2 -\delta_2) \!\! \disp\int_{\R^3} \!\!\diff \pv 				\, p \, \lf|\hat{\xi}(\pv)\ri|^2  	 +Q_{\la} (u)
		}
		where
		\be
		Q_{\la}(u) :=\delta_1 \!\!\int_{\R^6} \!\diff \kv_1 \diff \kv_2 \lf( k_1^2 + k_2^2  \ri) \lf|						\hat{w}(\kv_1,\kv_2) \ri|^2  
			+ \delta_2 \!\! \disp\int_{\R^3} \!\!\diff \pv 				\, p \, \lf|\hat{\xi}(\pv)\ri|^2 +\la \|w+\pot \eta\|^2_{L^2(\R^6)} +\beta |q|^2  
		\ee
		and $\delta_i \in (0,D_i)$, $i=1,2$.  For the estimate of $Q_{\la}$ we first notice that
		\be
		\int_{\R^6} \!\diff \kv_1 \diff \kv_2 \lf( k_1^2 + k_2^2  \ri) \lf|						\hat{w}(\kv_1,\kv_2) \ri|^2  \geq \la \int_{k_1^2+k_2^2 >\la} \!\diff \kv_1 \diff \kv_2  \lf|						\hat{w}(\kv_1,\kv_2) \ri|^2 = : \la \, \lf\|w\ri\|^2_{\la, 1}
		\ee
		and
		\be
		\disp\int_{\R^3} \!\!\diff \pv 				\, p \, \lf|\hat{\xi}(\pv)\ri|^2 \geq \la \disp\int_{p>\sqrt{\la}} \!\!\diff \pv \,\f{1}{p}				 \, \lf|\hat{\xi}(\pv)\ri|^2 =: \la \, \lf\|\xi\ri\|^2_{\la,2}.
		\ee
	Therefore
	\bml{
	Q_{\la}(u) \geq \la \bigg( \delta_1  \|w\|^2_{\la1} + \delta_2  \|\xi\|^2_{\la,2} + \|w +\pot \eta\|^2_{\la,1} \bigg) + \beta |q|^2 \\
	\geq \la \bigg[    (\delta_1 +1) \|w\|^2_{\la,1} - 2 \|w\|_{\la,1} \|\pot \eta\|_{\la,1} + \|\pot \eta\|^2_{\la,1}   +\delta_2 \|\xi\|^2_{\la,2}  \bigg] + \beta |q|^2 \\
	= \la \bigg[    \left(\sqrt{\delta_1 +1} \|w\|_{\la,1} - \f{1}{\sqrt{\delta_1 +1}} \|\pot \eta\|_{\la,1}\right)^2 +  \f{\delta_1}{\delta_1 +1} \|\pot \eta\|^2_{\la,1}   +\delta_2 \|\xi\|^2_{\la,2}  \bigg] + \beta |q|^2 \\
	\geq 
	\la \bigg(      \f{\delta_1}{\delta_1 +1} \|\pot \eta\|^2_{\la,1}   +\delta_2 \|\xi\|^2_{\la,2}  \bigg) + \beta |q|^2
	}	
	where we have used the inequality $\|a +b\|^2 \geq \|a\|^2 +\|b\|^2 	-2 \|a\| \|b\|$. Using the lower bound in Lemma \ref{lem: pot norm 1} we have
		\be
		 \|\pot \eta\|^2_{\la,1} \geq c_1 \|\xi + q \Xi^- \|^2_{\la,2} \geq 
		 c_1 \|\xi\|^2_{\la,2} + c_1 |q|^2 \| \Xi ^- \|^2_{\la,2} -2 c_1 |q| \|\xi\|_{\la,2} \|\Xi^-\|_{\la,2}. 
		\ee
		Then, proceeding as before, we find
		\bml{\label{Q29}
		Q_{\la}(u) \geq \la \bigg[ \left( \f{\delta_1 c_1}{\delta_1 +1} + \delta_2 \right) \|\xi\|^2_{\la,2} 
		-2 \f{\delta_1 c_1}{\delta_1 +1} |q| \| \xi \|_{\la,2} \| \Xi^-\|_{\la,2} + \f{\delta_1 c_1}{\delta_1 +1} |q|^2 \|\Xi^-\|^2_{\la,2} \bigg] + \beta |q|^2\\
		\geq \la \, \f{\delta_1 \delta_2 c_1}{\delta_1 c_1 + \delta_2 (\delta_1 + 1)} |q|^2 \|\Xi^-\|^2_{\la,2} + \beta |q|^2.
		}
		An explicit computation of $\|\Xi^-\|^2_{\la,2}$ yields
		\be\label{ex30}
		\|\Xi^-\|^2_{\la,2} = \int_{\sqrt{\la}}^{\infty} \!\! \diff k \, k^{-3 + 2 s(m)} = \f{\la^{-1 + s(m)}}{2(1-s(m))}.
		\ee
		Inserting \eqref{ex30} in \eqref{Q29}, we conclude that there exists $\lambda_0 >0$ such that $Q_{\la}(u) >0$ for any  $\la >\la_0$.  Taking into account \eqref{st1}, this implies that the form is lower bounded. Furthermore, proceeding as in the case $\beta \geq 0$, one can easily show that  the form is also closed, concluding the proof. 
		\end{proof}

\subsection{Self-adjoint extensions}
\label{sec: sa extensions}

We now construct the self-adjoint operator associated with the quadratic form $ \formo $. Thanks to the
rotational invariance, 
we can restrict to one single form $ \formol $, for $ n = -1, 0, +1 $ and find the associated self-adjoint operator 
$  \hamol$.


	\begin{pro}[Self-adjointness of $ \hamo $]
		\label{pro: hamol}
		\mbox{}	\\
		For any $ n = - 1, 0, +1 $, $ \beta \in \R $ and $ \mstar < m < \mss $, the unique self-adjoint operator $ \hamol $ associated with the closed and bounded from below quadratic form $ \formol $ coincides with the restriction of $ \hamo $ to $ \hilb_n $.
	\end{pro}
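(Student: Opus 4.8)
The plan is to identify the self-adjoint operator $\hamol$ generated by the closed, semibounded form $\formol$ via the standard first-representation theorem, and then to show that this abstract operator coincides with the explicit restriction of $\hamo$ described by \eqref{operator domain}. First I would recall that, by Theorem \ref{teo: closure} (or rather Proposition \ref{pro: formol closure}), $\formol$ is closed and bounded below, so there is a unique self-adjoint operator $A_n$ with $\dom(A_n) \subset \dom[\formol]$ and $\formol[u,v] = (A_n u, v)$ for $u \in \dom(A_n)$, $v \in \dom[\formol]$. The task is then to compute $\dom(A_n)$ and the action of $A_n$. I would take a generic $u = w + \pot\eta \in \dom[\formol]$ with $\eta = \xi + q\,\Xi^-$, and impose that there exists $f \in \ldf(\R^6)$ with $\formol[u,v] = (f,v)$ for all $v = z + \pot\zeta \in \dom[\formol]$, $\zeta = \chi + r\,\Xi^-$.

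The core computation is to expand $\formol[u,v] = \uform[w,z] + 2\qformo[\xi,\chi] + \beta q \bar r$ and rewrite it, using the explicit form \eqref{potential} of $\pot$ and the definition \eqref{uform}, so as to pair everything against $v = z + \pot\zeta$. The key algebraic identity — already the engine behind the $H_0$ computation in \cite{CDFMT} — is that $\uform[w, \pot\zeta]$ produces (after using the definition of $\pot$ and integrating in the appropriate variable) the term $\int \diff\kv \, \bar{\hat\zeta}(\kv)\int\diff\kvp\,\hat w(\kvp,\kv)$, which must be balanced against the "charge" contributions $2\qformo[\xi,\chi]$ and $\beta q\bar r$. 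Matching the coefficient of $\bar{\hat z}$ forces $f = \hamfree w$, i.e. $w \in \dot{\hdf}(\R^6)$. Matching the coefficient of $\bar{\hat\chi}$ (the regular part of the test charge) yields the STM condition $\int_{\R^3}\diff\kvp\,\hat w(\kvp,\kv) = (\widehat{\Gamma_0\xi})(\kv)$, together with the requirement $\Gamma_0\xi \in L^2(\R^3)$; here one must be careful that the pairing $\qformo[\xi,\chi]$ is only a form pairing, so extracting $\Gamma_0\xi$ as an $L^2$ function is exactly the regularity gain that characterizes the operator domain. Finally, matching the coefficient of $\bar r$ (the singular part of the test charge) produces the extra boundary condition $\beta_n q_n = 2\lim_{\eps\to 0}(\Xi^-, \Gamma_0\xi)_{L^2(\R^3_\eps)}$; the $\eps$-regularization appears precisely because $\Xi^-$ is not in $L^2$ near infinity against a generic $\Gamma_0\xi \in L^2$ and because the cross term $2\qformo[\xi,\Xi^-]$ (which formally would be $2(\Gamma_0\xi,\Xi^-)$) has to be understood as a principal-value / truncated limit.

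I expect the main obstacle to be the careful justification of this last step: showing that the bilinear form $\qformo[\xi,\Xi^-]$, which is a priori only defined through the closed-form expression on $\dom[\qformo]$ (not via an $L^2$ pairing since $\Xi^- \notin \dot H^{1/2}$), can be rewritten as the truncated limit $\lim_{\eps\to 0}(\Xi^-,\Gamma_0\xi)_{L^2(\R^3_\eps)}$ and that this limit exists and is finite when $\Gamma_0\xi \in L^2(\R^3)$. This requires controlling the behavior of $\Gamma_0\xi$ at large $k$ against the $k^{-2+s(m)}$ decay of $\Xi^-$, and using the fact (noted in Remark \ref{rem: s and STM}, eq. \eqref{gamma0 vanishes}) that the formal action of $\Gamma_0$ on $\Xi^-$ vanishes — which is why the pairing can be symmetrized and the $\eps$-cutoff controlled. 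A secondary technical point is to verify the converse inclusion: that any $u$ satisfying all the boundary conditions in \eqref{operator domain} lies in $\dom(A_n)$, i.e. that $v \mapsto \formol[u,v]$ is $L^2$-continuous, which amounts to running the same integration-by-parts identities in reverse and checking no boundary terms at $k=0$ or $k=\infty$ are lost.

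Once $A_n = \hamol$ is established with the stated domain and action $\hamol u = \hamfree w$, self-adjointness and semiboundedness are automatic from the representation theorem, and the claim that $\hamol$ is the restriction of $\hamo$ to $\hilb_n$ follows from the block-diagonal (rotational invariance) decomposition already set up before \eqref{formol domain}: $\formo = \bigoplus_n \formol \oplus \F_0|_{\ell\neq 1}$, so the associated operators decompose accordingly, and $\hamo = \bigoplus_n \hamol \oplus (H_0)|_{\ell\neq 1}$. This assembles Theorem \ref{teo: hamo}, and Proposition \ref{pro: STM} then follows because, by the choice $s=s(m)$, the first boundary condition can be recast in the form \eqref{eq: bc STM alternative} with the full charge $\eta$ in place of $\xi$, which is exactly the STM asymptotics \eqref{plane asympt}.
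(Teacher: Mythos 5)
Your proposal is correct and follows essentially the same route as the paper: testing the form identity $\formol[v,u]=(v,\psi)$ against the three nested classes of trial functions (purely regular $v$, then $v$ with a regular charge, then $v$ with a singular charge component) to extract in turn the action $\hamfree w_u$, the STM condition $\int\diff\kvp\,\hat w(\kvp,\kv)=(\widehat{\Gamma_0\xi})(\kv)$, and the boundary condition $\beta q=2\lim_{\eps\to0}(\xximinus,\Gamma_0\xi)_{L^2(\R^3_\eps)}$, with the truncated-limit interpretation of the cross pairing justified via $\Gamma_0\xximinus=0$ and the trace regularity $w_u|_{\yv_2=0}\in H^{1/2}(\R^3)$. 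The only detail you gloss over is the case split $0<s(m)<1/2$ versus $1/2<s(m)<1$ (needed because for small $s(m)$ the regular charge must absorb the low-frequency singularity of $\xximinus$, forcing the auxiliary decomposition $\xi_u=\xi_<-\Xi_<$), but your identification of the regularization issues shows you have the essential mechanism.
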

	
	\begin{proof}
By quadratic form theory
the domain of the operator $  \hamol$ corresponding to $ \formol $ is given by 
		\beq
			\mathscr{D}\lf(\hamol\ri) = \lf\{ \lf. u \in \dom\lf[\formol\ri] \: \ri| \: \exists \psi \in L^2(\R^6), \formol[v, u] = (v, \psi), \forall v \in \dom\lf[\formol\ri] \ri\},
		\eeq
		and moreover $ \psi = : \hamol u $.

		We notice that any $ u \in \mathscr{D}\lf(\hamol\ri) $ must decompose as
		\beq
			\label{u decomposition}
			u = w_u + \pot \eta_u,	\qquad		\eta_u = \xi_u + q_u \Xi^-
		\eeq
		with $ w \in \dot{H}_{\rm f}^1(\R^6) $, $ \eta_u \in \hilb_{1,n} $ and $ q_u \in \bC $. Now we characterize 
the domain $ \mathscr{D}\lf(\hamol\ri) $, by picking some special subclasses of vectors $ v \in \dom[\formol] $:
		\begin{enumerate}[(a)]
			\item Let us first pick some $ v \in \hf(\R^6) $, i.e., a wave function with no singular part. In this case we obtain for any $ u $ in the domain $ \dom\lf(\hamol\ri) $
				\bml{
					\label{identity dom 1}
		 			\formol[v,u] = F[v,w_u] = \int_{\R^6} \diff \kv_1 \diff \kv_2 \: \lf( k_1^2 + k_2^2 + \tx\frac{2}{m+1} \kv_1 \cdot \kv_2 \ri) \hat{v}^*(\kv_1,\kv_2) \hat{w}_u (\kv_1,\kv_2)	\\
		 			 = \lf(v, \psi\ri)_{\ldf(\R^6)},
		 		}
		 		for some $ \psi \in \ldf(\R^6) $. 
Since this is finite for any $v$ it implies that
		 		\bdm
		 			\lf( k_1^2 + k_2^2 + \tx\frac{2}{m+1} \kv_1 \cdot \kv_2 \ri) \hat{w}_u (\kv_1,\kv_2) \in \ldf(\R^6),
		 		\edm
		 		i.e., $ w_u \in \dot{H}^2_{\rm f}(\R^6) $. Moreover a straightforward consequence is that
		 		\beq
		 			\label{hamol action}
		 			\hamol u = \psi = \hamfree w_u.
		 		\eeq
		 		
			\item 	Now we take
					\bdm
						v = w_v + \pot \xi_v,
					\edm
					for some
$ \xi_v \in H^{1/2}(\R^3) \cap \hilb_{1,n} $, and  we obtain via \eqref{hamol action}
					\beq
						\label{identity dom 2}
						\formol[v,u] = F[w_v,w_u] + 2\qformo[\xi_v,\xi_u] = \lf(v, \psi\ri)_{\ldf(\R^6)} = \lf(w_v + \pot \xi_v, \hamfree w_u \ri)_{\ldf(\R^6)}.
					\eeq
					Now, even if their sum belongs  to $ L^2(\R^6) $ neither $ w_v $ nor $ \pot \xi_v $ does. Still, since $$ \lf(w_v, \hamfree w_u \ri)_{\ldf(\R^6)} = F[w_v,w_u]<\infty, $$
					we must have 
					\bdm
						\int_{\R^6} \diff \kv_1 \diff \kv_2 \frac{\lf( \hat\xi_v^*(\kv_1) - \hat\xi_v^*(\kv_2) \ri) \lf(\hamfree w_u\ri)(\kv_1,\kv_2)}{ k_1^2 + k_2^2 + \frac{2}{m+1} \kv_1 \cdot \kv_2} < +\infty.
					\edm 
					On the other hand $ \hamfree w_u \in L^2(\R^6)  $ for any $ w_u $ and therefore $ \pot \xi_v \in L^2(\R^6) $. Hence we can write
					 \bdm
					 	\lf(w_v + \pot \xi_v, \hamfree w_u \ri)_{\ldf(\R^6)} =  \lf(w_v, \hamfree w_u \ri)_{\ldf(\R^6)}  + \lf(\pot \xi_v, \hamfree w_u \ri)_{\ldf(\R^6)}.
					\edm
					Since both terms on the r.h.s. are separately finite, we arrive at
					\bdm
					 2\qformo[\xi_v,\xi_u]= \lf(\pot \xi_v, \hamfree w_u \ri)_{\ldf(\R^6)}
					\edm
					The previous equation can be rewritten as:
					\beq
						\label{qformol identity}
						\qformo[\xi_v,\xi_u] = \lf(\xi_v, \lf. w_u\ri|_{\yv_2 = 0} \ri)_{L^2(\R^3)}.
					\eeq
					the r.h.s. of \eqref{qformol identity} is finite under our hypothesis. This statement will be proved in \eqref{w regularity} contained in Lemma \ref{lem: w regularity}.

Before going on we have to discuss the regularity of $\xi_u$ according to $q_u$. If $ q_u = 0 $, then $ \eta_u = \xi_u  $ and $\xi_u \in H^{1/2}(\R^3)$.
 Since $\qformo$ is closed and positive, identity \eqref{qformol identity} together with \eqref{w regularity} imply that $ \xi_u \in \dom_0 $ and
					\bdm
						\qformo[\xi_v,\xi_u] = \lf( \xi_v, \gammaol \xi_u \ri)_{L^2(\R^3)},
					\edm
					where with a small abuse of notation we have denoted the restriction of $\Gamma_0$ to  $\hilb_{1,n}$ by the same symbol, so that we obtain the boundary condition
					\beq
						\label{w boundary condition}
						w_u(\yv,0) = \lf( \Gamma_0 \xi_u \ri)(\yv).
					\eeq
					We remark that this boundary condition implies more regularity on $ \xi_u $ than the one assumed a priori in the domain definition \eqref{operator domain}: next Lemma \ref{lem: w regularity} shows that a consequence of \eqref{w boundary condition} 
					 is 
					\beq
						\label{more regularity}
						\Gamma_0 \xi_u \in H^{1/2}(\R^3).
					\eeq

On the opposite if $ q_u \neq 0 $, we should distinguish two cases according to $s$.
\begin{itemize}
	\item If $\mathit{1/2 < s(m) < 1} $ then	
						the same argument as before applies since by definition $ \xi_u \in H^{1/2}(\R^3) $;
	\item If $\mathit{0 < s(m) < 1/2} $ we can decompose 
						\beq
							\xi_u = \xi_< - \Xi_<,
						\eeq
						where $ \xi_< \in H^{1/2}(\R^3) $ and $\Xi_<\in \dot{H}^{1/2}(\R^3) $ are given by (recall the notation \eqref{notation abuse 1})
						\beq \label{fine1}
							\hat\xi_< : = \hat\xi_u + q_u \one_{\{ k \leq \eps \}}  \widehat{\Xi^-}(\kv),
						\eeq
						and
						\beq \label{fine2}
							\widehat{\Xi_<}(\kv) : = q_{u} \one_{\{ k \leq \eps \}} \widehat{\Xi^-}(\kv).
						\eeq
						Then the condition \eqref{qformol identity} becomes
						\bdm
							\qformo[\xi_v, \xi_<] = \lf(\xi_v, \lf. w_u\ri|_{\yv_2 = 0} \ri)_{L^2(\R^3)} + \qformo[\xi_v, \Xi_<],
						\edm
						where all the three terms are separately finite. Now we notice that although $ \Xi_< \notin L^2(\R^3) $ we have
							$\Gamma_0 \Xi_< \in L^2(\R^3) $
						where $ \Gamma_0 $ stands now for the formal action of the integral operator \eqref{Gamma0} due to the cut-off we have introduced.
 
						\bdm
							\qformo[\xi_v, \Xi_<] = \lf( \xi_v, \gammaol \Xi_< \ri)_{L^2(\R^3)},
						\edm
						and we obtain
						\beq
							\qformo[\xi_v, \xi_<] =  \lf(\xi_v, \lf. w_u\ri|_{\yv_2 = 0} + \gammaol \Xi_<  \ri)_{L^2(\R^3)}.
						\eeq
						Since $ \lf. w_u\ri|_{\yv_2 = 0} + \gammaol \Xi_< \in L^2(\R^3) $ by \eqref{w regularity}, this implies that $ \xi_< $ must belong to $ \dom_0 $ 
and the boundary condition \eqref{w boundary condition} is satisfied as well. Notice however that since in this case $ \xi_u \notin L^2(\R^3) $ the expression $ \Gamma_0 \xi_u $ should be meant as the formal action of the operator, i.e., \eqref{Gamma0}.
	\end{itemize}				
					
			\item 	To complete the derivation of $ \dom $, it remains to verify what is the condition implied by a more generic $ v $, decomposing as $ v = w_v + \pot(\xi_v + q_v \xximinus) $: in this case we get
					\bml{
						\label{identity dom 3}
						\formol[v,u] = F[w_v,w_u] + 2\qformo[\xi_v,\xi_u] + \beta q_v^* q_u = \lf(v, \psi\ri)_{\ldf(\R^6)} 	\\
						= \lf(w_v, \hamfree w_u \ri)_{\ldf(\R^6)} + \lf( \pot \lf(\xi_v + q_v \xximinus\ri), \hamfree w_u \ri)_{\ldf(\R^6)}.
					}
					After a cancellation, using \eqref{w boundary condition} we arrive at
					\beq
						\label{identity dom 4}
						2\qformo[\xi_v,\xi_u] + \beta q_v^* q_u 
= 2 \lf(  \xi_v + q_v \xximinus,  \Gamma_0 \xi_u \ri)_{L^2(\R^3)}.
					\eeq
					Now we would like to decompose the scalar product into the sum of two scalar products in order to exploit the cancellation with the term $ 2 \qformo $ on the l.h.s., but in order to do that we have to make a distinction according to the value of $ s(m) $:
		\begin{itemize}		
			\item	If $\mathit{1/2 < s(m) < 1} $ since $ \xi_v \in H^{1/2}(\R^3) $ we can break the scalar product obtaining
							\beq
								\beta q_u = 2 \lf( \xximinus, \Gamma_0 \xi_u \ri)_{L^2(\R^3)},
							\eeq
							i.e.,
\[
\lim_{\varepsilon \to 0}  \lf( \xximinus, \Gamma_0 \xi_u \ri)_{L^2(\R^3_{\varepsilon})} =      \lf( \xximinus, \Gamma_0 \xi_u \ri)_{L^2(\R^3)}
\]
by dominated convergence and \eqref{operator domain} is proven.

			\item		If	$\mathit{0 < s(m) < 1/2} $, we have to go through a limit procedure: for any $ \eps > 0 $, we have
					\bmln{
						\lf(  \xi_v + q_v \Xi^-,  \Gamma_0 \xi_u \ri)_{L^2(\R^3)} = \lf(  \xi_<,  \Gamma_0 \xi_u \ri)_{L^2(\R^3)} + \lf( \Xi_>,  \Gamma_0 \xi_u \ri)_{L^2(\R^3)} 
						\\
						= \qformo[\xi_<,\xi_u] + \lf(\Xi_>, \Gamma_0   \xi_u \ri)_{L^2(\R^3)},
					}
					where 
					\beq
						\Xi_> : = \xximinus - \Xi_<
					\eeq
The l.h.s. is finite therefore in order to prove that 
this decomposition is meaningful it is sufficient to prove that $\lf(\Xi_>, \Gamma_0   \xi_u \ri)_{L^2(\R^3)}$ is finite. By  \eqref{more regularity}, we have
					\bmln{
						\lf| \lf(\Xi_>, \Gamma_0   \xi_u \ri)_{L^2(\R^3)} \ri| \leq \lf\|   \Xi_> \ri\|_{H^{-1/2}(\R^3)}  \lf\| \Gamma_0 \xi_u \ri\|_{H^{1/2}(\R^3)}	\\
						\leq c
 \lf\| \Gamma_0 \xi_u \ri\|_{H^{1/2}(\R^3)} \bigg( \int_{\eps}^{\infty} \diff k \: k^{-3+ 2s(m)} \bigg)^{1/2} < \infty.
					}
					Due to the definition of $ \Xi_> $ in \eqref{identity dom 4}, we get
					\beq
						\beta q_v^* q_u = 2 \qformo[\Xi_<,\xi_u] + 2  q_{v}^* \lf(\ximinus, \Gamma_0   \xi_u \ri)_{L^2(\R_{\eps}^3)}.
					\eeq
					Notice that as $ \eps \to 0 $ the first term on the r.h.s. vanishes:
					\bdm
						\lf| \qformo[\Xi_<,\xi_u] \ri| \leq C \lf\| \xi_u \ri\|_{\dot{H}^{1/2}(\R^3)} \int_0^{\eps} \diff k \: k^{-1+2s(m)} \leq C \eps^{2s(m)} \underset{\eps \to 0}{\longrightarrow} 0,
					\edm
					since $ \xi_u \in \dot{H}^{1/2}(\R^3) $ by hypothesis and \eqref{operator domain} is proven.
	\end{itemize}

		\end{enumerate}
	\end{proof}
	
	\begin{lem}
		\label{lem: w regularity}
		\mbox{}	\\
		Let $ u \in \mathscr{D}\lf(\hamol\ri) $ and $ w_u $ be its regular part of $ u $, then 
		\beq
			\label{w regularity}
			w_u(\yv, 0) \in H^{1/2}(\R^3).
		\eeq
	\end{lem}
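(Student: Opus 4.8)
The plan is to read \eqref{w regularity} as a codimension-three trace estimate on the plane $\{\yv_2=0\}\cong\R^3$, once the Sobolev regularity of $w_u$ is collected. Since $u\in\mathscr{D}(\hamol)$ is the regular part of an element of the form domain, we have $w_u\in\dot{\hf}(\R^6)$ (from \eqref{form domain}) and, in addition, $w_u\in\dot{\hdf}(\R^6)$ (established at the outset of the proof of Proposition \ref{pro: hamol}, i.e.\ before the present lemma is invoked there). Because $|2\kv_1\cdot\kv_2|\leq k_1^2+k_2^2$, the symbol of $\hamfree$ obeys $\frac{m}{m+1}(k_1^2+k_2^2)\leq k_1^2+k_2^2+\frac{2}{m+1}\kv_1\cdot\kv_2\leq\frac{m+2}{m+1}(k_1^2+k_2^2)$ for every $m>0$, so these spaces coincide with the ordinary $\dot{H}^1(\R^6)$ and $\dot{H}^2(\R^6)$ with equivalent norms. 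In Fourier variables the trace of $w_u$ on $\{\yv_2=0\}$ is $\widehat{w_u(\cdot,0)}(\kv_1)=c\int_{\R^3}\hat{w}_u(\kv_1,\kv_2)\,\diff\kv_2$ for a dimensional constant $c$, an identity valid for Schwartz functions and extended by density (note that $\dot{H}^2(\R^6)\not\subset C^0$, so $w_u(\cdot,0)$ must genuinely be read as a trace).

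The core step is a Cauchy--Schwarz estimate with a frequency-split weight. Splitting $\int_{\R^3}\diff\kv_2=\int_{|\kv_2|<1}+\int_{|\kv_2|\geq1}$ and applying Cauchy--Schwarz to the two pieces, with weight $k_1^2+k_2^2$ on the first and $(k_1^2+k_2^2)^2$ on the second, one gets
\[
(1+k_1)\,\big|\widehat{w_u(\cdot,0)}(\kv_1)\big|^2\leq C\,I_1(\kv_1)\!\!\int_{|\kv_2|<1}\!\!(k_1^2+k_2^2)\,|\hat{w}_u|^2\,\diff\kv_2+C\,I_2(\kv_1)\!\!\int_{|\kv_2|\geq1}\!\!(k_1^2+k_2^2)^2\,|\hat{w}_u|^2\,\diff\kv_2,
\]
with $I_1(\kv_1)=\int_{|\kv_2|<1}\frac{(1+k_1)\,\diff\kv_2}{k_1^2+k_2^2}$ and $I_2(\kv_1)=\int_{|\kv_2|\geq1}\frac{(1+k_1)\,\diff\kv_2}{(k_1^2+k_2^2)^2}$. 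A short computation shows $I_1,I_2$ are bounded uniformly in $\kv_1\in\R^3$: both are finite and continuous, with $I_1,I_2=O(1)$ as $k_1\to0$, and $I_1=O\big((1+k_1)k_1^{-2}\big)$, $I_2=O\big((1+k_1)k_1^{-1}\big)$ as $k_1\to\infty$. Integrating the displayed inequality over $\kv_1\in\R^3$ and using Fubini,
\[
\int_{\R^3}(1+k_1)\,\big|\widehat{w_u(\cdot,0)}(\kv_1)\big|^2\,\diff\kv_1\leq C\big(\|w_u\|_{\dot{H}^1(\R^6)}^2+\|w_u\|_{\dot{H}^2(\R^6)}^2\big)<\infty,
\]
and since $1+k_1\geq\max\{1,k_1\}$ the left-hand side bounds both $\|w_u(\cdot,0)\|_{L^2(\R^3)}^2$ and $\|w_u(\cdot,0)\|_{\dot{H}^{1/2}(\R^3)}^2$, which is exactly $w_u(\cdot,0)\in H^{1/2}(\R^3)$.

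I expect the genuinely delicate point to be the low-frequency ($L^2$) part of the trace, which is also the reason for the slightly unusual splitting. The $\dot{H}^2$ bound by itself only yields the homogeneous space $\dot{H}^{1/2}(\R^3)$, while $\dot{H}^1(\R^6)$ admits no trace on a plane of codimension three at all (the integral $\int_{\R^3}\frac{\diff\kv_2}{k_1^2+k_2^2}$ diverges at infinity). Cutting $\kv_2$ at unit size reconciles the two: on $|\kv_2|\geq1$ the $\dot{H}^2$ information is more than sufficient, while on $|\kv_2|<1$ the truncation removes the divergence and lets the $\dot{H}^1$ information take over. The remaining work is just the elementary verification that $I_1$ and $I_2$ are bounded, which is routine.
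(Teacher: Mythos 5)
Your proof is correct, and it takes a genuinely different route from the paper's. The paper does not estimate the trace of $w_u$ directly: it observes that $w_u+\pot_<\eta\in \hdf(\R^6)$ (inhomogeneous $H^2$, since $u-\pot_>\eta\in L^2$), invokes the standard codimension-three trace theorem $H^2(\R^6)\to H^{1/2}(\R^3)$ for that combination, and then separately checks by a direct kernel computation — using $\eta\in H^{-1/2}(\R^3)$ — that $(\pot_<\eta)(\cdot,0)\in H^{1/2}(\R^3)$, so that the difference $w_u(\cdot,0)$ inherits the regularity. You instead never touch the charge $\eta$ at all: you use that the regular part lies in the homogeneous intersection $\dot{H}^1\cap\dot{H}^2$ (the $\dot H^1$ information coming from the form domain, the $\dot H^2$ from step (a) of the proof of Proposition \ref{pro: hamol}, with the same $w_u$ by uniqueness of the decomposition) and prove the trace bound by hand via Cauchy--Schwarz with a split of the $\kv_2$-integration at unit frequency; your verification that the two weights $I_1,I_2$ are uniformly bounded is correct, as is the observation that the symbol of $\hamfree$ is comparable to $k_1^2+k_2^2$. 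What your approach buys is self-containedness — no appeal to the trace theorem and no need to control the trace of the truncated potential — together with a transparent accounting of exactly which regularity is used where ($\dot H^1$ for $|\kv_2|<1$, $\dot H^2$ for $|\kv_2|\geq 1$, neither alone sufficing). The paper's version is shorter modulo the standard theorem but has to route the low-frequency information through the charge. Both arguments are sound and yield the same conclusion.
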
	
	
	\begin{proof}
		We first note that $ u \in \mathscr{D}\lf(\hamol\ri) $ implies the decomposition $ u = w_u + \pot \eta $ with $ w_u \in \dot{H}^2_{\mathrm{f}}(\R^6) $. However by decomposing $ \pot \eta $ into high and low frequency contributions one immediately obtains that  $ w_u + \pot_< \eta \in L^2_{\mathrm{f}} (\R^6) $ , and therefore  also $ w_u + \pot_< \eta \in \hdf(\R^6) $ holds,
where $ \pot_< \eta $ stands for the potential of the charge $ \eta $ cut for momenta 
$ k_1^2 + k_2^2 \leq \eps^2 $, $ \eps > 0 $, i.e., 
		\bdm
			\widehat{\pot_< \eta}(\kv_1,\kv_2)  : = \one_{\{ k_1^2 + k_2^2 \leq \eps^2 \}} \frac{\hat{\eta}(\kv_1) - \hat{\eta}(\kv_2)}{k_1^2 + k_2^2 + \frac{2}{m+1} \kv_1 \cdot \kv_2}. 
		\edm
		Then by standard Sobolev trace theorems, the trace $ (w_u+ \pot_< \eta)(\yv,0) $ belongs to $ H^{1/2}(\R^3) $. To complete the proof it only remains to show that $ (\pot_< \eta)(\yv,0) $ belongs to $ H^{1/2}(\R^3) $. This can be proven by direct inspection thanks to the assumption $ \eta \in L^2(\R^3) $.
In order to show it, we write
		\bdm
			g(\yv) : = \lf( \pot_< \eta \ri) (\yv,0) = \frac{1}{(2\pi)^3} \int_{k_1^2 + k_2^2 \leq \eps^2} \diff \kv_1 \diff \kv_2 \: \frac{e^{i \kv_1 \yv} \lf( \hat{\eta}(\kv_1) - \hat{\eta}(\kv_2) \ri)}{k_1^2 + k_2^2 + \frac{2}{m+1} \kv_1 \cdot \kv_2},
		\edm
		and compute
		\bdm
			\hat{g}(\kv) = \int_{\R^3} \diff \kv_2 \: \frac{\one_{\{ k^2 + k_2^2 \leq \eps^2 \}} \lf( \hat{\eta}(\kv) - \hat{\eta}(\kv_2) \ri)}{k^2 + k_2^2 + \frac{2}{m+1} \kv \cdot \kv_2},
		\edm
		so that
		\bmln{
			\lf| \hat{g}(\kv) \ri| \leq C \one_{\{ k \leq \eps \}} \int_{k_2 \leq \eps} \diff \kv_2 \: \frac{ \lf| \hat{\eta}(\kv) - \hat{\eta}(\kv_2) \ri|}{k^2 + k_2^2} 	\\
			\leq C \one_{\{ k \leq \eps \}} \bigg[ \lf| \hat{\eta}(\kv) \ri| + \lf\| \eta \ri\|_{H^{-1/2}(\R^3)} \bigg( \int_0^{\eps} \diff k_2 \frac{k_2^2 \sqrt{1 + \eps^2}}{(k^2 + k^2_2)^2} \bigg)^{1/2} \bigg],	\\
			\leq C \one_{\{ k \leq \eps \}} \lf[ \lf| \hat{\eta}(\kv) \ri| + \lf\| \eta \ri\|_{H^{-1/2}(\R^3)} k^{-1} \ri],
		}
		which compactly supported and square integrable as $ k \to 0 $, thanks to the hypothesis on $ \eta $, i.e., $ \eta \in H^{-1/2}(\R^3) $, 
then $g$ is square integrable. Since $\hat g$ is compactly supported then $g\in H^p (\R^3) $ for $p\geq 0$
	\end{proof}
	
		We now prove Proposition \ref{pro: STM} and, as in the above analysis, we restrict ourselves to the operator $ \hamol $.
		
		\begin{proof}[Proof of Proposition \ref{pro: STM}]
		Let $ u \in \dom(\hamol) $ with $ \xi_u \in C^{\infty}_0(\R^3) \cap \hilb_{1,n} $, then we want to show that 
		\bdn
			\hamol u & = &\hamfree w_u,	\nonumber		\\
			w_u(\yv, 0) & = & \lf( \gammaol \eta_u \ri)(\yv),
		\edn
		which is equivalent to prove that 
		\beq
			q_u = 0.
		\eeq
		Notice that we have taken a smooth charge $ \xi_u $ as in the definition of $ \tilde{H}_0 $ but a direct inspection of the following argument reveals that it could be applied as well to any $ \xi_u $, e.g., in $ H^{3/2}(\R^3) $.
		
		The only non trivial point is the analysis of the second boundary condition, which must result in a trivial identity. Indeed we shall now prove that, for any such $ \xi_u $, one has
		\bdm
			\lim_{\eps \to 0} \lf( \xximinus, \Gamma_0 \xi_u \ri)_{L^2(\R_{\eps}^3)} = 0.
		\edm
		The simplest way to prove it is by noting that the order of the integrals in the above expression can be exchanged: 
		\bmln{
			\lim_{\eps \to 0} \lf( \xximinus, \Gamma_0 \xi_u \ri)_{L^2(\R^3_{\eps})} = \lim_{\eps \to 0} \int_{\R^3_{\eps}} \diff \pv \: \xximinus(\pv)  \bigg\{ 2\pi^2  \tx\sqrt{\frac{m(m+2)}{(m+1)^2}} p \: \widehat{\xi_u}(\pv) + \disp\int_{\R^3_{\eps}} \diff \qv  \: \frac{\widehat{\xi_u}(\qv)}{p^2 + q^2 + \frac{2}{m+1} \pv \cdot \qv} \bigg\}		\\
			=	 \lim_{\eps \to 0} \bigg[2\pi^2  \tx\sqrt{\frac{m(m+2)}{(m+1)^2}} \disp\int_{\R_{\eps}^3} \diff \pv \: p \: \xximinus(\pv)  \widehat{\xi_u}(\pv) + \int_{\R^3_{\eps}} \diff \pv \diff \qv  \: \frac{\xximinus(\pv) \widehat{\xi_u}(\qv)}{p^2 + q^2 + \frac{2}{m+1} \pv \cdot \qv} \bigg]	\\
			= \lim_{\eps \to 0} \lf( \Gamma_0 \xximinus, \xi_u \ri)_{L^2(\R^3_{\eps})} = 0,
		}
		where we have exchanged the order of integration thanks to the finiteness of both integrals, as one can easily prove by exploiting the regularity of $ \xi_u $ and Cauchy-Schwarz inequality:
		\bmln{
			\bigg| \disp\int_{\R_{\eps}^3} \diff \pv \: p \:\xximinus(\pv) \widehat{\xi_u}(\pv) \bigg| \leq C \bigg[ \bigg( \int_{\R^3} \diff \pv \: p^3 \lf| \widehat{\xi_u}(\pv) \ri|^2 \bigg)^{1/2} \bigg( \int_{1}^{\infty} \diff p \ p^{-3+2s(m)} \bigg)^{1/2} 	\\
			+ \bigg( \int_{0}^1 \diff p \: p^{2 s(m)} \bigg)^{1/2} \bigg( \int_{p \leq 1} \diff \pv \:  \lf| \widehat{\xi_u}(\pv) \ri|^2 \bigg)^{1/2} \bigg] \leq C \lf\| \xi_u \ri\|_{H^1(\R^3)},
		}
		for any $ 0 < s(m) < 1 $. As last thing, we prove that $\Gamma_0 \xximinus=0$. For simplicity  from now on we make a little abuse of notation, setting
		\beq
			\label{notation abuse 2}
			\Gamma_0 : = \lf. \Gamma_0 \ri|_{\hilb_{1,n}},
		\eeq
		and compute
		\bml{
			\label{gamma0 vanishes}
			\lf(\Gamma_0 \ximinus\ri)(p)  = 2\pi^2 \sqrt{\frac{m(m+2)}{(m+1)^2}} \f{1}{p^{1-s} }+2\pi  \int_{-1}^1 \diff t \: t \int_0^{\infty} \diff q \: \frac{q^{s}}{q^2 + p^2 + \frac{2}{m+1} t pq} \\
			= \f{2\pi}{p^{1-s}} \lf[ \pi \sqrt{\frac{m(m+2)}{(m+1)^2}} +  \int_{-1}^1 \diff t \: t \int_0^{\infty} \diff q \: \frac{q^{s}}{q^2 + 1 + \frac{2}{m+1} t q}\ri] =0,
		}
	where in the last step we have used the definition of $s(m)$ (see \eqref{exponent s}).
	\end{proof}
	
	\n
	We conclude the Section with an investigation of the asymptotic behavior of charges in $ \dom(\hamol) $.
	
	\begin{proof}[Proof of Proposition \ref{pro: charge asympt}] Let $ \eta $ be an admissible charge belonging to $ \dom(\hamol) $ and $ \xi $ its regular part, then we can find $ \xi $ of the form
	\beq
		\hat{\xi}_{>}(\kv) = \lf( \frac{A}{k^{2+s(m)}} \one_{\{ k \geq R \}} + \chi(k) \ri) Y_{1}^{n}(\vartheta_k, \varphi_k),
	\eeq
	with $ \chi \in H^{1/2}(\R^3) $ and
	\beq
		\label{assumption chi}
		\lf| \chi(k) \ri| \leq \frac{C}{k^{\gamma}},	\qquad		\gamma > 2 + s(m),
	\eeq
	for $0<s(m)<1/2$. The same argument can be repeated when $ 1/2 < s(m) < 1 $ taking the whole $\xi $ as above.
	For such regular parts the overall charge $ \eta  $ is
	\bml{
		\hat{\eta}(\kv) = \lf( \frac{q}{k^{2- s(m)}} + \frac{A}{k^{2+s(m)}} \one_{\{ k \geq R \}} + \chi(k) \ri) Y_{1}^{n}(\vartheta_k, \varphi_k):=	\\
\lf( q \ximinus (k) + A \xiplus(k) \one_{\{ k \geq R \}} + \chi(k) \ri) Y_{1}^{n}(\vartheta_k, \varphi_k)
	}	
	where 
	\beq
		\xiplus(k) : = \frac{1}{k^{2 + s(m)}}
	\eeq
	and the coefficient $ A $ must satisfy the second boundary condition, i.e., 
	\bml{
		\label{STM 1}
		\beta q = \lim_{\eps \to 0} \lf( \ximinus, \Gamma_0 \xi \ri)_{L^2(\R_{\eps}^3)} = \lim_{\eps \to 0} \lf[ A \lf( \ximinus, \Gamma_0 \lf( \one_{\{ k \geq R \}} \xiplus \ri) \ri)_{L^2(\R^3_{\eps})} + \lf( \ximinus, \Gamma_0 \chi \ri)_{L^2(\R^3_{\eps})}  \ri] \\
		= A \lim_{\eps \to 0} \lf( \ximinus, \Gamma_0 \lf( \one_{\{ k \geq R \}} \xiplus \ri) \ri)_{L^2(\R^3_{\eps})}.
	}
	The vanishing of the second term can be proven as in the proof of Proposition \ref{pro: STM}. Since the diagonal term in the expression is absolutely convergent
	\beq
		\bigg| \int_{\R_{\eps}^3} \diff \kv \: \frac{1}{k^{1- s(m)}} \chi(k) \bigg| \leq C \int_{\eps}^{\infty} \diff k \: k^{-\gamma+1 + s(m)} < + \infty,
	\eeq
	thanks to \eqref{assumption chi}, then the order of the integrals in the off-diagonal term can be exchanged and $ \Gamma_0  \ximinus=0 $ appears.
	
	Then it remains to compute the integral in \eqref{STM 1}. Using \eqref{exponent s} which provides cancellations and scale invariance w.r.t. $R$, we have
	\bml{
		\lim_{\eps \to 0} \lf( \ximinus, \Gamma_0 \lf( \one_{\{ k \geq R \}} \xiplus \ri) \ri)_{L^2(\R^3_{\eps})} =  \disp\int_{R}^{\infty} \diff p \: \frac{1}{p} \bigg[ 2\pi^2 \tx\sqrt{\frac{m(m+2)}{(m+1)^2}} + 2\pi \disp\int_{-1}^1 \diff t \: t \int_{R/p}^{\infty} \diff q  \: \frac{q^{-s(m)}}{q^2 +1 + \frac{2}{m+1} q t} \bigg]	\\
		=  -2\pi \disp\int_{1}^{\infty} \diff p \: \frac{1}{p}\disp\int_{-1}^1 \diff t \: t \int^{1/p}_{0} \diff q  \: \frac{q^{-s(m)}}{q^2 +1 + \frac{2}{m+1} q t} = \nu(m).
	}
	In order to see that $ \nu(m) $ is finite it suffices to estimate the large $ p $ asymptotics of the last integral, which behaves like $ p^{-1+s(m)} $, so that the integrand goes asymptotically as $ p^{-2+s(m)} $, which is integrable. Moreover splitting the integral over $t$ one gets
	\beq
		\nu(m) =  \frac{8 \pi}{m+1} \disp\int_{1}^{\infty} \diff p \: \frac{1}{p}\disp\int_{0}^1 \diff t \: t^2 \int^{1/p}_{0} \diff q \frac{q^{1-s(m)}}{(q^2 +1)^2 - \frac{4}{(m+1)^2} q^2},	
	\eeq
	which shows that $ \nu(m) > 0 $. 
	
	Then going back to \eqref{STM 1}	
	\beq
		A = \frac{\beta q}{\nu(m)},		
	\eeq
	and asymptotically
	\beq
		\eta_{1}(k) \propto \frac{q}{k^{2 - s(m)}} + \frac{\beta q}{\nu(m)}	\frac{1}{ k^{2 + s(m)}} + o(k^{-2-s(m)}).
	\eeq
	\end{proof}




\vs

\appendix

\section{Critical Masses}
\label{sec: critical masses}
In this Appendix we study equation \eqref{exponent s} and we prove that the critical masses $m^\star $, $m ^{\star \star}$,  are well defined and $m ^{\star}<m ^{\star \star}$. Some of these properties were proved in \cite{CDFMT}
and \cite{FT} in a lesser generality but we prefer to give the analysis here for sake of completeness.

Let us denote by $F_{\ell} (m,\, s): [0,+\infty) \times [0,1]\to \R$ the following function:
\beq
		F_{\ell} (m,\, s) :=
		\pi \sqrt{\frac{m(m+2)}{(m+1)^2}} +  \int_{-1}^1 \diff t \: P_{\ell}(t) \int_0^{\infty} \diff p \: \frac{p^{s}}{p^2 + 1 + \frac{2}{m+1} t p} =: F_{\ell,1} (m)+ F_{\ell,2} (m,\, s) .
	\eeq
\begin{lemma}[Properties of $ F_{\ell} $]
	\label{lemmafunzione}
	\mbox{}	\\
For odd $\ell$ the function  $F_{\ell} $ enjoys the following properties:
\begin{enumerate}[a)]
\item $F_{\ell}$ is continuous and bounded;
\item for fixed $s$,  $F_{\ell} (\cdot ,\, s)$ is an increasing function; moreover  $F_{\ell} (0,\, s)<0$ and  $\lim_{m\to \infty} F_{\ell} (m,\, s)=2\pi^2$;
\item  for fixed $m$,  $F_{\ell} (m, \cdot)$ is a decreasing function;
\item for fixed $(m,s)$, we have $F_{\ell} (m ,s)< F_{\ell+2} (m ,s) $.
\end{enumerate}
\end{lemma}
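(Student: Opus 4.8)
Write $F_\ell=F_{\ell,1}+F_{\ell,2}$ as in the statement. The term $F_{\ell,1}(m)=\pi\sqrt{m(m+2)}/(m+1)=\pi\sqrt{1-(m+1)^{-2}}$ is completely elementary: it does not depend on $s$, it is continuous on $[0,\infty)$, strictly increasing, takes values in $[0,\pi)$, with $F_{\ell,1}(0)=0$ and $F_{\ell,1}(m)\to\pi$. So the whole content is about $F_{\ell,2}$, and the plan is to replace its defining double integral by a single integral against a modified Bessel function. First, for $|c|<1$ one has $p^2+2cp+1=2p(c+a_p)$ with $a_p:=\tfrac{p^2+1}{2p}\geq1$, hence $\tfrac{1}{p^2+2cp+1}=\tfrac1{2p}\int_0^\infty e^{-(c+a_p)\sigma}\diff\sigma$; by Tonelli this gives the Laplace representation $\int_0^\infty \tfrac{p^s}{p^2+2cp+1}\diff p=\int_0^\infty e^{-c\sigma}\rho_s(\sigma)\diff\sigma$ with $\rho_s(\sigma):=\tfrac12\int_0^\infty p^{s-1}e^{-a_p\sigma}\diff p\geq0$. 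Substituting this with $c=\tfrac{t}{m+1}$ into $F_{\ell,2}(m,s)=\int_{-1}^1 P_\ell(t)\big(\int_0^\infty\frac{p^s\,\diff p}{p^2+1+\frac{2}{m+1}tp}\big)\diff t$ and using the classical formula $\int_{-1}^1 P_\ell(t)e^{-xt}\diff t=(-1)^\ell\sqrt{2\pi/x}\,I_{\ell+1/2}(x)$ ($x>0$), one obtains, for \emph{odd} $\ell$,
\[
F_{\ell,2}(m,s)=-\sqrt{2\pi(m+1)}\int_0^\infty\frac{\rho_s(\sigma)}{\sqrt\sigma}\,I_{\ell+1/2}\!\Big(\frac{\sigma}{m+1}\Big)\diff\sigma .
\]
The interchange is justified by absolute convergence when $m>0$ and $0\leq s<1$; at the endpoints $m=0$ and $s=1$ the inner $p$-integral is only conditionally convergent (it becomes absolutely convergent once the $t$-, equivalently $\sigma$-, integration is performed first, thanks to the angular cancellation for odd $\ell$), and the identity is recovered by that order of integration or by continuity.

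For \textbf{(a)} I would use dominated convergence directly on the displayed formula. The function $\rho_s(\sigma)$ is jointly continuous in $(s,\sigma)$ and non-decreasing in $s$ (see below), while $\sqrt{m+1}\,I_{\ell+1/2}(\tfrac{\sigma}{m+1})$ is continuous in $m$ and non-increasing in $m$ (see below); hence on all of $[0,\infty)\times[0,1]$ the integrand is dominated by $\tfrac{\rho_1(\sigma)}{\sqrt\sigma}I_{\ell+1/2}(\sigma)$, which is integrable because $\rho_s(\sigma)=O(\sigma^{-1})$ as $\sigma\to0^+$, $\rho_s(\sigma)=O(\sigma^{-1/2}e^{-\sigma})$ as $\sigma\to\infty$, $I_{\ell+1/2}(\sigma)=O(\sigma^{\ell+1/2})$ as $\sigma\to0^+$, $I_{\ell+1/2}(\sigma)=O(\sigma^{-1/2}e^{\sigma})$ as $\sigma\to\infty$, and $\ell\geq1$. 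Dominated convergence yields joint continuity of $F_{\ell,2}$, and the same bound gives $\sup_{m,s}|F_{\ell,2}|<\infty$; together with $0\leq F_{\ell,1}<\pi$ this gives boundedness.

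For the monotonicity in $m$ in \textbf{(b)}: $F_{\ell,1}$ is increasing, and $\sqrt{m+1}\,I_{\ell+1/2}(\tfrac{\sigma}{m+1})$ is decreasing in $m$ since, with $u=m+1$ and $y=\sigma/u$, $\tfrac{\diff}{\diff u}\big(\sqrt u\,I_\nu(\sigma/u)\big)=\tfrac{1}{2\sqrt u}\big(I_\nu(y)-2yI_\nu'(y)\big)<0$ because $2yI_\nu'(y)=2\nu I_\nu(y)+2yI_{\nu+1}(y)>I_\nu(y)$ for $\nu=\ell+\tfrac12\geq\tfrac32$ (using $I_\nu'=\tfrac\nu y I_\nu+I_{\nu+1}$ and $I_{\nu+1}>0$). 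Hence $F_{\ell,2}(\cdot,s)$, and so $F_\ell(\cdot,s)$, is increasing. For the boundary values: at $m=0$, $F_{\ell,1}(0)=0$ and $F_{\ell,2}(0,s)=-\sqrt{2\pi}\int_0^\infty\tfrac{\rho_s(\sigma)}{\sqrt\sigma}I_{\ell+1/2}(\sigma)\diff\sigma<0$ (strictly positive integrand), so $F_\ell(0,s)<0$; as $m\to\infty$, $F_{\ell,1}\to\pi$ and, by dominated convergence, $F_{\ell,2}(m,s)\to0$, so $F_\ell(m,s)$ tends to a strictly positive constant. For \textbf{(c)}: $F_{\ell,1}$ is $s$-independent, and $\rho_s(\sigma)$ is non-decreasing in $s$ because the change of variable $p\mapsto1/p$ leaves $a_p$ invariant and turns $p^{s-1}\diff p$ into $p^{-s-1}\diff p$, whence $\partial_s\rho_s(\sigma)=\tfrac12\int_0^\infty p^{s-1}(\ln p)e^{-a_p\sigma}\diff p=\tfrac14\int_0^\infty p^{-1}(p^s-p^{-s})(\ln p)e^{-a_p\sigma}\diff p\geq0$, the last integrand being non-negative; therefore $F_{\ell,2}(m,\cdot)$ and $F_\ell(m,\cdot)$ are decreasing. (One could instead prove (b)–(c) by differentiating under the integral sign and exploiting the oddness of $P_\ell$ together with the same $p\mapsto1/p$ symmetry; this is transparent for $\ell=1$, where $tP_1(t)=t^2\geq0$, but the sign bookkeeping for general odd $\ell$ is cleaner through the Bessel representation.)

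Finally, \textbf{(d)} is immediate: for fixed $(m,s)$,
\[
F_{\ell+2}(m,s)-F_\ell(m,s)=\sqrt{2\pi(m+1)}\int_0^\infty\frac{\rho_s(\sigma)}{\sqrt\sigma}\Big(I_{\ell+1/2}\big(\tfrac{\sigma}{m+1}\big)-I_{\ell+5/2}\big(\tfrac{\sigma}{m+1}\big)\Big)\diff\sigma>0 ,
\]
since $\nu\mapsto I_\nu(y)$ is strictly decreasing for each fixed $y>0$. The main obstacle I anticipate is not conceptual but technical: establishing the Laplace/Bessel representation cleanly (the Tonelli/Fubini steps, and in particular the $m=0$ and $s=1$ endpoints, where the inner $p$-integral converges only after the angular integration and the order of integration must be chosen accordingly), and collecting the handful of standard but slightly delicate facts about modified Bessel functions used above ($I_\nu>0$; $I_\nu$ increasing in its argument; $I_\nu$ strictly decreasing in the order $\nu$; $I_\nu'=\tfrac\nu y I_\nu+I_{\nu+1}$). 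Once these are in place the four statements follow as indicated.
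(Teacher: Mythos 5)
Your proposal is correct and it takes a genuinely different route from the paper. The paper expands the kernel $1/(p^2+1+\frac{2}{m+1}tp)$ as a geometric series in $\frac{2tp}{(m+1)(p^2+1)}$ and exploits the moments $\int_{-1}^1 t^k P_\ell(t)\,\diff t$ to rewrite $F_{\ell,2}$ as minus an absolutely convergent series of positive terms, each separately monotone in $m$, in $s$ (via $p=e^x$ and the monotonicity of $\sinh$) and, after a double integration by parts on the Legendre weights, in $\ell$; your Laplace-plus-Bessel representation $F_{\ell,2}(m,s)=-\sqrt{2\pi(m+1)}\int_0^\infty\sigma^{-1/2}\rho_s(\sigma)\,I_{\ell+1/2}(\sigma/(m+1))\,\diff\sigma$ extracts the same sign and monotonicity information in one stroke, reducing all four properties to the positivity of $\rho_s$ and $I_\nu$, the $p\mapsto 1/p$ symmetry of $a_p$, and two standard Bessel recurrences. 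This is arguably cleaner for (d): the comparison $I_{\ell+1/2}(y)>I_{\ell+5/2}(y)$ follows at once from $I_{\nu-1}-I_{\nu+1}=\frac{2\nu}{y}I_\nu>0$ with $\nu=\ell+\frac32$, so you do not even need the general monotonicity of $I_\nu$ in the order. The price is the Fubini/endpoint bookkeeping you already flag; note only that at $s=1$ the inner $p$-integral is genuinely divergent (the integrand is $\sim p^{-1}$ at infinity), not merely conditionally convergent, so --- exactly as in the paper, which antisymmetrizes in $t$ first --- the identity at $s=1$ must be read as the definition of $F_{\ell,2}$ via the $t$-integrated (equivalently $\sigma$-integrated) form, or obtained by continuous extension. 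Finally, your computation gives $\lim_{m\to\infty}F_\ell(m,s)=\pi$ (since $F_{\ell,1}\to\pi$ and $F_{\ell,2}\to0$); this disagrees with the value $2\pi^2$ printed in the statement of the lemma, but it agrees with what the paper's own proof establishes, so the $2\pi^2$ is evidently a typo, and only the strict positivity of the limit is used in Proposition \ref{pro: critical masses}.
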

\begin{proof}
Continuity in the interior of the domain of definition follows from the integrability of the integrand. 
Some remarks are in order when
$s\to1 $ or $m\to 0$. Using the parity of Legendre polynomials we can cast  $F_{\ell}$ in the following form:
\begin{align}
F_{\ell} (m,\, s) &=
		\pi \sqrt{\frac{m(m+2)}{(m+1)^2}} 
+  \int_{0}^1 \diff t \: P_{\ell}(t) \int_0^{\infty} \diff p \: p^{s}\left( \frac{1}{p^2 + 1 + \frac{2}{m+1} t p}- \frac{1}{p^2 + 1 - \frac{2}{m+1} t p} \ri) \nonumber  \\
 &= \pi \sqrt{\frac{m(m+2)}{(m+1)^2}} 
-   \frac{4}{m+1} \int_{0}^1 \diff t \: t\, P_{\ell}(t) \int_0^{\infty} \diff p \: 
\f{ p^{s+1}}{( p^2 + 1 + \frac{2}{m+1} t p)( p^2 + 1 - \frac{2}{m+1} t p)}, \label{equivalent}
\end{align}
and we see that the limit $s \to 1 $ is harmless by dominated convergence. When $m \to 0$ a summable singularity appears in the integral
with respect to $t$, that is as $t \to -1$ we have
\[
 \int_0^{\infty} \diff p \: \frac{p^{s}}{p^2 + 1 + 2 t p} \sim \f{1}{\sqrt{t+1} } .
\]
Due  to the integrability of the singularity,  $F_{\ell} $ is finite and continuous by dominated convergence and property $a)$ is proved.

In order to prove $b)$, it is convenient to use a series representation of  $F_{\ell} $, that is:
\begin{align*}
F_{\ell} (m,\, s)& =
	\pi \sqrt{1-\frac{1}{(m+1)^2}} 
+ \int_{-1}^1 \diff t \: P_{\ell}(t) \int_0^{\infty} \diff p \: \frac{p^{s}}{p^2 + 1 }  \sum_{k=0}^\infty \lf( - \frac{2}{m+1} \f{t p}{p^2+1}\ri)^k \\
& = \pi \sqrt{1-\frac{1}{(m+1)^2}} 
+   \sum_{k=0}^\infty \lf( \f{-2}{m+1}\ri)^k   \int_{-1}^1 \diff t \: t^k \, P_{\ell}(t) 
\int_0^{\infty} \diff p \: \frac{p^{s+k}}{(p^2 + 1)^{k+1} }.
\end{align*}  
Notice that for even $k$
\[
 \int_{-1}^1 \diff t \: t^k \, P_{\ell}(t) =0
\]
and therefore only the odd terms actually appear in the series. In facts, we have
\beq \label{series1}
F_{\ell} (m,\, s)= 
 \pi \sqrt{1-\frac{1}{(m+1)^2}} 
-   \sum_{n=0}^\infty \lf( \f{2}{m+1}\ri)^{2n+1}   \int_{-1}^1 \diff t \: t^{2n+1} \, P_{\ell}(t) 
\int_0^{\infty} \diff p \: \frac{p^{s+2n+1}}{(p^2 + 1)^{2n+2} }.
\eeq
Moreover, using the definition of Legendre polynomial
\[
P_{\ell}(t) = \frac{1}{2^{\ell} \ell ! } \f{d^l}{dt^l } (t^2-1)^{\ell} ,
\]
and integrating by parts, we have
\[
 \int_{-1}^1 \diff t \: t^{2n+1} \, P_{\ell}(t)  = 
\begin{cases}
0, & \text{ if } 2n+1<\ell, \\
\displaystyle{\f{1}{2^l } \binom{2n+ 1}{\ell} \int_{-1}^1 \diff t \: t^{2n+1-\ell } \, (1-t^2)^{\ell}},  &  \text{ if } 2n+1\geqslant \ell.
\end{cases}
\]
With the change of index $2n+1-\ell=2k$ in the series, \eqref{series1} becomes
\bml{ \label{series2}
F_{\ell} (m,\, s)= 
 \pi \sqrt{1-\frac{1}{(m+1)^2}} +\\
- \f{1}{2^{\ell} }  \sum_{k=0}^\infty  \lf( \f{2}{m+1}\ri)^{2k+\ell} \binom{2k+\ell}{\ell}  \int_{-1}^1 \diff t \: \, (1-t^2)^{\ell}\,  t^{2k}
\int_0^{\infty} \diff p \: \frac{p^{s+2k+\ell}}{(p^2 + 1)^{2k+\ell+1} }.
}
Since in  \eqref{series2} it appears, up to a global sign,  an absolutely convergent, positive term series  of monotone functions, 
then $ F_{\ell,2}$ is negative and increasing.
Since $ F_{\ell,1}$ is increasing the also $ F_{\ell}$ is an increasing function of $m$ for fixed $s$. 
Clearly $ F_{\ell,1}(0)=$ and $ F_{\ell,2}(0,s)<0$ due to the above representation. 
Notice that the finiteness of the series is not obvious but we know that this is the case due to the integral representation and the above remarks.
Concerning the behavior at infinity, by representation  \eqref{equivalent} we have
\[
 F_{\ell,2}(m,s) =
-   \frac{4}{m+1} \int_{0}^1 \diff t \: t\, P_{\ell}(t) \int_0^{\infty} \diff p \: 
\f{ p^{s+1}}{( p^2 + 1 + \frac{2}{m+1} t p)( p^2 + 1 - \frac{2}{m+1} t p)} .
\]
Since
\[
\f{1}{ p^2 + 1 - \frac{2}{m+1}  p} \leq \f{m+1}{m} \f{1}{ p^2 + 1 },
\]
we have
\bmln{
 |F_{\ell,2}(m,s) |  \leq
  \frac{4}{m+1}\lf| \int_{0}^1 \diff t \: t\, P_{\ell}(t)  \ri| \int_0^{\infty} \diff p \: 
\f{ p^{s+1}}{( p^2 + 1 )( p^2 + 1 - \frac{2}{m+1}  p)}  	\\
\leq
  \frac{4}{m}\lf| \int_{0}^1 \diff t \: t\, P_{\ell}(t)  \ri| \int_0^{\infty} \diff p \: 
\f{ p^{s+1}}{( p^2 + 1 )^2},
}
then
\[
\lim_{m\to \infty}  F_{\ell,2}(m,s) =0
\]
and $b)$ is proved. In order to prove property $c)$, it is sufficient to prove that each term
of the representation of $F_{\ell,2}$ by series 
 in \eqref{series2} is a monotone function of $s$ due to the positivity of coefficients.
This is straightforward since
\bmln{
\int_0^{\infty} \diff p \: \frac{p^{s+2k+\ell}  }{ (p^2 + 1)^{2k+\ell+1} } =
\int_{\R} \diff x\: \frac{e^{ (2k+\ell+1)x }}{(e^{2x} + 1)^{2k+\ell+1} } e^{sx} 
=\int_{\R} \diff x\: \frac{ e^{sx} }{(2 \cosh x )^{2k+\ell+1} }
\\
= 2\int_0^\infty \diff x\: \frac{ \sinh (sx) }{(2 \cosh x )^{2k+\ell+1} }
}
and $c)$ follows from the monotonicity of $\sinh$. Last we prove $d)$. Since only  $F_{\ell,2}$ actually depends on $\ell$ ,
we have to prove  $F_{\ell,2} (m, \cdot ,s)< F_{\ell+2,2} (m, \cdot ,s) $.
We rewrite  $F_{\ell,2} (m, \cdot ,s)$ extracting the first term of the series.
\bml{ \label{first}
F_{\ell,2} (m, \cdot ,s) =
- \f{1}{2^{\ell} }   \lf( \f{2}{m+1}\ri)^{\ell} \int_{-1}^1 \diff t \: \, (1-t^2)^{\ell}
\int_0^{\infty} \diff p \: \frac{p^{s+\ell}}{(p^2 + 1)^{\ell+1} } \\
 - \f{1}{2^{\ell} }  \sum_{k=0}^\infty  \lf( \f{2}{m+1}\ri)^{2k+2+\ell} \binom{2k+2+\ell}{\ell}  \int_{-1}^1 \diff t \: \, (1-t^2)^{\ell}\,  t^{2k+2}
\int_0^{\infty} \diff p \: \frac{p^{s+2k+2+\ell}}{(p^2 + 1)^{2k+2+\ell+1} }.
}
This has to be compared with:
\bml{ \label{second}
F_{\ell+2,2} (m, \cdot ,s) =  - \f{1}{2^{\ell+2} }  \sum_{k=0}^\infty  \lf( \f{2}{m+1}\ri)^{2k+2+\ell} \binom{2k+2+\ell}{\ell+2}  \times	\\
\times \int_{-1}^1 \diff t \: \, (1-t^2)^{\ell+2}\,  t^{2k}
\int_0^{\infty} \diff p \: \frac{p^{s+2k+2+\ell}}{(p^2 + 1)^{2k+2+\ell+1} }.
}
A comparison between \eqref{first} and \eqref{second} shows that $d)$ holds true if 
\[
 \f{1}{2^{\ell+2} }  \binom{2k+2+\ell}{\ell+2}  \int_{-1}^1 \diff t \: \, (1-t^2)^{\ell+2}\,  t^{2k} <
 \f{1}{2^{\ell} }  \binom{2k+2+\ell}{\ell}  \int_{-1}^1 \diff t \: \, (1-t^2)^{\ell}\,  t^{2k+2},
\]
that is 
\beq \label{third}
\f{1}{4} (2k+2)(2k+1) \int_{-1}^1 \diff t \: \, (1-t^2)^{\ell+2}\,  t^{2k} < 
 (\ell +2)(\ell +1)\int_{-1}^1 \diff t \: \, (1-t^2)^{\ell}\,  t^{2k+2}.
\eeq
We rewrite the l.h.s. of \eqref{third} integrating by parts twice
\bmln{
\f{ (2k+2)(2k+1)}{4} \int_{-1}^1 \diff t \: \, (1-t^2)^{\ell+2}\,  t^{2k}  = \f{1}{4} \int_{-1}^1 \diff t \: \, (1-t^2)^{\ell+2}\,  \f{d^2 t^{2k+2} }{dt^2}	\\
= \f{1}{4} \int_{-1}^1 \diff t \: \, \f{d^2}{dt^2}  (1-t^2)^{\ell+2} \; t^{2k+2} \\
= (\ell +2)(\ell +1)\int_{-1}^1 \diff t \: \, (1-t^2)^{\ell}\,  t^{2k+4 }
 -\f{\ell+2}{2} \int_{-1}^1 \diff t \: \, (1-t^2)^{\ell+1}\,  t^{2k+2 } \\
< (\ell +2)(\ell +1)\int_{-1}^1 \diff t \: \, (1-t^2)^{\ell}\,  t^{2k+2 }
 -\f{\ell+2}{2} \int_{-1}^1 \diff t \: \, (1-t^2)^{\ell+1}\,  t^{2k+2 },
}
and $d)$ is proved. 
\end{proof} 

\begin{proposition}[Critical masses]
	\label{pro: critical masses}
	\mbox{}	\\
Equation  \eqref{exponent s} defines a continuous increasing function $m(s)$.
	Moreover the critical masses $m^\star$, $m^{\star \star}$, are well defined by \eqref{critical masses}
and they satisfy  $ \mstar < \mss  $.
\end{proposition}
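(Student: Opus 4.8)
The plan is to reduce the whole statement to Lemma \ref{lemmafunzione} applied with $\ell = 1$, since equation \eqref{exponent s} is exactly the equation $F_1(m,s) = 0$. First I would fix $s \in [0,1]$: by part b) of that lemma the map $m \mapsto F_1(m,s)$ is continuous and strictly increasing on $[0,\infty)$, with $F_1(0,s) < 0$ and $F_1(m,s) \to 2\pi^2 > 0$ as $m \to \infty$. Hence by the intermediate value theorem together with strict monotonicity there is a unique $m(s) \in (0,\infty)$ with $F_1(m(s),s) = 0$. This defines the function $m(\cdot) : [0,1] \to (0,\infty)$ claimed in the statement.

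Next I would establish that $m$ is strictly increasing. For $0 \le s_1 < s_2 \le 1$, part c) of the lemma gives $F_1(m(s_1),s_2) < F_1(m(s_1),s_1) = 0 = F_1(m(s_2),s_2)$, and since $m \mapsto F_1(m,s_2)$ is increasing this forces $m(s_1) < m(s_2)$. For continuity I would use a short compactness argument rather than the implicit function theorem: by monotonicity $m(s) \in [m(0),m(1)]$ for every $s$, so if $s_n \to s$ in $[0,1]$ then $(m(s_n))$ is bounded; any subsequential limit $\ell_0$ satisfies $F_1(\ell_0,s) = 0$ by the joint continuity in part a), whence $\ell_0 = m(s)$ by uniqueness, and therefore $m(s_n) \to m(s)$. (Equivalently, an implicitly defined function $s \mapsto m(s)$ with $F_1$ continuous and strictly monotone in its first variable is automatically continuous.)

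Finally the critical masses $m^\star := m(0)$ and $m^{\star\star} := m(1)$ of Definition \ref{defi: critical masses} are well defined by the first step, and the strict monotonicity just proved gives $m^\star = m(0) < m(1) = m^{\star\star}$. The numerical values $m^\star \simeq 0.0735$ and $m^{\star\star} \simeq 0.116$ I would obtain by carrying out the inner $p$-integral in \eqref{exponent s} in closed form at $s = 0$ and $s = 1$ (using $P_1(t) = t$), reducing $F_1(m,0) = 0$ and $F_1(m,1) = 0$ to one-dimensional transcendental equations in $m$ which are then solved numerically.

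I expect no serious obstacle: the only mildly delicate points are the behaviour of the integrals at the endpoints $s \in \{0,1\}$ and as $m \to 0$ (where a summable singularity appears in the $t$-integral), but these are precisely what part a) of Lemma \ref{lemmafunzione} already takes care of, so the proposition is essentially a formal corollary of that lemma. Note that part d) of the lemma is not needed here; it enters separately in the discussion of the thresholds $m^{\star\star}_\ell$ for $\ell \ge 3$ in the Remark on further extensions.
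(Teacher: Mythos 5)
Your proposal is correct and follows essentially the same route as the paper: both reduce the statement to parts a), b), c) of Lemma \ref{lemmafunzione} for $\ell=1$, obtaining existence and uniqueness of $m(s)$ from monotonicity in $m$ together with the sign change, and monotonicity of $m(\cdot)$ from the monotonicity of $F_1$ in $s$. The only (immaterial) difference is that the paper invokes Dini's theorem for continuity where you give an elementary subsequence argument, and you are right that part d) is not needed for this proposition.
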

\begin{proof}
For fixed odd $\ell$, let us consider equation
\beq \label{dini}
F_{\ell} (m,s)=0.
\eeq
By Dini's theorem, and due to $a)$ and $b)$ of Lemma \ref{lemmafunzione}, 
equation \eqref{dini} defines a continuous function $m_{\ell} (s): [0,1]\to [0,+\infty)$. Moreover it is monotone increasing in $s$
by $c)$. Notice also that  the functions   $m_{\ell} (s)$ are decreasing in  $\ell$ due to $d)$.
We set  $m(s) : = m_1 (s)$ and 
\[
 \mstar : =  m (0),  \qquad \mss : =  m (1)
\]
and $ \mstar < \mss $ follows from the monotonicity of $m (s)$.

\end{proof}

\noindent
Notice that if $\ell$ is even then $F_{\ell , 2} >0$ and \eqref{dini} has no solutions.
Furthermore one could  try repeating the arguments in this paper for $\ell>1$, introducing
$s_{\ell} (m)$ as the inverse of and  $m_{\ell} (s)$ and 
\[
\widehat{\Xi_{\ell,n}^-} = \ximinus_{\ell}  Y_{\ell}^n \,, \;\;\;\;\;\;\;\; \ximinus_{\ell} (k)= \f{1}{k^{2-s_{\ell} (m)}}
\]
which are the cornerstones of our construction. The proofs would work without modifications and we
could construct extension of the STM operators with singular charge asymptotic for several $\ell$.
In order to be able to do so, it is necessary that  the intervals $[m^{\star}_{\ell}, m^{\star\star}_\ell]$
overlap for different $\ell$, where we have put 
\beq
	\label{m3star}
\mstar_{\ell} :=  m_{\ell} (0),  \qquad \mss_{\ell} : = m_{\ell} (1).
\eeq
A numerical analysis shows that 
\[
 \mss_{3} = 0.0142.
\]
Since $ \mss_{3} < m^\star$  no overlap is possible due to monotonicity properties already proved.

\section{Admissible Charges and Form Domain}
\label{charge}

In this Appendix we investigate the conditions to impose on the charge $ \eta $ so that it belongs to $ \dom[\formo] $. It is clear that in order for the decomposition \eqref{form domain} to make sense and to fulfill the request $ u \in \ldf(\R^6) $, it must be
\beq
	\pot \eta \in L^2_{\mathrm{loc}}(\R^6),
\eeq
where we have removed the fermionic restriction since it is satisfied by construction. In next Lemma we are going to show that this is equivalent to assume that
\beq
	\label{cond eta 1}
	\int_{\R^3} \diff \pv \: \one_{\{p \geq \eps\}} \: p^{-1} \lf| \hat{\eta}(\pv) \ri|^2 < \infty,
\eeq
for any $ \eps > 0 $. In other words we will prove that
\beq
	\pot \eta \in L^2_{\mathrm{loc}}(\R^6) \Longleftrightarrow \eta \in H^{-1/2}_{\mathrm{loc}}(\R^3).
\eeq

\begin{lem}
	\label{lem: pot norm 1}
	\mbox{}	\\
	For any $ \eps > 0 $ there exist two constants $ 0 < c_1, c_2 < \infty $, such that
	\bml{
		\label{pot norm 1}
		c_1 \int_{\R^3} \diff \pv \: \one_{\{p \geq \eps\}} \: p^{-1} \lf| \hat{\eta}(\pv) \ri|^2  \leq \int_{\R^6} \diff \kv_1 \diff \kv_2 \: \one_{\{k_1 \geq \eps\}} \one_{\{k_2 \geq \eps\}} \: \lf| \widehat{\pot \eta} (\kv_1,\kv_2) \ri|^2 	\\
		\leq	c_2 \int_{\R^3} \diff \pv \: \one_{\{p \geq \eps\}} \: p^{-1} \lf| \hat{\eta}(\pv) \ri|^2.
	}
\end{lem}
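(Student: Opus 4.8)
The plan is to establish the two inequalities in \eqref{pot norm 1} separately, in both cases with constants depending only on $ m $ (not on $ \eps $). It is convenient to write $ D(\kv_1,\kv_2) := k_1^2 + k_2^2 + \tfrac{2}{m+1}\kv_1\cdot\kv_2 $ for the symbol of $ \hamfree $. Since $ m > 0 $, completing the square gives
\[
	D(\kv_1,\kv_2) = \Big| \kv_2 + \tfrac{\kv_1}{m+1} \Big|^2 + \tfrac{m(m+2)}{(m+1)^2}\, k_1^2 ,
\]
so that, combined with the elementary identity $ \int_{\R^3} \diff \kv\, (k^2 + a^2)^{-2} = \pi^2/a $, one gets the exact value
\[
	\int_{\R^3} \diff \kv_2 \: \frac{1}{D(\kv_1,\kv_2)^2} = \frac{\pi^2(m+1)}{\sqrt{m(m+2)}}\,\frac{1}{k_1} =: \frac{c_0(m)}{k_1} , \qquad \kv_1 \in \R^3 \setminus \{\zerov\} .
\]

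For the upper bound I would simply use $ |\hat\eta(\kv_1) - \hat\eta(\kv_2)|^2 \leq 2|\hat\eta(\kv_1)|^2 + 2|\hat\eta(\kv_2)|^2 $, insert the two cut-offs, and exploit the $ \kv_1 \leftrightarrow \kv_2 $ symmetry of $ D $ to reduce the middle quantity in \eqref{pot norm 1} to $ 4\int_{\{k_1 \geq \eps\}} \diff\kv_1\, |\hat\eta(\kv_1)|^2 \big( \int_{\{k_2 \geq \eps\}} \diff\kv_2\, D^{-2} \big) $. Bounding the inner integral by the full-space value above gives the right-hand inequality with $ c_2 = 4 c_0(m) $.

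The substantive part is the lower bound. The idea is to discard most of the integration region and keep only the thin dyadic shell
\[
	\mathcal R := \big\{ (\kv_1,\kv_2) \in \R^6 \: : \: \eps \leq k_1 \leq k_2 \leq 2 k_1 \big\} \subseteq \{k_1 \geq \eps\} \cap \{k_2 \geq \eps\} ,
\]
on which one has the \emph{upper} bound $ D(\kv_1,\kv_2) \leq \tfrac{m+2}{m+1}(k_1^2 + k_2^2) \leq \tfrac{5(m+2)}{m+1} k_1^2 $, hence $ D^{-2} \geq \tfrac{(m+1)^2}{25(m+2)^2}\, k_1^{-4} $ there. Then I would use the pointwise inequality $ |a-b|^2 \geq \tfrac12 |a|^2 - |b|^2 $ together with Fubini: the ``diagonal'' contribution equals $ \tfrac12 \cdot \tfrac{28\pi}{3} \int_{\{k_1 \geq \eps\}} \diff\kv_1\, k_1^{-1} |\hat\eta(\kv_1)|^2 $, because $ |\{\kv_2 : k_1 \leq k_2 \leq 2 k_1\}| = \tfrac{28\pi}{3} k_1^3 $, whereas the ``cross'' contribution is $ \int_{\{k_2 \geq \eps\}} \diff\kv_2\, |\hat\eta(\kv_2)|^2 \big( \int_{\max(\eps,\, k_2/2) \leq k_1 \leq k_2} \diff\kv_1\, k_1^{-4} \big) \leq 4\pi \int_{\{k_2 \geq \eps\}} \diff\kv_2\, k_2^{-1} |\hat\eta(\kv_2)|^2 $, the last step using $ \max(\eps,k_2/2)^{-1} \leq 2/k_2 $. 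Since $ \tfrac{14\pi}{3} > 4\pi $, the difference is a strictly positive multiple of $ \int_{\R^3} \diff\pv\, \one_{\{p \geq \eps\}}\, p^{-1} |\hat\eta(\pv)|^2 $, and collecting constants yields the left-hand inequality with $ c_1 = \tfrac{2\pi(m+1)^2}{75(m+2)^2} $. (It is enough to carry this out when the right-hand side of \eqref{pot norm 1} is finite, the only case in which the Lemma is used; then all the integrals above are finite and the subtraction is legitimate.)

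The one genuinely delicate point — everything else being routine (completing the square, symmetrization, the two Fubini computations) — is precisely this numerical balance: the cross term scales like the \emph{width} of the shell while the diagonal term scales like its \emph{volume}, so the shell must be chosen narrow enough for the diagonal term to win. The choice $ k_1 \leq k_2 \leq 2 k_1 $ works (any $ c $ with $ c^2 + c - 5 > 0 $ in place of $ 2 $ would do), whereas a sufficiently fat shell would reverse the sign and destroy the argument.
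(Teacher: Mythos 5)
Your upper bound is correct and is essentially the paper's argument. The lower bound, however, contains a genuine error. Writing $ D(\kv_1,\kv_2) $ for the denominator, you substitute the one-sided estimate $ D^{-2} \geq \frac{(m+1)^2}{25(m+2)^2} k_1^{-4} $ into \emph{both} the diagonal term and the cross term, and then compare the resulting purely numerical coefficients $ \frac{14\pi}{3} $ and $ 4\pi $. This is legitimate only for the diagonal term, which enters with a positive sign; for the cross term, which is subtracted, you need the \emph{reverse} inequality $ D^{-2} \leq C' k_1^{-4} $, and on your shell the best available constant is $ C' = \frac{(m+1)^2}{4m^2} $, attained near the antiparallel configuration $ \kv_2 \approx -\kv_1 $, $ k_2 \approx k_1 $, where $ D = \frac{2m}{m+1}k_1^2 $. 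The inequality you would then need, $ \frac{14\pi}{3}\cdot\frac{(m+1)^2}{25(m+2)^2} > 4\pi\cdot\frac{(m+1)^2}{4m^2} $, i.e.\ $ 14 m^2 > 75(m+2)^2 $, fails for every $ m > 0 $, in particular in the window $ (\mstar,\mss) $ where the Lemma is used. Nor is this just a matter of sharpening constants: even computing both inner integrals exactly on the shell $ k_1 \leq k_2 \leq 2k_1 $, the angular integral $ \int_{-1}^{1}\diff t\, \bigl(1+q^2+\frac{2}{m+1}qt\bigr)^{-2} = 2\bigl( (1+q^2)^2 - \frac{4q^2}{(m+1)^2} \bigr)^{-1} $ concentrates near $ q = k_2/k_1 = 1 $ as $ m \to 0 $, so the diagonal and cross contributions become asymptotically equal and the factor $ \frac12 $ in $ |a-b|^2 \geq \frac12|a|^2 - |b|^2 $ makes the net result negative (a numerical check at $ m = 0.1 $ confirms this). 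The near-diagonal shell is exactly the region where $ D $ degenerates for small $ m $ and the two contributions cannot be separated by a pointwise inequality. (Incidentally, your closing heuristic is also reversed: with your uniform weight $ k_1^{-4} $ the diagonal-to-cross ratio on the shell $ k_1 \leq k_2 \leq c k_1 $ is $ \frac{c^2+c+1}{6} $, which \emph{increases} with $ c $, so it is the narrow shells that lose.)

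The paper's proof goes the opposite way: it restricts to the far-off-diagonal region $ k_2 \geq a k_1 $ with $ a $ large. There the diagonal contribution is bounded below by a constant $ c_0(a) \sim a^{-1} $ times the target norm, while the cross term is controlled by a Schur test whose row and column bounds scale as $ a^{-1/2} $ and $ a^{-7/2} $, giving an operator norm $ \sim a^{-2} $; the strong asymmetry of the region is what makes the cross term genuinely subordinate for $ a $ large, uniformly in the relevant masses. To repair your argument you would need to adopt this structure, or some other mechanism exploiting the smallness of the off-diagonal kernel, rather than a pointwise comparison on a symmetric neighbourhood of the diagonal.
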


\begin{proof}
	By the definition \eqref{potential} we have  (we drop for simplicity the $ \hat{\mbox{}} $ denoting the Fourier transform)
	\beq
		\label{l2 norm pot}
		\int_{\R^6} \diff \kv_1 \diff \kv_2 \: \one_{\{k_1 \geq \eps\}} \one_{\{k_2 \geq \eps\}} \: \lf| \pot \eta (\kv_1,\kv_2) \ri|^2 = \int_{\R^6} \diff \kv_1 \diff \kv_2 \: \one_{\{k_1 \geq \eps\}} \one_{\{k_2 \geq \eps\}} \: \frac{\lf| \eta (\kv_1) - \eta(\kv_2) \ri|^2}{(k_1^2 + k_2^2 + \frac{2}{m+1} \kv_1 \cdot \kv_2)^2}.
	\eeq
The r.h.s. of the above expression is bounded by
	\bml{
		\label{pot norm ub}
		4 \int_{\R^6} \diff \kv_1 \diff \kv_2 \: \one_{\{k_1 \geq \eps\}} \one_{\{k_2 \geq \eps\}} \: \frac{\lf| \eta (\kv_1) \ri|^2}{(k_1^2 + k_2^2 + \frac{2}{m+1} \kv_1 \cdot \kv_2)^2} 	\\
		\leq 4 \int_{\R^6} \diff \kv_1 \diff \kv_2 \: \one_{\{k_1 \geq \eps\}} \one_{\{k_2 \geq \eps\}} \: \frac{\lf| \eta (\kv_1) \ri|^2}{(k_1^2 + k_2^2 - \frac{2}{m+1} k_1 k_2)^2}
		\\
		\leq  4\int_{\R^3} \diff \qv \f{1}{(1 + q^2 - \frac{2}{m+1}q)^2} \int_{\R^3} \diff \kv_1 \: \one_{\{k_1 \geq \eps\}} \: k_1^{-1} \lf| \eta (\kv_1) \ri|^2 , 
	}
	which yields the r.h.s. of \eqref{pot norm 1}.
	
	\n
	The corresponding lower bound is a bit more involved and requires a deeper inspection of the off-diagonal term  containing $ \eta(\kv_1)^* \eta(\kv_2) $. We introduce in the integrand  the characteristic function of the set $ k_2/k_1 \geq a $, where $ a > 1 $ is a parameter to be chosen later, which is clearly admissible thanks to the positivity of the integrand. Dropping the positive term proportional to $ |\eta(\kv_2)|^2 $, we obtain
	\bml{
		\int_{\R^6} \diff \kv_1 \diff \kv_2 \: \one_{\{k_1 \geq \eps\}} \one_{\{k_2 \geq \eps\}} \: \lf| \pot \eta (\kv_1,\kv_2) \ri|^2 	\\
		\geq \int_{\R^6} \diff \kv_1 \diff \kv_2 \: \one_{\{k_1 \geq \eps\}} \one_{\{k_2 \geq \eps\}}  \one_{\{k_2/k_1 \geq a\}}  \: \frac{ \lf| \eta (\kv_1) \ri|^2 }{(k_1^2 + k_2^2 + \frac{2}{m+1} \kv_1 \cdot \kv_2)^2}  \\
	       - 2    \int_{\R^6} \diff \kv_1 \diff \kv_2 \: \one_{\{k_1 \geq \eps\}} \one_{\{k_2 \geq \eps\}}  \one_{\{k_2/k_1 \geq a\}}  \: \frac{  |\eta(\kv_1)|  |\eta(\kv_2)|}{(k_1^2 + k_2^2 + \frac{2}{m+1} \kv_1 \cdot \kv_2)^2}	
	       \; =: \; (I)  + (II).
	}
	
	\n
	For the diagonal term $(I)$ we have
	\bml{ \label{stI}
	(I) 
	=  \int_{\R^3} \diff \kv_1 \, \one_{\{k_1 \geq \eps\}} \lf| \eta (\kv_1) \ri|^2   \int_{\R^3} \diff \kv_2 \, \one_{\{k_2/k_1 \geq a\}} 
	     \: \frac{1}{(k_1^2 + k_2^2 + \frac{2}{m+1} \kv_1 \cdot \kv_2)^2}\\
	     \geq   \int_{\R^3} \diff \kv_1 \, \one_{\{k_1 \geq \eps\}} \lf| \eta (\kv_1) \ri|^2   \int_{\R^3} \diff \kv_2 \, \one_{\{k_2/k_1 \geq a\}} 
	     \: \frac{1}{(k_1^2 + k_2^2 + \frac{2}{m+1} k_1k_2)^2} \\
	     = 4 \pi \int_a^{\infty} \!\! \diff q\, \f{q^2}{(1 + q^2 + \frac{2}{m+1} q)^2}  \int_{\R^3} \diff \kv_1 \, \one_{\{k_1 \geq \eps\}}   k_1^{-1} \lf| \eta (\kv_1) \ri|^2 \\
	     \; =: \; c_0(a) \int_{\R^3} \diff \kv_1 \, \one_{\{k_1 \geq \eps\}}   k_1^{-1} \lf| \eta (\kv_1) \ri|^2
	}
where $a \,c_0(a) \rightarrow l_0$, with $ 0<l_0 <\infty, \;$ for $\;a \rightarrow \infty$.	For the off-diagonal term we use Schur's test. Denoting by $K$ the integral operator in $L^2(\R^6)$ with (non-symmetric) kernel
\be
K(\kv_1, \kv_2)= 	\one_{\{k_2/k_1 \geq a\}} 
	     \: \frac{k_1^{1/2} k_2^{1/2}}{(k_1^2 + k_2^2 + \frac{2}{m+1} \kv_1 \cdot \kv_2)^2},
	     \ee
	     we have
\bml{
|(II)| \leq 2 \int_{\R^3} \diff \kv_1 \, \one_{\{k_1 \geq \eps\}} k_1^{-1/2} \lf| \eta (\kv_1) \ri|   
\int_{\R^3} \diff \kv_2 \, \one_{\{k_2 \geq \eps\}} k_2^{-1/2} \lf| \eta (\kv_2) \ri| 
 \, K(\kv_1, \kv_2)\\
\leq 2\, \|K\| \int_{\R^3} \diff \kv_1 \, \one_{\{k_1 \geq \eps\}}   k_1^{-1} \lf| \eta (\kv_1) \ri|^2
}
and
\be\label{sch}
\lf\| K \ri\| \leq \bigg[ \sup_{\kv_1 \in \R^3} \int_{\R^3} \diff \kv_2 \: K(\kv_1,\kv_2) \bigg]^{1/2} \bigg[ \sup_{\kv_2 \in \R^3} \int_{\R^3} \diff \kv_1 \: K(\kv_1,\kv_2) \bigg]^{1/2}.
\ee
	Let us estimate the first integral in the r.h.s. of \eqref{sch}
	\bml{
	\int_{\R^3} \diff \kv_2 \: K(\kv_1,\kv_2) \leq \int_{\R^3} \diff \kv_2 \: \one_{\{k_2/k_1 \geq a\}} 
	     \: \frac{k_1^{1/2} k_2^{1/2}}{(k_1^2 + k_2^2 - \frac{2}{m+1} k_1 k_2)^2}\\
	     \leq 4 \pi \int_a^{\infty} \!\! \diff q \, \f{q^{5/2}}{(1 + q^2 - \frac{2}{m+1} q)^2} 
	     =: \; c_1(a)
	}
	where $a^{1/2} c_1(a) \rightarrow l_1$, with $0<l_1<\infty ,\;$ for $\; a \rightarrow \infty$. Analogously, for the second integral we find
	\be
	\int_{\R^3} \diff \kv_1 \: K(\kv_1,\kv_2) \leq 4 \pi \int_0^{1/a} \!\!\! dq \, \f{q^{5/2}}{(1 + q^2 -\frac{2}{m+1} q )^2} \, =: \, c_2(a)
	\ee
	where $a^{7/2} c_2(a) \rightarrow l_2$, with $0<l_2 <\infty,\;$ for $\; a\rightarrow \infty$. Therefore we obtain
	\be\label{stII}
	|(II)| \leq 2\, \sqrt{c_1(a) \, c_2(a)}     \int_{\R^3} \diff \kv_1 \, \one_{\{k_1 \geq \eps\}}   k_1^{-1} \lf| \eta (\kv_1) \ri|^2.      
	\ee
	Taking into account of \eqref{stI}, \eqref{stII} we have
	\be
	(I) + (II) \geq         \f{1}{a} \left( a\, c_0(a) - \f{1}{a} \sqrt{ a^{1/2} c_1(a) \, a^{7/2} c_2(a)} \right)              \int_{\R^3} \diff \kv_1 \, \one_{\{k_1 \geq \eps\}}   k_1^{-1} \lf| \eta (\kv_1) \ri|^2.      
	\ee
	Fixing $a$ sufficiently large, we conclude the proof.
	 	\end{proof}

\vs

\n
The condition \eqref{cond eta 1} does not give any information on the integrability of $ \hat{\eta} $ close to the origin. It is therefore more convenient to exploit an equivalent but slightly different decomposition of functions in $ \dom[\formo] $, where the potential $\pot$ is replaced by $\mathcal G_{\lambda}$, $\lambda >0$, defined as follows

\beq
	\label{potential lambda}
	\lf(\widehat{\pot_{\la} \eta}\ri)\lf(\kv_1,\kv_2\ri) = \frac{\hat{\eta}(\kv_1) - \hat{\eta}(\kv_2)}{k_1^2 + k_2^2 + \tx\frac{2}{m+1} \kv_1 \cdot \kv_2 + \la}.
\eeq 
More precisely, any $u \in \dom[\formo]$ is decomposed as $u=w_{\lambda} + \mathcal G_{\lambda} \eta$, with $ w_{\la} \in \hf(\R^6) $. 
Since $ u \in \ldf(\R^6) $, one has simply to require that 
\be
 \pot_{\la} \eta  \in \ldf(\R^6) 
 \ee
 for any $ \eta $ admissible, i.e., belonging to $ \dom[\formo] $. Next Lemma \ref{lem: pot norm 2} shows that this is equivalent to the requirement
\beq
	\eta \in H^{-1/2}(\R^3),
\eeq 
which has to be satisfied by any charge in \eqref{form domain}.

\begin{lem}
	\label{lem: pot norm 2}
	\mbox{}	\\
	For any $ \la > 0 $, there exist two constants $ 0 < c_1, c_2 < \infty $ independent of $ \la $, such that
	\beq
		\label{pot norm 2}
		c_1 \lf\| \eta \ri\|_{H^{-1/2}_{\la}(\R^3)}^2  \leq \lf\| \pot_{\la} \eta \ri\|_{L^2(\R^6)}^2 \leq c_2  \lf\| \eta \ri\|_{H^{-1/2}_{\la}(\R^3)}^2	\eeq
	where 
	\bdm
				\lf\| \eta \ri\|_{H^{-1/2}_{\la}(\R^3)}^2 = \int_{\R^3} \diff \kv \:  \frac{ \lf| \hat\eta (\kv) \ri|^2}{\sqrt{k^2 + \la}}.	
		\edm

\end{lem}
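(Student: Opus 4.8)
The plan is to argue directly from the explicit expression
\bdm
	\lf\| \pot_{\la} \eta \ri\|_{L^2(\R^6)}^2 = \int_{\R^6} \diff \kv_1 \diff \kv_2 \: \frac{\lf| \hat\eta(\kv_1) - \hat\eta(\kv_2) \ri|^2}{\lf( k_1^2 + k_2^2 + \tx\frac{2}{m+1} \kv_1 \cdot \kv_2 + \la \ri)^2},
\edm
following the same scheme as the proof of Lemma \ref{lem: pot norm 1}. To dispose of the $ \la $-dependence of the constants once and for all I would first rescale $ \kv_i = \sqrt{\la}\,\pv_i $ and set $ \tilde\eta(\pv) := \hat\eta(\sqrt{\la}\,\pv) $: one checks at once that $ \lf\| \pot_{\la}\eta \ri\|^2_{L^2(\R^6)} = \la\,\lf\| \pot_{1}\tilde\eta \ri\|^2_{L^2(\R^6)} $ and $ \lf\| \eta \ri\|^2_{H^{-1/2}_{\la}(\R^3)} = \la \int_{\R^3}\diff\pv\,(p^2+1)^{-1/2}|\tilde\eta(\pv)|^2 = \la\,\lf\|\tilde\eta\ri\|^2_{H^{-1/2}_{1}(\R^3)} $, so it suffices to prove \eqref{pot norm 2} for $ \la = 1 $, i.e. with weight $ (k^2+1)^{-1/2} $.

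For the upper bound I would use $ |\hat\eta(\kv_1)-\hat\eta(\kv_2)|^2 \leq 2|\hat\eta(\kv_1)|^2 + 2|\hat\eta(\kv_2)|^2 $ together with the symmetry of the integrand under $ \kv_1 \leftrightarrow \kv_2 $, the elementary bound $ k_1^2+k_2^2+\frac{2}{m+1}\kv_1\cdot\kv_2+1 \geq \frac{m}{m+1}(k_1^2+k_2^2+1) $, and the radial identity $ \int_{\R^3}(A+k_2^2)^{-2}\diff\kv_2 = \const\,A^{-1/2} $ with $ A = k_1^2+1 $; this immediately gives $ \lf\| \pot_1 \eta \ri\|^2_{L^2(\R^6)} \leq \const\int_{\R^3}|\hat\eta(\kv_1)|^2(k_1^2+1)^{-1/2}\diff\kv_1 $, with a constant depending only on $ m $.

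The lower bound is the real content and is where I expect the main difficulty. By positivity of the integrand one may insert a characteristic function localizing $ \kv_2 $ far from $ \kv_1 $; the point I would stress is that the correct localization here is $ \{\, k_2^2+1 \geq a(k_1^2+1)\, \} $ with $ a>1 $ large, \emph{not} the scale-invariant $ \{k_2/k_1\geq a\} $ used in Lemma \ref{lem: pot norm 1}, because the weight $ (k^2+1)^{-1/2} $ does not degenerate near $ k=0 $ and a scale-invariant cutoff would leave the low-momentum regime untouched. After discarding the nonnegative $ |\hat\eta(\kv_2)|^2 $-contribution and using $ |\hat\eta(\kv_1)-\hat\eta(\kv_2)|^2 \geq |\hat\eta(\kv_1)|^2 - 2|\hat\eta(\kv_1)||\hat\eta(\kv_2)| $, the surviving quantity splits into a diagonal term $ (I) $, carrying $ |\hat\eta(\kv_1)|^2 $, and an off-diagonal term $ (II) $. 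Performing the angular and radial integrations in $ (I) $ (after the substitution $ k_2 = \sqrt{k_1^2+1}\,r $) yields $ (I) \geq c_0(a)\int_{\R^3}|\hat\eta(\kv_1)|^2(k_1^2+1)^{-1/2}\diff\kv_1 $ with $ c_0(a)\gtrsim a^{-1/2} $; while $ (II) $ I would control by Schur's test applied to the kernel $ K(\kv_1,\kv_2) = \one_{\{k_2^2+1\geq a(k_1^2+1)\}}(k_1^2+1)^{1/4}(k_2^2+1)^{1/4}(k_1^2+k_2^2+\frac{2}{m+1}\kv_1\cdot\kv_2+1)^{-2} $, which gives $ |(II)| \leq 2\|K\|\int_{\R^3}|\hat\eta(\kv)|^2(k^2+1)^{-1/2}\diff\kv $ with $ \|K\| \to 0 $ as $ a\to\infty $; a short power-counting (using again the angular-integral estimate $ \int_{S^2}(A+B\cos\theta)^{-2}\,d\Omega \lesssim (k_1^2+k_2^2+1)^{-2} $) gives $ \|K\|\lesssim a^{-7/8} $, which is dominated by $ c_0(a) $ for $ a $ large. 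Fixing such an $ a $ makes $ (I)+(II)\geq c_1\|\eta\|^2_{H^{-1/2}_1(\R^3)} $, and undoing the rescaling completes the proof. Thus the main obstacle, as in Lemma \ref{lem: pot norm 1}, is the off-diagonal term $ (II) $: one cannot simply transcribe the earlier argument, and the combination of the shifted cutoff $ \{k_2^2+1\geq a(k_1^2+1)\} $ with the $ (k^2+1)^{-1/4} $ weights in Schur's test is exactly what is needed to make both the diagonal coefficient and the Schur norm decay as powers of $ a $, with the diagonal term winning.
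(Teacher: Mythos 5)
Your proof is correct and follows essentially the same route as the paper: the upper bound via the pointwise bound on the denominator and the radial integral, and the lower bound via a $\la$-shifted localization of $\kv_2$ away from $\kv_1$ (the paper uses $\{k_2\geq a\sqrt{k_1^2+\la}\}$, which plays the same role as your $\{k_2^2+1\geq a(k_1^2+1)\}$), with the off-diagonal term controlled by Schur's test for the kernel weighted by $(k_i^2+\la)^{1/4}$ and the observation that the Schur norm decays faster in $a$ than the diagonal coefficient. Your preliminary rescaling to $\la=1$ is a mild streamlining of the paper's bookkeeping (which instead tracks the $\la$-independence of the constants through the substitution $k_2=\sqrt{k_1^2+\la}\,q$), but the argument is otherwise the same.
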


\begin{proof}
	The r.h.s. of the inequality can be proven as in \cite[Proposition 6.1]{FT} via an estimate analogous to \eqref{pot norm ub}. 
	
	 For the l.h.s. we first bound from below $ \lf\| \pot_{\la} \eta \ri\|_{L^2(\R^6)}^2 $ by cutting the integral domain where $ k_2 \leq a \sqrt{k_1^2 + \la} $, where $ a $ is any positive number to be chosen later large enough. Dropping one of the positive term coming from $ |\eta(\kv_1) - \eta(\kv_2)|^2 $, we get
	\bml{
		\label{pot norm 2 proof 0}
		\lf\| \pot_{\la} \eta \ri\|_{L^2(\R^6)}^2 \geq   \int_{\R^6} \diff \kv_1 \diff \kv_2 \: \one_{\lf\{k_2 \geq  a\sqrt{k_1^2 + \la}\ri\}} \: \frac{ \lf| \hat\eta (\kv_1) \ri|^2 }{(k_1^2 + k_2^2 + \frac{2}{m+1} \kv_1 \cdot \kv_2 +\lambda )^2}	\\
		-2\int_{\R^6} \diff \kv_1 \diff \kv_2 \: \one_{\lf\{k_2 \geq  a\sqrt{k_1^2 + \la}\ri\}} \: \frac{ |\hat\eta(\kv_1)||\hat\eta(\kv_2)|}{(k_1^2 + k_2^2 + \frac{2}{m+1} \kv_1 \cdot \kv_2 +\lambda )^2} \, =: \, (III) + (IV).
	}
	The positive term $(III)$ can be  estimated from below  as
	\bml{		\label{pot norm 2 proof 1}
		(III) \geq \int_{\R^3} \diff \kv_1  \: \lf|\hat\eta (\kv_1) \ri|^2
		\int_{\R^3} \diff \kv_2\, 
		\one_{\lf\{k_2 \geq  a\sqrt{k_1^2 + \la}\ri\}} \: \frac{ 1 }{(k_1^2 + k_2^2 + \frac{2}{m+1} k_1 k_2 +\lambda)^2}\\
		= 4\pi \int_{\R^3} \diff \kv_1  \: \f{ \lf|\hat\eta (\kv_1) \ri|^2}{\sqrt{k_1^2 +\lambda}} \int_a^{\infty}\!\!\! \diff q \, \f{q^2}{(1+q^2+\frac{2}{m+1} \frac{k_1}{\sqrt{k_1^2 +\lambda}} q)^2} \, \geq \, c_0(a) \, \lf\| \eta \ri\|_{H^{-1/2}_{\la}(\R^3)}^2
		}
where $c_0(a)$ has been defined in \eqref{stI}.	For the negative term $(IV)$ we write
\bml{
|(IV)| = 2 \int_{\R^3} \diff \kv_1  \: \f{ \lf|\hat\eta (\kv_1) \ri|^2}{ (k_1^2 +\lambda)^{1/4}}
\int_{\R^3} \diff \kv_2  \: \f{ \lf|\hat\eta (\kv_2) \ri|^2}{ (k_2^2 +\lambda)^{1/4}} \, K_{\lambda}(\kv_1, \kv_2) \leq 2  \, \|K_{\lambda}\| \,  \lf\| \eta \ri\|_{H^{-1/2}_{\la}(\R^3)}^2
}
where $K_{\lambda}$ is the integral operator in $L^2(\R^3)$ with kernel
\be
K_{\lambda}(\kv_1, \kv_2) = \one_{\lf\{k_2 \geq  a\sqrt{k_1^2 + \la}\ri\}} \: \frac{ (k_1^2 + \lambda)^{1/4} (k_2^2 + \lambda)^{1/4} }{(k_1^2 + k_2^2 + \frac{2}{m+1} \kv_1 \cdot \kv_2 +\lambda )^2}.
\ee	
We use again Schur's test to estimate $\|K_{\lambda}\|$.
\bml{
\int_{\R^3} \diff \kv_2 \, K_{\lambda} (\kv_1, \kv_2) \leq 4 \pi (k_1^2 + \lambda)^{1/4} \int_{a\sqrt{k_1^2 + \lambda} }^{\infty} \!\!\! \diff k_2 \, \f{k_2^2 \,  (k_2^2 + \lambda)^{1/4}}{(k_1^2 + k_2^2 -\frac{2}{m+1} k_1 k_2 + \lambda)^2} \\= 4 \pi \int_a^{\infty} \!\!\!\diff q\, \f{q^2 \left( q^2 +\lambda (k_1^2 + \lambda)^{-1} \right)^{1/4} }{(1+q^2 -\frac{2}{m+1} \frac{k_1}{\sqrt{k_1^2 +\lambda}} q )^2} \leq 
4 \pi \int_a^{\infty} \!\!\!\diff q\, \f{q^{5/2} \left( 1+q^{-2} \right)^{1/4} }{(1+q^2 -\frac{2}{m+1}  q )^2} \, =: \, \tilde{c}_1 (a),
}	
where  $a^{1/2} \tilde{c}_1(a) \rightarrow \tilde{l}_1$, with $0<\tilde{l}_1<\infty ,\;$ for $\; a \rightarrow \infty$.   Moreover           
\bml{
\int_{\R^3} \diff \kv_1 \, K_{\lambda} (\kv_1, \kv_2) \leq 4 \pi (k_2^2 + \lambda)^{1/4} \int_0^{\frac{\sqrt{k_2^2 -\lambda a^2}}{a}} \!\!\! \diff k_1 \, \f{k_1^2 \, (k_1^2 + \lambda)^{1/4}}{(k_1^2 + k_2^2 - \frac{2}{m+1} k_1 k_2 + \lambda)^2} \\
= 4 \pi \int_0^{q(a)} \!\!\! \diff q \, \f{ q^2 (q^2 + \lambda (k_2^2 +\lambda)^{-1} )^{1/4}}{(1 + q^2 - \frac{2}{m+1} \frac{k_2}{\sqrt{k_2^2 + \lambda}} q)^2},
}
where $q(a) = \f{\sqrt{k_2^2 - \lambda a^2}}{a \sqrt{k_2^2 + \lambda}} \leq a^{-1}$. Therefore
\be
\int_{\R^3} \diff \kv_1 \, K_{\lambda} (\kv_1, \kv_2) \leq
4 \pi \int_0^{1/a} \!\!\! \diff q \, \f{ q^2 (q^2 + 1 )^{1/4}}{(1 + q^2 - \frac{2}{m+1}  q)^2} \, =: \, \tilde{c}_2(a),
\ee
where  $a^{3} \tilde{c}_2(a) \rightarrow \tilde{l}_2$, with $0<\tilde{l}_2<\infty ,\;$ for $\; a \rightarrow \infty$. Hence we obtain the estimate $\|K_{\lambda}\| \leq \sqrt{ \tilde{c}_1(a) \, \tilde{c}_2(a)}$. We can now conclude the proof proceeding along the same line of Lemma \ref{lem: pot norm 1}. 
\end{proof}

\vs
\n
According to the decomposition $u=w_{\lambda} + \mathcal G_{\lambda} \eta$, with $ w_{\la} \in \hf(\R^6) $, for any $u \in \dom[\formol]$, and the charge decomposition $\eta= \xi_< + \Xi_>$, with $\xi_< \in H^{1/2}(\R^3)$ and $\Xi_> \in H^{-1/2}(\R^3)$,  introduced in  \eqref{fine1} and \eqref{fine2}, in the following proposition we derive an equivalent expression for our quadratic form.

	\begin{pro}[Alternative expression of $ \formol $]
		\label{pro: alternative form}
		\mbox{}	\\
		For any $ \beta$ and $ \la > 0 $, the quadratic form \eqref{formol domain}, \eqref{formol action} can be equivalently rewritten as 
				\begin{eqnarray}
 			\dom[\formol] &=& \lf\{ \lf. u \in \ldf(\R^6) \: \ri| \: u = w_{\la} + \pot_{\la} \eta, w_{\la} \in \hf(\R^6), \eta \in H^{-1/2}(\R^3),  \eta = \xi_< + \Xi_>, \ri. \nonumber	\\
 			&& \lf. \xi_< \in H^{1/2}(\R^3) \ri\},	\label{alternative form domain} \\
 			\formol[u] & = & \uform_{\la}[w_{\la}] - \la \lf\| u \ri\|_{\ldf(\R^6)}^2 + 2 \qformla[\xi_<] +2 \qformo[\Xi_<] - 4 \Re \qformo[\xi_<, \Xi_< ] + 2 \la \Re \Big( \pot \Xi_>, \pot_{\la} \xi_< \Big)_{L^2(\R^6)}  \nonumber	\\
 			& &+ \la \Big( \pot \Xi_>, \pot_{\la} \Xi_> \Big)_{L^2(\R^6)} + \beta |q|^2 ,\label{alternative form action}
		\end{eqnarray}
		where 
		\begin{eqnarray}
		F_{\la}[w] &=& \lf(w, (\hamfree + \la) w \ri) , \\
	\label{qformla}
	\qformla[\xi] &=& 2\pi^2 \frac{\sqrt{m(m+2)}}{m+1} \int_{\R^3} \diff \pv \: \sqrt{p^2 + \la} \: |\hat{\xi}(\pv)|^2 + \int_{\R^6} \diff \pv \diff \qv  \: \frac{\hat{\xi}^*(\pv) \hat{\xi}(\qv)}{p^2 + q^2 + \frac{2}{m+1} \pv \cdot \qv + \la}.
\end{eqnarray}
and $\Phi_0[\cdot, \cdot]$ denotes the bilinear form associated to $\Phi_0[\cdot]$.
			\end{pro}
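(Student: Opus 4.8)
The plan is to deduce the alternative formula directly from the original one \eqref{formol domain}--\eqref{formol action} by the ``resolvent'' substitution relating the potential $\pot=\pot_0$ to its regularised version $\pot_\la$, and then to regroup the resulting quadratic expression term by term. The admissible charges coincide in the two descriptions, namely $\eta\in H^{-1/2}(\R^3)$ by Lemmas \ref{lem: pot norm 1}--\ref{lem: pot norm 2}, and $\xi_<\in H^{1/2}(\R^3)$, $\Xi_>\in H^{-1/2}(\R^3)$ are the pieces of the splitting \eqref{fine1}--\eqref{fine2}, with the convention $\Xi_<=0$, $\xi_<=\xi$ when $1/2<s(m)<1$. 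Writing $A:=k_1^2+k_2^2+\frac{2}{m+1}\kv_1\cdot\kv_2$, the equivalence of the two decompositions of $u$ rests on the Fourier identity $\widehat{\pot\eta}-\widehat{\pot_\la\eta}=\frac{\la(\hat\eta(\kv_1)-\hat\eta(\kv_2))}{A(A+\la)}=\la\,\widehat{\hamfree^{-1}\pot_\la\eta}$, i.e.\ $\pot\eta=\pot_\la\eta+\la\,\hamfree^{-1}\pot_\la\eta$, so that $w_\la:=u-\pot_\la\eta$ and $w:=u-\pot\eta$ satisfy $w_\la=w+\la\,\hamfree^{-1}\pot_\la\eta$. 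Since $\pot_\la\eta\in L^2(\R^6)$ for $\eta\in H^{-1/2}$ (Lemma \ref{lem: pot norm 2}) one has $w_\la\in L^2$, and
\[
	\|\la\,\hamfree^{-1}\pot_\la\eta\|^2_{\dot{\hf}(\R^6)}=\la^2\big(\hamfree^{-1}\pot_\la\eta,\pot_\la\eta\big)_{L^2}=\la^2\!\int_{\R^6}\!\frac{|\hat\eta(\kv_1)-\hat\eta(\kv_2)|^2}{A\,(A+\la)^2}\,\diff\kv_1\diff\kv_2\le\la\,(\pot\eta,\pot_\la\eta)_{L^2(\R^6)}<\infty,
\]
the finiteness following from $\eta\in H^{-1/2}$ by an estimate of the same type as in Lemma \ref{lem: pot norm 1} (split the numerator, integrate out one variable); hence $w_\la\in\hf(\R^6)$, and the converse inclusion is even easier since $\hf\subset\dot{\hf}$.

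To rewrite the action I would insert $w=w_\la-\la\,\hamfree^{-1}\pot_\la\eta$ into $\uform[w]=(w,\hamfree w)_{L^2}$ and use $\hamfree(\hamfree^{-1}\pot_\la\eta)=\pot_\la\eta$ together with $F_\la[\cdot]=\uform[\cdot]+\la\|\cdot\|^2_{L^2}$, to obtain
\[
	\uform[w]=F_\la[w_\la]-\la\|w_\la\|^2_{L^2}-2\la\,\Re(w_\la,\pot_\la\eta)_{L^2}+\la^2\big(\hamfree^{-1}\pot_\la\eta,\pot_\la\eta\big)_{L^2}.
\]
Substituting $\|w_\la\|^2_{L^2}=\|u\|^2_{L^2}-2\Re(u,\pot_\la\eta)_{L^2}+\|\pot_\la\eta\|^2_{L^2}$, $(w_\la,\pot_\la\eta)_{L^2}=(u,\pot_\la\eta)_{L^2}-\|\pot_\la\eta\|^2_{L^2}$ and $\hamfree^{-1}\pot_\la\eta=\la^{-1}(\pot\eta-\pot_\la\eta)$, all the terms containing $(u,\pot_\la\eta)_{L^2}$ cancel and one is left with $\uform[w]=F_\la[w_\la]-\la\|u\|^2_{L^2(\R^6)}+\la(\pot\eta,\pot_\la\eta)_{L^2(\R^6)}$, hence $\formol[u]=F_\la[w_\la]-\la\|u\|^2_{L^2}+\la(\pot\eta,\pot_\la\eta)_{L^2}+2\qformo[\xi]+\beta|q|^2$.

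It then remains to expand the two charge contributions according to $\eta=\xi_<+\Xi_>$ and $\xi=\xi_<-\Xi_<$. Since the multiplication of the Fourier symbols commutes one has $(\pot f,\pot_\la g)_{L^2}=\overline{(\pot g,\pot_\la f)}_{L^2}$, so that
\[
	\la(\pot\eta,\pot_\la\eta)_{L^2}=\la(\pot\xi_<,\pot_\la\xi_<)_{L^2}+2\la\,\Re(\pot\Xi_>,\pot_\la\xi_<)_{L^2}+\la(\pot\Xi_>,\pot_\la\Xi_>)_{L^2},
\]
while $2\qformo[\xi]=2\qformo[\xi_<]-4\Re\qformo[\xi_<,\Xi_<]+2\qformo[\Xi_<]$. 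Comparing with \eqref{alternative form action}, all terms match provided one establishes the single scalar identity $\la(\pot\xi_<,\pot_\la\xi_<)_{L^2(\R^6)}+2\qformo[\xi_<]=2\qformla[\xi_<]$. Expanding $|\hat\xi_<(\kv_1)-\hat\xi_<(\kv_2)|^2$ and using the symmetry $\kv_1\leftrightarrow\kv_2$, its off-diagonal part equals twice the difference of the off-diagonal parts of $\qformla$ and $\qformo$, while its diagonal part reduces to evaluating $\int_{\R^3}\frac{\la\,\diff\kv_2}{A(A+\la)}$; completing the square in $\kv_2$ (so that $A=|\kv_2+\frac{\kv_1}{m+1}|^2+\frac{m(m+2)}{(m+1)^2}k_1^2$) and using $\int_{\R^3}(p^2+b)^{-1}\diff\pv-\int_{\R^3}(p^2+b')^{-1}\diff\pv=2\pi^2(\sqrt{b'}-\sqrt{b})$ reproduces the diagonal coefficient of $\qformla$ minus that of $\qformo$ -- this is precisely the computation producing the self-energy coefficient $2\pi^2\frac{\sqrt{m(m+2)}}{m+1}$ of $\qformo$, cf.\ \cite{CDFMT,FT}. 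The resulting formula is manifestly independent of $\la>0$ and of the cutoff $\eps$ in \eqref{fine1}--\eqref{fine2}, as it must be since it equals $\formol[u]$.

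Finiteness of each term is routine: $\pot_\la\eta$ and $\pot_\la\xi_<$ lie in $L^2$ by Lemma \ref{lem: pot norm 2}; $\pot\Xi_>\in L^2(\R^6)$ because $\widehat{\Xi_>}$ is supported away from the origin and decays like $k^{-2+s(m)}$ with $s(m)<1$; $\la(\pot\xi_<,\pot_\la\xi_<)_{L^2}$ is finite because $\int_{\R^3}\frac{\la\,\diff\kv_2}{A(A+\la)}$ is uniformly bounded near $\kv_1=0$ and $O(k_1^{-1})$ as $k_1\to\infty$, which combined with $\xi_<\in H^{1/2}(\R^3)$ gives convergence; $\qformo[\xi_<]$ and $\qformo[\xi_<,\Xi_<]$ are finite since $\xi_<\in H^{1/2}(\R^3)\subset\dot H^{1/2}(\R^3)$ (using \cite[Proposition 3.1]{CDFMT} for the off-diagonal part); and $\qformo[\Xi_<]$ is finite because $\widehat{\Xi_<}$ is supported in $\{k\le\eps\}$, where the radial integral $\int_0^\eps k^{2s(m)-1}\diff k$ converges thanks to $s(m)>0$. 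The main obstacle is thus not a single hard estimate but the careful bookkeeping of the bilinear expansions -- tracking the real parts and the symmetry of $\pot,\pot_\la$ -- together with the explicit $\kv_2$-integral above; apart from the extra term $\beta|q|^2$, carried through unchanged, this is the same resolvent-identity manipulation already used in \cite{CDFMT,FT} for the case $\beta=+\infty$.
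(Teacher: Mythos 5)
Your proposal is correct and follows essentially the same route as the paper's proof: the resolvent identity $\pot \eta - \pot_{\la}\eta = \la\, \hamfree^{-1}\pot_{\la}\eta$ to pass between the two decompositions of $u$, then the term-by-term regrouping that hinges on the identity $\la \big( \pot \xi_<, \pot_{\la}\xi_< \big)_{L^2(\R^6)} + 2\qformo[\xi_<] = 2\qformla[\xi_<]$, which the paper leaves as ``an explicit computation'' and you actually carry out, together with the finiteness checks the paper only asserts. One small remark: your completion of the square gives the diagonal coefficient $2\pi^2\sqrt{\tfrac{m(m+2)}{(m+1)^2}\,p^2+\la}$, whereas \eqref{qformla} prints $2\pi^2\tfrac{\sqrt{m(m+2)}}{m+1}\sqrt{p^2+\la}$; your computation is the consistent one (it reduces to the diagonal of $\qformo$ at $\la=0$ and makes the key identity exact), so the discrepancy is a typo in the stated formula rather than a gap in your argument.
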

	
	\begin{proof}
		Uniqueness of the decomposition $ u = w_{\la} + \pot_{\la} \eta $ follows from the fact that $ w_{\la} \in \hf(\R^6) $ while $  \pot_{\la} \eta \notin \hf(\R^6) $, for any $ \eta $. Moreover the domain is independent of $ \la $ as a consequence of the regularity of $ (\pot_{\la} - \pot_{\mu}) \eta $ for any $ \la, \mu > 0 $. Indeed, the resolvent identity yields
		\bdm
			(\pot_{\la} - \pot_{\mu}) \eta = \lf( \mu - \la \ri) \lf( \hamfree + \mu \ri)^{-1} \pot_{\la} \eta,
		\edm
		which is clearly in the domain of $ \hamfree $, i.e., $ \hdf(\R^6) $, any time $ \pot_{\la} \eta \in \ldf(\R^6) $. It remains then to show that the equivalence holds true also for $ \mu = 0 $. In this case it suffices to set
		\beq
			w : = w_{\la} + \lf( \pot_{\la} - \pot \ri) \eta,
		\eeq
		which belongs to $ \dot{\hf}(\R^6) $, if $ w_{\la} \in \hf(\R^6) $ and $ \pot \eta \in L^2_{\mathrm{loc}}(\R^6) $, and this follows from the condition $ \xi \in \dot{H}^{1/2}(\R^3) $ and a direct inspection of the other term $ \pot \Xi^- $.
	
	\n
		The expression of the quadratic form is a direct consequence of the above definition and therefore we give only a sketch of the computation. Writing $w=w_{\lambda} + (\pot_{\la}-\pot) \eta$ and $ \eta = \xi_< + \Xi_>$, we find
		\bml{
		\formol[u]= \left(w_{\la},(\hamfree +\la)w_{\la} \right)_{L^2(\R^6)} -\la \|u\|_{\ldf(\R^6)}^2 + \la \Big( \pot (\xi_< + \Xi_>), \pot_{\la} (\xi_< + \Xi_>) \Big)_{L^2(\R^6)} \\
		+ 2 \Phi_0[\xi_< - \Xi_>] + \beta |q|^2\\
		= F_{\la}[w_{\la}]  -\la \|u\|_{\ldf(\R^6)}^2 + 2 \qformo[\Xi_<] - 4 \Re \qformo[\xi_<, \Xi_< ] + 2 \la \Re \Big( \pot \Xi_>, \pot_{\la} \xi_< \Big)_{L^2(\R^6)}  	\\
 			+ \la \Big( \pot \Xi_>, \pot_{\la} \Xi_> \Big)_{L^2(\R^6)} + \beta |q|^2 + \la \Big( \pot \xi_< , \pot_{\la} \xi_<  \Big)_{L^2(\R^6)}+ 2 \Phi_0[\xi_< ]
		}
		By an explicit computation  we obtain
		
		\be
		\la \Big( \pot \xi_< , \pot_{\la} \xi_<  \Big)_{L^2(\R^6)}+ 2 \Phi_0[\xi_< ] = 2 \Phi_{\la}[\xi_<]
		\ee
		and then \eqref{alternative form action} is proved. Notice that all the terms of the quadratic form in \eqref{alternative form action} are separately finite thanks to the properties of $ \xi_< $, $ \Xi_< $ and $ \Xi_> $.
	\end{proof}

\end{document}